\documentclass[11pt]{article}
\usepackage{amsmath,amssymb,amsfonts,amsthm}
\usepackage{subcaption}
\usepackage{graphicx}
\usepackage{fullpage}
\usepackage[backref=page]{hyperref}
\usepackage{color}
\usepackage{wrapfig}
\usepackage{tikz}
\usetikzlibrary{decorations.pathreplacing}
\usepackage{setspace}
\usepackage{algorithm}
\usepackage[noend]{algpseudocode}
\usepackage[framemethod=tikz]{mdframed}
\usepackage{xspace}
\usepackage{pgfplots}
\usepackage{framed}
\usepackage{thmtools}
\usepackage{thm-restate}
\usepackage{booktabs}
\pgfplotsset{compat=1.5}

\newtheorem{theorem}{Theorem}[section]

\newtheorem{lemma}[theorem]{Lemma}

\newtheorem{definition}[theorem]{Definition}

\newenvironment{proofof}[1]{\begin{trivlist} \item {\bf Proof
#1:~~}}
  {\qed\end{trivlist}}

\newcommand{\namedref}[2]{\hyperref[#2]{#1~\ref*{#2}}}
\newcommand{\thmlab}[1]{\label{thm:#1}}
\newcommand{\thmref}[1]{\namedref{Theorem}{thm:#1}}
\newcommand{\lemlab}[1]{\label{lem:#1}}
\newcommand{\lemref}[1]{\namedref{Lemma}{lem:#1}}

\newcommand{\seclab}[1]{\label{sec:#1}}
\newcommand{\secref}[1]{\namedref{Section}{sec:#1}}

\newcommand{\figlab}[1]{\label{fig:#1}}
\newcommand{\figref}[1]{\namedref{Figure}{fig:#1}}
\newcommand{\alglab}[1]{\label{alg:#1}}
\renewcommand{\algref}[1]{\namedref{Algorithm}{alg:#1}}
\newcommand{\tablelab}[1]{\label{tab:#1}}
\newcommand{\tableref}[1]{\namedref{Table}{tab:#1}}

\newcommand{\eqnlab}[1]{\label{eq:#1}}
\newcommand{\eqnref}[1]{\namedref{Equation}{eq:#1}}

\def \Cost    {\mdef{\mathsf{Cost}}}



\newcommand{\PPr}[1]{\ensuremath{\mathbf{Pr}\left[#1\right]}}

\newcommand{\Ex}[1]{\ensuremath{\mathbb{E}\left[#1\right]}}

\renewcommand{\O}[1]{\ensuremath{\mathcal{O}\left(#1\right)}}
\newcommand{\tO}[1]{\ensuremath{\tilde{\mathcal{O}}\left(#1\right)}}
\newcommand{\eps}{\varepsilon}

\def \ba    {\mdef{\mathbf{a}}}

\def \bA    {\mdef{\mathbf{A}}}
\def \bB    {\mdef{\mathbf{B}}}
\def \bD    {\mdef{\mathbf{D}}}
\def \bG    {\mdef{\mathbf{G}}}

\def \bM    {\mdef{\mathbf{M}}}

\def \bS    {\mdef{\mathbf{S}}}
\def \bT    {\mdef{\mathbf{T}}}

\def \bV    {\mdef{\mathbf{V}}}
\def \bU    {\mdef{\mathbf{U}}}
\def \bX    {\mdef{\mathbf{X}}}

\def \bx    {\mdef{\mathbf{x}}}

\def \bz    {\mdef{\mathbf{z}}}

\newcommand{\mdef}[1]{{\ensuremath{#1}}\xspace}  
\DeclareMathOperator*{\argmin}{argmin}

\DeclareMathOperator*{\poly}{poly}
\DeclareMathOperator*{\Var}{Var}


\def \etal{{\it et~al.}}

\newcommand{\flr}[1]{\mdef{\left\lfloor#1\right\rfloor}}              

\newcommand{\ignore}[1]{}

\newif\ifnotes\notestrue 
\ifnotes
\newcommand{\samson}[1]{\textcolor{purple}{{\bf (Samson:} {#1}{\bf ) }} \marginpar{\tiny\bf
             \begin{minipage}[t]{0.5in}
               \raggedright S:
            \end{minipage}}}            							
\else
\newcommand{\samson}[1]{}
\fi

\makeatletter
\renewcommand*{\@fnsymbol}[1]{\textcolor{mahogany}{\ensuremath{\ifcase#1\or *\or \dagger\or \ddagger\or
 \mathsection\or \triangledown\or \mathparagraph\or \|\or **\or \dagger\dagger
   \or \ddagger\ddagger \else\@ctrerr\fi}}}
\makeatother

\providecommand{\email}[1]{\href{mailto:#1}{\nolinkurl{#1}\xspace}}

\definecolor{bleudefrance}{rgb}{0.19, 0.55, 0.91}
\definecolor{mahogany}{rgb}{0.75, 0.25, 0.0}

\hypersetup{
     colorlinks   = true,
     citecolor    = mahogany,
     urlcolor     = mahogany,
	 linkcolor	  = bleudefrance
}

\usepackage{wrapfig}
\usepackage{mathtools}
\newcommand{\erclogowrapped}[1]{%
\setlength\intextsep{0pt}%
\begin{wrapfigure}[3]{r}{#1*\real{1.1}}%
\includegraphics[width=#1]{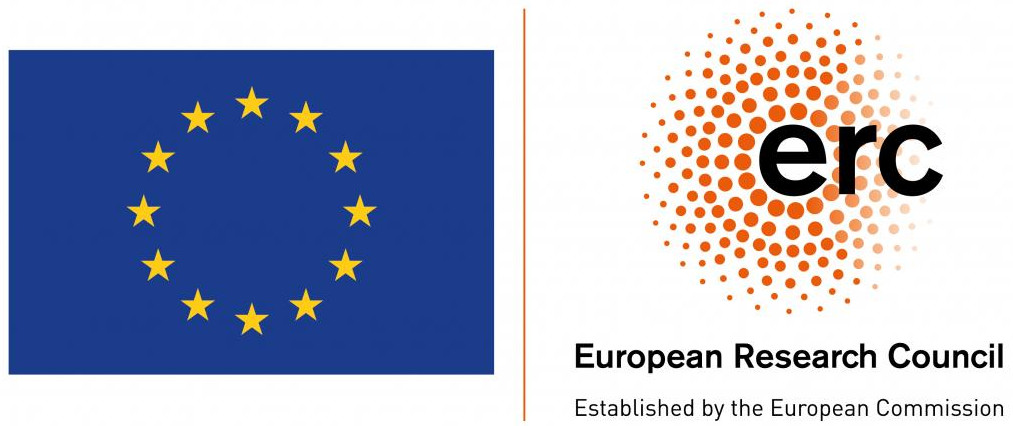}%
\end{wrapfigure}%
}

\makeatletter
\def\blfootnote{\xdef\@thefnmark{}\@footnotetext}
\makeatother

\begin{document}

\allowdisplaybreaks

\title{Fast $(1+\eps)$-Approximation Algorithms for Binary Matrix Factorization}
\author{Ameya Velingker\thanks{Google Research. E-mail: \email{ameyav@google.com}.}
\and
Maximilian V\"{o}tsch\thanks{Faculty of Computer Science, Univie Doctoral School Computer Science DoCS, University of Vienna. Email: \email{maximilian.voetsch@univie.ac.at}.}
\and
David P. Woodruff\thanks{Carnegie Mellon University. E-mail: \email{dwoodruf@andrew.cmu.edu}. Work done while at Google Research.}
\and
Samson Zhou\thanks{UC Berkeley and Rice University. E-mail: \email{samsonzhou@gmail.com}.}
}
\date{\today}

\maketitle{}

\begin{abstract}
We introduce efficient $(1+\eps)$-approximation algorithms for the binary matrix factorization (BMF) problem, where the inputs are a matrix $\mathbf{A}\in\{0,1\}^{n\times d}$, a rank parameter $k>0$, as well as an accuracy parameter $\varepsilon>0$, and the goal is to approximate $\mathbf{A}$ as a product of low-rank factors $\mathbf{U}\in\{0,1\}^{n\times k}$ and $\mathbf{V}\in\{0,1\}^{k\times d}$. Equivalently, we want to find $\bU$ and $\bV$ that minimize the Frobenius loss $\|\mathbf{U}\mathbf{V} - \mathbf{A}\|_F^2$. Before this work, the state-of-the-art for this problem was the approximation algorithm of Kumar \etal~[ICML 2019], which achieves a $C$-approximation for some constant $C\ge 576$. We give the first $(1+\eps)$-approximation algorithm using running time singly exponential in $k$, where $k$ is typically a small integer. Our techniques generalize to other common variants of the BMF problem, admitting bicriteria $(1+\eps)$-approximation algorithms for $L_p$ loss functions and the setting where matrix operations are performed in $\mathbb{F}_2$. Our approach can be implemented in standard big data models, such as the streaming or distributed models. 
\blfootnote{\erclogowrapped{5\baselineskip} \emph{M. Vötsch:} This project has received funding from the European Research Council (ERC) under the European Union's Horizon 2020 research and innovation programme (Grant agreement No. 101019564 ``The Design of Modern Fully Dynamic Data Structures (MoDynStruct)''.}
\end{abstract}

\section{Introduction}
Low-rank approximation is a fundamental tool for factor analysis. The goal is to decompose several observed variables stored in the matrix $\bA \in \mathbb{R}^{n \times d}$ into a combination of $k$ unobserved and uncorrelated variables called factors, represented by the matrices $\bU \in \mathbb{R}^{n \times k}$ and $\bV \in \mathbb{R}^{k \times d}$. In particular, we want to solve the problem 
\[\min_{\bU\in\mathbb{R}^{n\times k}, \bV\in\mathbb{R}^{k\times d}} \|\bU \bV\ - \bA\|,\]
for some predetermined norm $\|\cdot\|$. 
Identifying the factors can often decrease the number of relevant features in an observation and thus significantly improve interpretability. Another benefit is that low-rank matrices allow us to approximate the matrix $\bA$ with its factors $\bU$ and $\bV$ using only $(n+d)k$ parameters rather than the $nd$ parameters needed to represent $\bA$. Moreover, for a vector $\bx\in\mathbb{R}^d$, we can approximate the matrix-vector multiplication $\bA \bx \approx \bU\bV \bx$ in time $(n+d)k$, while computing $\bA\bx$ requires $nd$ time. 
These benefits make low-rank approximation one of the most widely used tools in machine learning, recommender systems, data science, statistics, computer vision, and natural language processing.
In many of these applications, discrete or categorical datasets are typical. In this case, restricting the underlying factors to a discrete domain for interpretability often makes sense.
For example, \cite{KumarPRW19} observed that nearly half of the data sets in the UCI repository~\cite{UCI17} are categorical and thus can be represented as binary matrices, possibly using multiple binary variables to represent each category. 

In the binary matrix factorization (BMF) problem, the input matrix $\bA\in\{0,1\}^{n\times d}$ is binary. Additionally, we are given an integer range parameter $k$, with $0 < k \ll n, d$.
The goal is to approximate $\bA$ by the factors $\bU\in\{0,1\}^{n\times k}$ and $\bV\in\{0,1\}^{k\times d}$ such that $\bA\approx\bU\bV$.
The BMF problem restricts the general low-rank approximation problem to a discrete space, making finding good factors more challenging (see \secref{sec:related}).

\subsection{Our Contributions}
We present $(1+\eps)$-approximation algorithms for the binary low-rank matrix factorization problem for several standard loss functions used in the general low-rank approximation problem.
\tableref{table:summary} summarizes our results. 

\begin{table}[!htb]
\centering
\begin{tabular}{|c|c|c|c|}\hline
Reference & Approximation & Runtime & Other\\\hline
\cite{KumarPRW19} & $C\ge 576$ & $2^{\tO{k^2}}\poly(n,d)$ & Frobenius loss \\\hline
\cite{FominGLP020} & $1+\eps$ & $2^{\frac{2^{\O{k}}}{\eps^2}\log^2\frac{1}{\eps}}\poly(n,d)$ & Frobenius loss \\\hline
Our work & $1+\eps$ & $2^{\tO{k^2/\eps^4}}\poly(n,d)$ & Frobenius loss \\\hline\hline
\cite{KumarPRW19} & $C\ge 122^{2p-2}+2^{p-1}$ & $2^{\poly(k)}\poly(n,d)$ & $L_p$ loss, $p\ge 1$ \\\hline
Here & $1+\eps$ & $2^{\poly(k/\eps)}\poly(n,d)$ & $L_p$ loss, $p\ge 1$, bicriteria \\\hline\hline
\cite{FominGLP020} & $1+\eps$ & $2^{\frac{2^{\O{k}}}{\eps^2}\log^2\frac{1}{\eps}}\poly(n,d)$ & Binary field \\\hline
\cite{BanBBKLW19} & $1+\eps$ & $2^{\frac{2^{\O{k}}}{\eps^2}\log\frac{1}{\eps}}\poly(n,d)$ & Binary field \\\hline
\cite{KumarPRW19} & $C\ge40001$ & $2^{\poly(k)}\poly(n,d)$ & Binary field, bicriteria \\\hline
Our work & $1+\eps$ & $2^{\poly(k/\eps)}\poly(n,d)$ & Binary field, bicriteria\\\hline
\end{tabular}
\caption{Summary of related work on binary matrix factorization}
\tablelab{table:summary}
\end{table}

\paragraph{Binary matrix factorization.}
We first consider the minimization of the Frobenius norm, defined by $\|\bA-\bU\bV\|_F^2=\sum_{i\in[n]}\sum_{j\in d}|\bA_{i,j}-(\bU\bV)_{i,j}|^2$, where $[n]:=\{1,\ldots,n\}$ and $\bA_{i,j}$ denotes the entry in the $i$-th row and the $j$-th column of $\bA$. Intuitively, we can view this as finding a least-squares approximation of $\bA$.

We introduce the first $(1+\eps)$-approximation algorithm for the BMF problem that runs in singly exponential time. 
That is, we present an algorithm that, for any $\eps>0$, returns $\bU'\in\{0,1\}^{n\times k},\bV'\in\{0,1\}^{k\times d}$ with  
\[\|\bA-\bU'\bV'\|_F^2\le(1+\eps)\min_{\bU\in\{0,1\}^{n\times k},\bV\in\{0,1\}^{k\times d}}\|\bA-\bU\bV\|_F^2.\]
For $\eps\in(0,1)$, our algorithm uses $2^{\tO{k^2/\eps^4}}\poly(n,d)$ runtime and for $\eps\ge 1$, our algorithm uses $2^{\tO{k^2}}\poly(n,d)$ runtime, where $\poly(n,d)$ denotes a polynomial in $n$ and $d$.  

By comparison, \cite{KumarPRW19} gave a $C$-approximation algorithm for the BMF problem also using runtime $2^{\tO{k^2}}\poly(n,d)$, but for some constant $C\ge 576$. 
Though they did not attempt to optimize for $C$, their proofs employ multiple triangle inequalities that present a constant lower bound of at least $2$ on $C$.
See \secref{sec:overview} for a more thorough discussion of the limitations of their approach. 
\cite{FominGLP020} introduced a $(1+\eps)$-approximation algorithm for the BMF problem with rank-$k$ factors. However, their algorithm uses time doubly exponential in $k$, specifically $2^{\frac{2^{\O{k}}}{\eps^2}\log^2\frac{1}{\eps}}\poly(n,d)$, which \cite{BanBBKLW19} later improved to doubly exponential runtime $2^{\frac{2^{\O{k}}}{\eps^2}\log\frac{1}{\eps}}\poly(n,d)$, while also showing that time $2^{k^{\Omega(1)}}$ is necessary even for constant-factor approximation, under the Small Set Expansion Hypothesis and the Exponential Time Hypothesis.

\paragraph{BMF with $L_p$ loss.}
We also consider the more general problem of minimizing for $L_p$ loss for a given $p$, defined as the optimization problem of minimizing $\|\bA-\bU\bV\|_p^p=\sum_{i\in[n]}\sum_{j\in d}|\bA_{i,j}-(\bU\bV)_{i,j}|^p$. 
Of particular interest is the case $p=1$, which corresponds to robust principal component analysis, and which has been proposed as an alternative to Frobenius norm low-rank approximation that is more robust to outliers, i.e., values that are far away from the majority of the data points~\cite{ke2003robust,KeK05,kwak2008principal,ZhengLSYO12,brooks2013pure,MarkopoulosKP14,SongWZ17,park2018three,BanBBKLW19,MahankaliW21}. 
On the other hand, for $p>2$, low-rank approximation with $L_p$ error increasingly places higher priority on outliers, i.e., the larger entries of $\bU\bV$. 

We present the first $(1+\eps)$-approximation algorithm for the BMF problem that runs in singly exponential time, albeit at the cost of incurring logarithmic increases in the rank $k$, making it a bicriteria algorithm. 
Specifically, for any $\eps>0$, our algorithm returns $\bU'\in\{0,1\}^{n\times k'},\bV'\in\{0,1\}^{k'\times d}$ with  
\[\|\bA-\bU'\bV'\|_p^p\le(1+\eps)\min_{\bU\in\{0,1\}^{n\times k},\bV\in\{0,1\}^{k\times d}}\|\bA-\bU\bV\|_p^p,\]
where $k'=\O{\frac{k\log^2 n}{\eps^2}}$. 
For $\eps\in(0,1)$, our algorithm uses $2^{\poly(k/\eps)}\poly(n,d)$ runtime and for $\eps\ge 1$, our algorithm uses $2^{\poly(k)}\poly(n,d)$ runtime. 

Previous work~\cite{KumarPRW19} gave a $C$-approxmiation algorithm for this problem, using singly exponential runtime $2^{\poly(k)}\poly(n,d)$, without incurring a bicriteria loss in the rank $k$. However, their constant $C \ge 122^{2p-2} + 2^{p-1}$ is large and depends on $p$.
Again, their use of multiple triangle inequalities in their argument bars this approach from being able to achieve a $(1+\eps)$-approximation. 
To our knowledge, no prior works achieved $(1+\eps)$-approximation to BMF with $L_p$ loss in singly exponential time.

\paragraph{BMF on binary fields.}
Finally, we consider the case where all arithmetic operations are performed modulo two, i.e., in the finite field $\mathbb{F}_2$. 
Specifically, the $(i,j)$-th entry of $\bU\bV$ is the inner product $\langle\bU_i,\bV^{(j)}\rangle$ of the $i$-th row of $\bU$ and the $j$-th column of $\bV$, taken over $\mathbb{F}_2$. 
This model has been frequently used for dimensionality reduction for high-dimensional data with binary attributes~\cite{KoyuturkG03,ShenJY09,jiang2014clustering,DanHJ0Z18} and independent component analysis, especially in the context of signal processing~\cite{Yeredor11,GutchGYT12,painsky2015generalized,painsky2018linear}. 
This problem is also known as bipartite clique cover, the discrete basis problem, or minimal noise role mining and has been well-studied in applications to association rule mining, database tiling, and topic modeling~\cite{SeppanenBM03,SingliarH06,vaidya2007role,miettinen2008discrete,BelohlavekV10,LuVAH12,ChandranIK16,Chen0T022}.  

We introduce the first bicriteria $(1+\eps)$-approximation algorithm for the BMF problem on binary fields that runs in singly exponential time. 
Specifically, for any $\eps>0$, our algorithm returns $\bU'\in\{0,1\}^{n\times k'},\bV'\in\{0,1\}^{k'\times d}$ with  
\[\|\bA-\bU'\bV'\|_p^p\le(1+\eps)\min_{\bU\in\{0,1\}^{n\times k},\bV\in\{0,1\}^{k\times d}}\|\bA-\bU\bV\|_p^p,\]
where $k'=\O{\frac{k\log n}{\eps}}$ and all arithmetic operations are performed in $\mathbb{F}_2$. 
For $\eps\in(0,1)$, our algorithm has running time $2^{\poly(k/\eps)}\poly(n,d)$ and for $\eps\ge 1$, our algorithm has running time $2^{\poly(k)}\poly(n,d)$. 

By comparison, \cite{KumarPRW19} gave a bicriteria $C$-approximation algorithm for the BMF problem on binary fields with running time $2^{\poly(k)}\poly(n,d)$, for some constant $C\ge 40001$. 
Even though their algorithm also gives a bicriteria guarantee, their approach, once again, inherently cannot achieve $(1+\eps)$-approximation. 
On the other hand, \cite{FominGLP020} achieved a $(1+\eps)$-approximation without a bicriteria guarantee, but their algorithm uses doubly exponential running time $2^{\frac{2^{\O{k}}}{\eps^2}\log^2\frac{1}{\eps}}\poly(n,d)$, which \cite{BanBBKLW19} later improved to doubly exponential running time $2^{\frac{2^{\O{k}}}{\eps^2}\log\frac{1}{\eps}}\poly(n,d)$, while also showing that running time doubly exponential in $k$ is necessary for $(1+\eps)$-approximation on $\mathbb{F}_2$.

\paragraph{Applications to big data models.} 
We remark that our algorithms are conducive to big data models. 
Specifically, our algorithmic ideas facilitate a two-pass algorithm in the streaming model, where either the rows or the columns of the input matrix arrive sequentially, and the goal is to perform binary low-rank approximation while using space sublinear in the size of the input matrix.
Similarly, our approach can be used to achieve a two-round protocol in the distributed model, where either the rows or the columns of the input matrix are partitioned among several players, and the goal is to perform binary low-rank approximation while using total communication sublinear in the size of the input matrix. 
See \secref{sec:big:data} for a formal description of the problem settings and additional details. 

\subsection{Overview of Our Techniques}
\seclab{sec:overview}
This section briefly overviews our approaches to achieving $(1+\eps)$-approximation to the BMF problem. 
Alongside our techniques, we discuss why prior approaches for BMF fail to achieve $(1+\eps)$-approximation.

The BMF problem under the Frobenius norm is stated as follows:
Let $\bU^*\in\{0,1\}^{n\times k}$ and $\bV^*\in\{0,1\}^{k\times d}$ be optimal low-rank factors, so that
\begin{equation}\eqnlab{bmf}
\|\bU^*\bV^*-\bA\|_F^2=\min_{\bU\in\{0,1\}^{n\times k},\bV\in\{0,1\}^{k\times d}}\|\bU\bV-\bA\|_F^2.
\end{equation}
Our approach relies on the sketch-and-solve paradigm, and we ask of our sketch matrix $\bS$ that it is an \emph{affine embedding}, that is, given $\bU^*$ and $\bA$, for all $\bV\in\{0,1\}^{k\times d}$,
\[(1-\eps)\|\bU^*\bV-\bA\|_F^2\le\|\bS\bU^*\bV-\bS\bA\|_F^2\le(1+\eps)\|\bU^*\bV-\bA\|_F^2.\]
Observe that if $\bS$ is an affine embedding, then we obtain a $(1+\eps)$-approximation by solving for the minimizer $\bV^*$ in the sketched space. 
That is, given $\bS$ and $\bU^*$, instead of solving \eqnref{bmf} for $\bV^*$, it suffices to solve 
\[\argmin_{\bV\in\{0,1\}^{k\times d}}\|\bS\bU^*\bV-\bS\bA\|_F^2.\]


\paragraph{Guessing the sketch matrix $\bS$.}
A general approach taken by \cite{RazenshteynSW16,KumarPRW19,BanWZ19} for various low-rank approximation problems is first to choose $\bS$ in a way so that there are not too many possibilities for the matrices $\bS\bU^*$ and $\bS\bA$ and then find the minimizer $\bV^*$ for all guesses of $\bS\bU^*$ and $\bS\bA$. 
Note that this approach is delicate because it depends on the choice of the sketch matrix $\bS$. 
For example, if we chose $\bS$ to be a dense matrix with random Gaussian entries, then since there are $2^{nk}$ possibilities for the matrix $\bU^*\in\{0,1\}^{n\times k}$, we cannot enumerate the possible matrices $\bS\bU^*$. 
Prior work~\cite{RazenshteynSW16,KumarPRW19,BanWZ19} made the key observation that if $\bA$ (and thus $\bU^*$) has a small number of unique rows, then a matrix $\bS$ that samples a small number of rows of $\bA$ has only a small number of possibilities for $\bS\bA$. 

To ensure that $\bA$ has a small number of unique rows for the BMF problem, \cite{KumarPRW19} first find a $2^k$-means clustering solution $\widetilde{\bA}$ for the rows of $\bA$. Instead of solving the problem on $\bA$, they then solve BMF on the matrix $\widetilde{\bA}$, where each row is replaced by the center the point is assigned to, yielding at most $2^k$ unique rows. Finally, they note that $\|\bU^*\bV^* - \bA\|_F^2$ is at least the $2^k$-means cost, as $\bU^*\bV^*$ has at most $2^k$ unique rows. Now that
$\widetilde{\bA}$ has $2^k$ unique rows, they can make all possible guesses for both $\bS\bU^*$ and $\bS\widetilde{\bA}$ in time $2^{\tO{k^2}}$. 
By using an algorithm of \cite{KanungoMNPSW04} that achieves roughly a $9$-approximation to $k$-means clustering, \cite{KumarPRW19} ultimately obtain a $C$-approximation to the BMF problem, for some $C\ge 576$. 

\paragraph{Shortcomings of previous work for $(1+\eps)$-approximation.}
While \cite{KumarPRW19} do not optimize for $C$, their approach fundamentally cannot achieve $(1+\eps)$-approximation for BMF for the following reasons. 
First, they use a $k$-means clustering subroutine~\cite{KanungoMNPSW04}, (achieving roughly a $9$-approximation) which due to hardness-of-approximation results~\cite{Cohen-AddadS19,LeeSW17} can never achieve $(1+\eps)$-approximation, as there cannot exist a $1.07$-approximation algorithm for $k$-means clustering unless P=NP. 
Moreover, even if a $(1+\eps)$-approximate $k$-means clustering could be found, 
there is no guarantee that the cluster centers obtained by this algorithm are binary. 
That is, while $\bU\bV$ has a specific form induced by the requirement that each factor must be binary, a solution to $k$-means clustering offers no such guarantee and may return Steiner points. 
Finally, \cite{KumarPRW19} achieves a matrix $\bS$ that roughly preserves $\bS\bU^*$ and $\bS\bA$.  
By generalizations of the triangle inequality, one can show that $\|\bS\bU^*\bV^*-\bS\bA\|_F^2$ preserves a constant factor approximation to $\|\bU^*\bV^*-\bA\|_F^2$, but not necessarily a $(1+\eps)$-approximation. 

Another related work, \cite{FominGLP020}, reduces instances of BMF to constrained $k$-means clustering instances, where the constraints demand that the selected centers are linear combinations of binary vectors. The core part of their work is to design a sampling-based algorithm for solving binary-constrained clustering instances, and the result on BMF is a corollary. Constrained clustering is a harder problem than BMF with Frobenius loss, so it is unclear how one might improve the doubly exponential running time using this approach.

\paragraph{Our approach: computing a strong coreset.}
We first reduce the number of unique rows in $\bA$ by computing a strong coreset $\widetilde{\bA}$ for $\bA$. 
The strong coreset has the property that for any choices of $\bU\in\{0,1\}^{n\times k}$ and $\bV\in\{0,1\}^{k\times d}$, there exists $\bX\in\{0,1\}^{n\times k}$ such that
\[(1-\eps)\|\bU\bV-\bA\|_F^2\le\|\bX\bV-\widetilde{\bA}\|_F^2\le(1+\eps)\|\bU\bV-\bA\|_F^2.\]
Therefore, we instead first solve the low-rank approximation problem on $\widetilde{\bA}$ first. 
Crucially, we choose $\widetilde{\bA}$ to have $2^{\poly(k/\eps)}$ unique rows so then for a matrix $\bS$ that samples $\poly(k/\eps)$ rows, there are $2^{\poly(k/\eps)}$ possibilities for $\bS\widetilde{\bA}$, so we can make all possible guesses for both $\bS\bU^*$ and $\bS\widetilde{\bA}$. 
Unfortunately, we still have the problem that $\|\bS\bU^*\bV^*-\bS\widetilde{\bA}\|_F^2$ does not even necessarily give a $(1+\eps)$-approximation to $\|\bU^*\bV^*-\widetilde{\bA}\|_F^2$. 

\paragraph{Binary matrix factorization.}
To that end, we show that when $\bS$ is a leverage score sampling matrix, then $\bS$ also satisfies an approximate matrix multiplication property. Therefore $\bS$ can effectively be used for an affine embedding. 
That is, the minimizer to $\|\bS\bU^*\bV^*-\bS\widetilde{\bA}\|_F^2$ produces an $(1+\eps)$-approximation to the cost of the optimal factors $\|\bU^*\bV^*-\widetilde{\bA}\|_F^2$. 
Thus, we can then solve 
\begin{align*}
\bV'&=\argmin_{\bV\in\{0,1\}^{k\times d}}\|\bS\bU^*\bV-\bS\widetilde{\bA}\|_F^2\\
\bU'&=\argmin_{\bU\in\{0,1\}^{n\times k}}\|\bU\bV'-\bA\|_F^2,
\end{align*}
where the latter optimization problem can be solved by iteratively optimizing over each row so that the total computation time is $\O{2^kn}$ rather than $2^{kn}$. 

\paragraph{BMF on binary fields.}
We again form the matrix $\widetilde{\bA}$ by taking a strong coreset of $\bA$, constructed using an algorithm that gives integer weight $w_i$ to each point, and then duplicating the rows to form $\widetilde{\bA}$. 
That is, if the $i$-th row $\bA_i$ of $\bA$ is sampled with weight $w_i$ in the coreset, then $\widetilde{\bA}$ will contain $w_i$ repetitions of the row $\bA_i$.
We want to use the same approach for binary fields to make guesses for $\bS\bU^*$ and $\bS\widetilde{\bA}$. However, it is no longer true that $\bS$ will provide an affine embedding over $\mathbb{F}_2$, in part because the subspace embedding property of $\bS$ computes leverage scores of each row of $\bU^*$ and $\bA$ with respect to general integers. 
Hence, we require a different approach for matrix operations over $\mathbb{F}_2$. 

Instead, we group the rows of $\widetilde{\bA}$ by their number of repetitions, so that group $\bG_j$ consists of the rows of $\widetilde{\bA}$ that are repeated $[(1+\eps)^j,(1+\eps)^{j+1})$ times. 
That is, if $\bA_i$ appears $w_i$ times in $\widetilde{\bA}$, then it appears a single time in group $\bG_j$ for $j=\flr{\log w_i}$. 
We then perform entrywise $L_0$ low-rank approximation over $\mathbb{F}_2$ for each of the groups $\bG_j$, which gives low-rank factors $\bU^{(j)}$ and $\bV^{(j)}$. 
We then compute $\widetilde{\bU^{(j)}}$ by duplicating rows appropriately so that if $\bA_i$ is in $\bG_j$, then we place the row of $\bU^{(j)}$ corresponding to $\bA_i$ into the $i$-th row of $\widetilde{\bU^{(j)}}$, for all $i\in[n]$. 
Otherwise if $\bA_i$ is not in $\bG_j$, then we set $i$-th row of $\widetilde{\bU^{(j)}}$ to be the all zeros row. 
We compute $\bV^{(j)}$ by padding accordingly and then collect 
\[\widetilde{\bU}=\begin{bmatrix}\widetilde{\bU^{(0)}}|\ldots|\widetilde{\bU^{(\ell)}}\end{bmatrix},\qquad\widetilde{\bV}\gets\widetilde{\bV^{(0)}}\circ\ldots\circ\widetilde{\bV^{(i)}},\]
where $\begin{bmatrix}\widetilde{\bU^{(0)}}|\ldots|\widetilde{\bU^{(\ell)}}\end{bmatrix}$ denotes horizontal concatenation of matrices and $\widetilde{\bV^{(0)}}\circ\ldots\circ\widetilde{\bV^{(i)}}$ denotes vertical concatenation (stacking) of matrices,
to achieve bicriteria low-rank approximations $\widetilde{\bU}$ and $\widetilde{\bV}$ to $\widetilde{\bA}$. 
Finally, to achieve bicriteria factors $\bU'$ and $\bV'$ to $\bA$, we ensure that $\bU'$ achieves the same block structure as $\widetilde{\bU}$. 

\paragraph{BMF with $L_p$ loss.}
We would again like to use the same approach as our $(1+\eps)$-approximation algorithm for BMF with Frobenius loss. 
To that end, we observe that a coreset construction for clustering under $L_p$ metrics rather than Euclidean distance is known, which we can use to construct $\widetilde{\bA}$. 
However, the challenge is that no known sampling matrix $\bS$ guarantees an affine embedding. 
One might hope that recent results on active $L_p$ regression~\cite{ChenP19,ParulekarPP21,MuscoMW022,MeyerMMWZ22,MeyerMMWZ23} can provide such a tool. 
Unfortunately, adapting these techniques would still require taking a union bound over a number of columns, which would result in the sampling matrix having too many rows for our desired runtime. 

Instead, we invoke the coreset construction on the rows and the columns so that $\widetilde{\bA}$ has a small number of distinct rows and columns. 
We again partition the rows of $\widetilde{\bA}$ into groups based on their frequency, but now we further partition the groups based on the frequency of the columns. 
Thus, it remains to solve BMF with $L_p$ loss on the partition, each part of which has a small number of rows and columns. 
Since the contribution of each row toward the overall loss is small (because there is a small number of columns), we show that there exists a matrix that samples $\poly(k/\eps)$ rows of each partition that finally achieves the desired affine embedding. 
Therefore, we can solve the problem on each partition, pad the factors accordingly, and build the bicriteria factors as in the binary field case.

\subsection{Motivation and Related Work}
\seclab{sec:related}
Low-rank approximation is one of the fundamental problems of machine learning and data science. 
Therefore, it has received extensive attention, e.g., see the surveys~\cite{KannanV09,Mahoney11,Woodruff14}. 
When the underlying loss function is the Frobenius norm, the low-rank approximation problem can be optimally solved via singular value decomposition (SVD). 
However, when we restrict both the observed input $\bA$ and the factors $\bU,\bV$ to binary matrices, the SVD no longer guarantees optimal factors. 
In fact, many restricted variants of low-rank approximation are NP-hard~\cite{RazenshteynSW16,SongWZ17,KumarPRW19,BanBBKLW19,BanWZ19,FominGLP020,MahankaliW21}.

\paragraph{Motivation and background for BMF.} 
The BMF problem has applications to graph partitioning~\cite{ChandranIK16}, low-density parity-check codes~\cite{RavanbakhshPG16}, and optimizing passive organic LED (OLED) displays~\cite{KumarPRW19}. 
Observe that we can use $\bA$ to encode the incidence matrix of the bipartite graph with $n$ vertices on the left side of the bipartition and $d$ vertices on the right side so that $\bA_{i,j}=1$ if and only if there exists an edge connecting the $i$-th vertex on the left side with the $j$-th vertex on the right side. 
Then $\bU\bV$ can be written as the sum of $k$ rank-$1$ matrices, each encoding a different bipartite clique of the graph, i.e., a subset of vertices on the left and a subset of vertices on the right such that there exists an edge between every vertex on the left and every vertex on the right. 
It then follows that the BMF problem solves the bipartite clique partition problem~\cite{orlin1977contentment,FleischnerMPS09,ChalermsookHHK14,Neumann18}, in which the goal is to find the smallest integer $k$ such that the graph can be represented as a union of $k$ bipartite cliques. 

\cite{KumarPRW19} also present the following motivation for the BMF problem to improve the performance of passive OLED displays, which rapidly and sequentially illuminate rows of lights to render an image in a manner so that the human eye integrates this sequence of lights into a complete image. 
However, \cite{KumarPRW19} observed that passive OLED displays could illuminate many rows simultaneously, provided the image being shown is a rank-$1$ matrix and that the apparent brightness of an image is inversely proportional to the rank of the decomposition. 
Thus \cite{KumarPRW19} notes that BMF can be used to not only find a low-rank decomposition that illuminates pixels in a way that seems brighter to the viewer but also achieves binary restrictions on the decomposition in order to use simple and inexpensive voltage drivers on the rows and columns, rather than a more expensive bank of video-rate digital to analog-to-digital converters.

\paragraph{BMF with Frobenius loss.}
\cite{KumarPRW19} first gave a constant factor approximation algorithm for the BMF problem using runtime $2^{\tO{k^2}}\poly(n,d)$, i.e., singly exponential time.  
\cite{FominGLP020} introduced a $(1+\eps)$-approximation to the BMF problem with rank-$k$ factors, but their algorithm uses doubly exponential time, specifically runtime $2^{\frac{2^{\O{k}}}{\eps^2}\log^2\frac{1}{\eps}}\poly(n,d)$, which was later improved to doubly exponential runtime $2^{\frac{2^{\O{k}}}{\eps^2}\log\frac{1}{\eps}}\poly(n,d)$ by \cite{BanBBKLW19}, who also showed that $2^{k^{\Omega(1)}}$ runtime is necessary even for constant-factor approximation, under the Small Set Expansion Hypothesis and the Exponential Time Hypothesis.
By introducing sparsity constraints on the rows of $\bU$ and $\bV$, \cite{Chen0T022} provide an alternate parametrization of the runtime, though, at the cost of runtime quasipolynomial in $n$ and $d$. 

\paragraph{BMF on binary fields.} 
Binary matrix factorization is particularly suited for datasets involving binary data. Thus, the problem is well-motivated for binary fields when performing dimensionality reduction on high-dimension datasets~\cite{KoyuturkG03}. 
To this end, many heuristics have been developed for this problem~\cite{KoyuturkG03,ShenJY09,fu2010binary,jiang2014clustering}, due to its NP-hardness~\cite{GillisV18,DanHJ0Z18}. 

For the special case of $k=1$, \cite{ShenJY09} first gave a $2$-approximation algorithm that uses polynomial time through a relaxation of integer linear programming. 
Subsequently, \cite{jiang2014clustering} produced a simpler approach, and \cite{BringmannKW17} introduced a sublinear time algorithm.
For general $k$, \cite{KumarPRW19} gave a constant factor approximation algorithm using runtime $2^{\poly(k)}\poly(n,d)$, i.e., singly exponential time, at the expense of a bicriteria solution, i.e., factors with rank $k'=\O{k\log n}$. 
\cite{FominGLP020} introduced a $(1+\eps)$-approximation to the BMF problem with rank-$k$ factors, but their algorithm uses doubly exponential time, specifically runtime $2^{\frac{2^{\O{k}}}{\eps^2}\log^2\frac{1}{\eps}}\poly(n,d)$, which was later improved to doubly exponential runtime $2^{\frac{2^{\O{k}}}{\eps^2}\log\frac{1}{\eps}}\poly(n,d)$ by \cite{BanBBKLW19}, who also showed that doubly exponential runtime is necessary for $(1+\eps)$-approximation without bicriteria relaxation under the Exponential Time Hypothesis.    

\paragraph{BMF with $L_p$ loss.} 
Using more general $L_p$ loss functions can result in drastically different behaviors of the optimal low-rank factors for the BMF problem. 
For example, the low-rank factors for $p>2$ are penalized more when the corresponding entries of $\bU\bV$ are large, and thus may choose to prioritize a larger number of small entries that do not match $\bA$ rather than a single large entry. 
On the other hand, $p=1$ corresponds to robust principal component analysis, which yields factors that are more robust to outliers in the data~\cite{ke2003robust,KeK05,kwak2008principal,ZhengLSYO12,brooks2013pure,MarkopoulosKP14,SongWZ17,park2018three,BanBBKLW19,MahankaliW21}. 
The first approximation algorithm with provable guarantees for $L_1$ low-rank approximation on the reals was given by \cite{SongWZ17}. 
They achieved $\poly(k)\cdot\log d$-approximation in roughly $\O{nd}$ time. 
For constant $k$, \cite{SongWZ17} further achieved constant-factor approximation in polynomial time. 

When we restrict the inputs and factors to be binary, \cite{KumarPRW19} observed that $p=1$ corresponds to minimizing the number of edges in the symmetric difference between an unweighted bipartite graph $G$ and its approximation $H$, which is the multiset union of $k$ bicliques. 
Here we represent the graph $G$ with $n$ and $d$ vertices on the bipartition's left- and right-hand side, respectively, through its edge incidence matrix $\bA$. 
Similarly, we have $\bU_{i,j}=1$ if and only if the $i$-th vertex on the left bipartition is in the $j$-th biclique and $\bV_{i,j}=1$ if and only if the $j$-th vertex on the right bipartition is in the $i$-th biclique. 
Then we have $\|\bU\bV-\bA\|_1=|E(G)\triangle E(H)|$. 
\cite{ChandranIK16} showed how to solve the exact version of the problem, i.e., to recover $\bU,\bV$ under the promise that $\bA=\bU\bV$, using $2^{\O{k^2}}\poly(n,d)$ time. 
\cite{KumarPRW19} recently gave the first constant-factor approximation algorithm for this problem, achieving a $C$-approximation using $2^{\poly(k)}\poly(n,d)$ time, for some constant $C\ge122^{2p-2}+2^{p-1}$. 

\subsection{Preliminaries}
For an integer $n>0$, we use $[n]$ to denote the set $\{1,2,\ldots,n\}$. 
We use $\poly(n)$ to represent a fixed polynomial in $n$ and more generally, $\poly(n_1,\ldots,n_k)$ to represent a fixed multivariate polynomial in $n_1,\ldots,n_k$. 
For a function $f(n_1,\ldots,n_k)$, we use $\tO{f(n_1,\ldots,n_k)}$ to denote $f(n_1,\ldots,n_k)\cdot\poly(\log f(n_1,\ldots,n_k))$. 

We generally use bold-font variables to denote matrices. 
For a matrix $\bA\in\mathbb{R}^{n\times d}$, we use $\bA_i$ to denote the $i$-th row of $\bA$ and $\bA^{(j)}$ to denote the $j$-th column of $\bA$. 
We use $A_{i,j}$ to denote the entry in the $i$-th row and $j$-th column of $\bA$. 
For $p\ge 1$, we write the entrywise $L_p$ norm of $\bA$ as
\[\|\bA\|_p=\left(\sum_{i\in[n]}\sum_{j\in[d]}A_{i,j}^p\right)^{1/p}.\]
The Frobenius norm of $\bA$, denoted $\|\bA\|_F$ is simply the entrywise $L_2$ norm of $\bA$:
\[\|\bA\|_F=\left(\sum_{i\in[n]}\sum_{j\in[d]}A_{i,j}^2\right)^{1/2}.\]
The entrywise $L_0$ norm of $\bA$ is 
\[\|\bA\|_0=\left\lvert\{(i,j)\,\mid\,i\in[n], j\in[d]: A_{i,j}\neq 0\}\right\rvert.\]
We use $\circ$ to denote vertical stacking of matrices, so that \[\bA^{(1)}\circ\ldots\circ\bA^{(m)}=\begin{bmatrix}\bA^{(1)}\\\vdots\\\bA^{(m)}\end{bmatrix}.\]

For a set $X$ of $n$ points in $\mathbb{R}^d$ weighted by a function $w$, the $k$-means clustering cost of $X$ with respect to a set $S$ of $k$ centers is defined as
\[\Cost(X,S,w):=\sum_{x\in X}w(x)\cdot\min_{s\in S}\|x-s\|_2^2.\]
When the weights $w$ are uniformly unit across all points in $X$, we simply write $\Cost(X,S)=\Cost(X,S,w)$. 

One of the core ingredients for avoiding the triangle inequality and achieving $(1+\eps)$-approximation is our use of coresets for $k$-means clustering:

\begin{definition}[Strong coreset]
Given an accuracy parameter $\eps>0$ and a set $X$ of $n$ points in $\mathbb{R}^d$, we say that a subset $C$ of $X$ with weights $w$ is a strong $\eps$-coreset of $X$ for the $k$-means clustering problem if for any set $S$ of $k$ points in $\mathbb{R}^d$, we have 
\[(1-\eps)\Cost(X,S)\le\Cost(C,S,w)\le(1+\eps)\Cost(X,S).\]
\end{definition}

Many coreset construction exist in the literature, and the goal is to minimize $|C|$, the size of the coreset, while preserving $(1\pm\eps)$-approximate cost for all sets of $k$ centers. If the points lie in $\mathbb{R}^d$, we can find coresets of size $\tO{\poly(k,d,\epsilon^{-1})}$, i.e., the size is independent of $n$.

\paragraph{Leverage scores.} 
Finally, we recall the notion of a leverage score sampling matrix. 
For a matrix $\bA\in\mathbb{R}^{n\times d}$, the leverage score of row $\ba_i$ with $i\in[n]$ is defined as $\ba_i(\bA^\top\bA)^{-1}\ba_i^\top$. 
We can use the leverage scores to generate a random leverage score sampling matrix as follows:
\begin{theorem}[Leverage score sampling matrix]
\thmlab{thm:lev:score:sample}
\cite{DrineasMM06a,DrineasMM06b,MagdonIsmail10,Woodruff14}
Let $C>1$ be a universal constant and $\alpha>1$ be a parameter. 
Given a matrix $\bA\in\mathbb{R}^{n\times d}$, let $\ell_i$ be the leverage score of the $i$-th row of $\bA$. 
Suppose $p_i\in\left[\min\left(1,\frac{C\ell_i\log k}{\eps^2}\right),\min\left(1,\frac{C\alpha\ell_i\log k}{\eps^2}\right)\right]$ for all $i\in[n]$. 

For $m:=\O{\frac{\alpha}{\eps^2}\,d\log d}$, let $\bS\in\mathbb{R}^{m\times n}$ be generated so that each row of $\bS$ randomly selects row $j\in[n]$ with probability proportional to $p_j$ and rescales the row by $\frac{1}{\sqrt{mp_j}}$.  
Then with probability at least $0.99$, we have that simultaneously for all vectors $\bx\in\mathbb{R}^d$,
\[(1-\eps)\|\bA\bx\|_2\le\|\bS\bA\bx\|_2\le(1+\eps)\|\bA\bx\|_2.\]
\end{theorem}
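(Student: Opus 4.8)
The plan is to reduce the stated subspace-embedding guarantee to a spectral-approximation statement about the Gram matrix of a fixed orthonormal basis, and then establish that statement via a matrix concentration inequality. First I would normalize the problem: let $r=\operatorname{rank}(\bA)$ and write the thin SVD $\bA=\bU\bSigma\bV^\top$ with $\bU\in\mathbb{R}^{n\times r}$ having orthonormal columns. For any $\bx$, setting $\bz=\bSigma\bV^\top\bx$ gives $\bA\bx=\bU\bz$ and $\|\bA\bx\|_2=\|\bz\|_2$; as $\bx$ ranges over $\mathbb{R}^d$, $\bz$ ranges over all of $\mathbb{R}^r$. Hence the claim is equivalent to $(1-\eps)\|\bz\|_2\le\|\bS\bU\bz\|_2\le(1+\eps)\|\bz\|_2$ for every $\bz$, i.e.\ to the statement that every singular value of $\bS\bU$ lies in $[1-\eps,1+\eps]$. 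Since $\sqrt{1-\eps}\ge 1-\eps$ and $\sqrt{1+\eps}\le 1+\eps$ for $\eps\in[0,1]$, it suffices to prove the operator-norm bound $\|\bU^\top\bS^\top\bS\bU-\bI\|_2\le\eps$. I would also record the standard identity $\ell_i=\ba_i(\bA^\top\bA)^{-1}\ba_i^\top=\|\bU_i\|_2^2$, which lets me re-express the sampling probabilities, defined through the $\ell_i$, in terms of the rows of $\bU$.

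Next comes the concentration step. Writing $\bu_j$ for the $j$-th row of $\bU$, each row of $\bS$ contributes an independent rank-one term, so $\bU^\top\bS^\top\bS\bU=\sum_{t=1}^m Y_t$ where $Y_t=\frac{1}{m q_{j_t}}\bu_{j_t}\bu_{j_t}^\top\succeq 0$ and $q_j$ is the probability that a single row of $\bS$ selects index $j$. The importance rescaling makes the estimator unbiased, $\Ex{Y_t}=\frac1m\bI$, so $\Ex{\sum_t Y_t}=\bI$. The key quantity is the almost-sure per-term bound $R=\max_t\|Y_t\|_2=\max_j\frac{\ell_j}{m q_j}$. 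This is exactly where leverage-score-proportional sampling is essential: because the hypothesis forces $p_j\gtrsim\frac{\ell_j\log k}{\eps^2}$, the normalized ratio $\ell_j/q_j$ is capped at $\O{r}$, so $R=\O{r/m}$, whereas uniform sampling would leave $R$ as large as $n/m$. I would then invoke the matrix Chernoff bound for a sum of independent PSD matrices: with $\mu_{\max}=\mu_{\min}=1$ and per-term bound $R$, the probability that some eigenvalue of $\sum_t Y_t$ escapes $[1-\eps,1+\eps]$ is at most $2r\exp(-c\eps^2/R)$. Substituting $R=\O{r/m}$ and the choice $m=\O{\frac{\alpha}{\eps^2}d\log d}$ (using $r\le d$) makes this exponent at least $c'\alpha\log d$, driving the failure probability below $0.01$, precisely as claimed.

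The main obstacle is obtaining a per-term norm bound $R$ small enough for matrix Chernoff to succeed with only $\tO{d/\eps^2}$ samples; everything else is bookkeeping. Two points need care. First, the truncation $\min(1,\cdot)$ in the definition of $p_i$ means that heavy rows (those with $\ell_i\gtrsim\eps^2/\log k$) are sampled deterministically; such rows contribute no randomness and can be split off and handled exactly, so they do not affect the concentration argument. Second, the with-replacement normalization relating the given $p_i$ to the true selection probabilities $q_j=p_j/\sum_{j'}p_{j'}$ must be tracked so that both the unbiasedness $\Ex{Y_t}=\frac1m\bI$ and the bound on $R$ emerge with the correct constants; the over-sampling parameter $\alpha$ only inflates $\sum_j p_j$ and hence $m$, and never hurts the guarantee. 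Carrying these constants through the matrix Chernoff exponent yields the claimed sample complexity and the $(1\pm\eps)$ spectral guarantee, which by the reduction of the first paragraph is exactly the desired affine/subspace embedding statement.
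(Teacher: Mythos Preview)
The paper does not prove this theorem; it is stated with citations to \cite{DrineasMM06a,DrineasMM06b,MagdonIsmail10,Woodruff14} and used as a black box, so there is no ``paper's own proof'' to compare against. Your sketch is the standard argument from those references: pass to an orthonormal basis via the thin SVD so that leverage scores become row norms, write the sketched Gram matrix as a sum of independent rank-one PSD terms, and apply the matrix Chernoff/Bernstein bound, using leverage-score-proportional sampling to control the per-term norm. This is correct and is essentially how the cited works (and the survey \cite{Woodruff14}) establish the result; there is nothing to add.
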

The main point of \thmref{thm:lev:score:sample} is that given constant-factor approximations $p_i$ to the leverage scores $\ell_i$, it suffices to sample $\O{d\log d}$ rows of $\bA$ to achieve a constant-factor subspace embedding of $\bA$, and similar bounds can be achieved for $(1+\eps)$-approximate subspace embeddings. 
Finally, we remark that $\bS$ can be decomposed as the product of matrices $\bD\bT$, where $\bT\in\mathbb{R}^{m\times n}$ is a sparse matrix with a single one per row, denoting the selection of a row for the purposes of leverage score sampling and $\bD$ is the diagonal matrix with the corresponding scaling factor, i.e., the $i$-th diagonal entry of $\bD$ is set to $\frac{1}{\sqrt{mp_j}}$ if the $j$-th row of $\bA$ is selected for the $i$-th sample. 

\section{Binary Low-Rank Approximation}
\seclab{sec:binary:lra}
In this section, we present a $(1+\eps)$-approximation algorithm for binary low-rank approximation with Frobenius norm loss, where to goal is to find matrices $\bU\in\{0,1\}^{n\times k}$ and $\bV\in\{0,1\}^{k\times d}$ to minimize $\|\bU\bV-\bA\|_F^2$. 
Suppose optimal low-rank factors are $\bU^*\in\{0,1\}^{n\times k}$ and $\bV^*\in\{0,1\}^{k\times d}$, so that
\[\|\bU^*\bV^*-\bA\|_F^2=\min_{\bU\in\{0,1\}^{n\times k},\bV\in\{0,1\}^{k\times d}}\|\bU\bV-\bA\|_F^2.\]
Observe that if we knew matrices $\bS\bU^*$ and $\bS\bA$ so that for all $\bV\in\{0,1\}^{k\times d}$,
\[(1-\eps)\|\bU^*\bV-\bA\|_F^2\le\|\bS\bU^*\bV-\bS\bA\|_F^2\le(1+\eps)\|\bU^*\bV-\bA\|_F^2,\]
then we could find a $(1+\eps)$-approximate solution for $\bV^*$ by solving the problem 
\[\argmin_{\bV\in\{0,1\}^{k\times d}}\|\bS\bU^*\bV-\bS\bA\|_F^2\]
instead. 

We would like to make guesses for the matrices $\bS\bU^*$ and $\bS\bA$, but first we must ensure there are not too many possibilities for these matrices. 
For example, if we chose $\bS$ to be a dense matrix with random gaussian entries, then $\bS\bU^*$ could have too many possibilities because without additional information, there are $2^{nk}$ possibilities for the matrix $\bU^*\in\{0,1\}^{n\times k}$. 
We can instead choose $\bS$ to be a leverage score sampling matrix, which samples rows from $\bU^*$ and $\bA$. 
Since each row of $\bU^*$ has dimension $k$, then there are at most $2^k$ distinct possibilities for each of the rows of $\bU^*$. 
On the other hand, $\bA\in\{0,1\}^{n\times d}$, so there may be $2^d$ distinct possibilities for the rows of $\bA$, which is too many to guess. 

Thus we first reduce the number of unique rows in $\bA$ by computing a strong coreset $\widetilde{\bA}$ for $\bA$. 
The strong coreset has the property that for any choices of $\bU\in\{0,1\}^{n\times k}$ and $\bV\in\{0,1\}^{k\times d}$, there exists $\bX\in\{0,1\}^{n\times k}$ such that
\[(1-\eps)\|\bU\bV-\bA\|_F^2\le\|\bX\bV-\widetilde{\bA}\|_F^2\le(1+\eps)\|\bU\bV-\bA\|_F^2.\]
Therefore, we instead first solve the low-rank approximation problem on $\widetilde{\bA}$ first. 
Crucially, $\widetilde{\bA}$ has $2^{\poly(k/\eps)}$ unique rows so then for a matrix $\bS$ that samples $\poly(k/\eps)$ rows, there are $\binom{2^{\poly(k/\eps)}}{\poly(k /\eps)} = 2^{\poly(k/\eps)}$ possible choices of $\bS\widetilde{\bA}$, so we can enumerate all of them for both $\bS\bU^*$ and $\bS\widetilde{\bA}$. 
We can then solve 
\[\bV'=\argmin_{\bV\in\{0,1\}^{k\times d}}\|\bS\bU^*\bV-\bS\widetilde{\bA}\|_F^2\]
and
\[\bU'=\argmin_{\bU\in\{0,1\}^{n\times k}}\|\bU\bV'-\bA\|_F^2,\]
where the latter optimization problem can be solved by iteratively optimizing over each row, so that the total computation time is $\O{2^kn}$ rather than $2^{kn}$. 
We give the full algorithm in \algref{alg:lra} and the subroutine for optimizing with respect to $\widetilde{\bA}$ in \algref{alg:distinct:lra}. 
We give the subroutines for solving for $\bV'$ and $\bU'$ in \algref{alg:compute:v} and \algref{alg:compute:u}, respectively. 

\begin{algorithm}[!htb]
\caption{Algorithm for computing optimal $\bV$ given $\bU$}
\alglab{alg:compute:v}
\begin{algorithmic}[1]
\Require{$\widetilde{\bA}\in\{0,1\}^{N\times d}$, $\bU\in\{0,1\}^{N\times k}$}
\Ensure{$\bV'=\argmin_{\bV\in\{0,1\}^{k\times d}}\|\bU\bV-\widetilde{\bA}\|_F$}
\For{$i=1$ to $i=d$}
\Comment{Optimize for each column individually}
\State{Set $\bV'^{(i)}=\argmin_{\bV^{(i)}\in\{0,1\}^{k\times 1}}\|\bU\bV^{(i)}-\widetilde{\bA}^{(i)}\|_2$}
\Comment{Enumerate over all $2^k$ possible binary vectors}
\EndFor
\State{\Return $\bV'=\begin{bmatrix}\bV'^{(1)}|\ldots|\bV'^{(d)}\end{bmatrix}$}
\end{algorithmic}
\end{algorithm}

\begin{algorithm}[!htb]
\caption{Algorithm for computing optimal $\bU$ given $\bV$}
\alglab{alg:compute:u}
\begin{algorithmic}[1]
\Require{$\widetilde{\bA}\in\{0,1\}^{N\times d}$, $\bV\in\{0,1\}^{k\times d}$}
\Ensure{$\bU'=\argmin_{\bU\in\{0,1\}^{N\times k}}\|\bU\bV-\widetilde{\bA}\|_F$}
\For{$i=1$ to $i=N$}
\Comment{Optimize for each row individually}
\State{Set $\bU'_i=\argmin_{\bU_i\in\{0,1\}^{1\times k}}\|\bU_i\bV-\widetilde{\bA}_i\|_2$}
\Comment{Enumerate over all $2^k$ possible binary vectors}
\EndFor
\State{\Return $\bU'=\bU'_1\circ\ldots\circ\bU'_N$}
\end{algorithmic}
\end{algorithm}

\begin{algorithm}[!htb]
\caption{Low-rank approximation for matrix $\widetilde{\bA}$ with $t$ distinct rows}
\alglab{alg:distinct:lra}
\begin{algorithmic}[1]
\Require{$\widetilde{\bA}\in\{0,1\}^{N\times d}$ with at most $t$ distinct rows, rank parameter $k$, accuracy parameter $\eps>0$}
\Ensure{$\bU'\in\{0,1\}^{n\times k}, \bV'\in\{0,1\}^{k\times d}$ satisfying the property that $\|\bU'\bV'-\bA\|_F^2\le(1+\eps)\min_{\bU\in\{0,1\}^{n\times k}, \bV\in\{0,1\}^{k\times d}}\|\bU\bV-\widetilde{\bA}\|_F^2$}
\State{$V\gets\emptyset$}
\For{each guess of $\bS\bU^*$ and $\bS\widetilde{\bA}$, where $\bS$ is a leverage score sampling matrix with $m=\O{\frac{k\log k}{\eps^2}}$ rows with weights that are powers of two up to $\poly(N)$}
\State{$V\gets V\cup\argmin_{\bV\in\{0,1\}^{k\times d}}\|\bS\bU^*\bV-\bS\widetilde{\bA}\|_F^2$}
\Comment{\algref{alg:compute:v}}
\EndFor
\For{each $\bV\in V$}
\State{Let $\bU_\bV=\argmin_{\bU\in\{0,1\}^{N\times k}}\|\bU\bV-\widetilde{\bA}\|_F^2$}
\Comment{\algref{alg:compute:u}}
\EndFor
\State{$\bV'\gets\argmin_{\bV\in\{0,1\}^{k\times d}}\|\bS\bU_\bV\bV-\bS\widetilde{\bA}\|_F^2$}
\State{$\bU'\gets\bU_{\bV'}$}
\State{\Return $(\bU',\bV')$}
\end{algorithmic}
\end{algorithm}

First, we recall that leverage score sampling matrices preserve approximate matrix multiplication. 
\begin{lemma}[Lemma 32 in \cite{ClarksonW13}]
\lemlab{lem:amm}
Let $\bU\in\mathbb{R}^{N\times k}$ have orthonormal columns, $\widetilde{\bA}\in\{0,1\}^{N\times d}$, and $\bS\in\mathbb{R}^{m\times N}$ be a leverage score sampling matrix for $\bU$ with $m=\O{\frac{1}{\eps^2}}$ rows. 
Then,
\[\PPr{\|\bU^\top\bS^\top\bS\widetilde{\bA}-\bU^\top\widetilde{\bA}\|_F^2<\eps^2\|\bU\|_F^2\|\widetilde{\bA}\|_F^2}\ge 0.99.\]
\end{lemma}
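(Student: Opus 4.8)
Since this is the standard approximate matrix multiplication guarantee for sampling matrices (stated here as Lemma 32 of \cite{ClarksonW13}), the plan is to reproduce the second-moment argument, specialized to leverage-score sampling of a matrix with orthonormal columns. First I would write the sampling matrix in explicit form: if the $t$-th of the $m$ rows of $\bS$ selects index $j_t\in[N]$ with probability $p_{j_t}$ and rescales by $\frac{1}{\sqrt{m\,p_{j_t}}}$, then
\[\bS^\top\bS=\frac1m\sum_{t=1}^m\frac{1}{p_{j_t}}\be_{j_t}\be_{j_t}^\top,\]
so that
\[\bU^\top\bS^\top\bS\widetilde{\bA}=\sum_{t=1}^m X_t,\qquad X_t:=\frac{1}{m\,p_{j_t}}(\bU_{j_t})^\top\widetilde{\bA}_{j_t}.\]
A direct computation shows each $X_t$ is unbiased for $\frac1m\bU^\top\widetilde{\bA}$, since $\Ex{X_t}=\frac1m\sum_{i\in[N]}p_i\cdot\frac{1}{p_i}(\bU_i)^\top\widetilde{\bA}_i=\frac1m\bU^\top\widetilde{\bA}$, where I use $\bU^\top\widetilde{\bA}=\sum_{i\in[N]}(\bU_i)^\top\widetilde{\bA}_i$. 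Hence $\sum_t X_t$ is an unbiased estimator of $\bU^\top\widetilde{\bA}$.

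Next I would bound the expected squared error. Because the $m$ samples are independent and each $X_t$ is unbiased, the cross terms vanish, giving
\[\Ex{\left\|\sum_{t=1}^m X_t-\bU^\top\widetilde{\bA}\right\|_F^2}=\sum_{t=1}^m\Ex{\|X_t-\Ex{X_t}\|_F^2}\le\sum_{t=1}^m\Ex{\|X_t\|_F^2}.\]
Since $X_t$ is a rescaled outer product of a row of $\bU$ with a row of $\widetilde{\bA}$, its Frobenius norm factorizes as $\|(\bU_{j_t})^\top\widetilde{\bA}_{j_t}\|_F=\|\bU_{j_t}\|_2\,\|\widetilde{\bA}_{j_t}\|_2$, so
\[\Ex{\|X_t\|_F^2}=\frac{1}{m^2}\sum_{i\in[N]}\frac{1}{p_i}\,\|\bU_i\|_2^2\,\|\widetilde{\bA}_i\|_2^2.\]

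The crucial step, and the only place orthonormality enters, is the cancellation in this sum. Because $\bU$ has orthonormal columns, $\bU^\top\bU=\bI$, so the leverage score of row $i$ is exactly $\ell_i=\bU_i(\bU^\top\bU)^{-1}\bU_i^\top=\|\bU_i\|_2^2$, and $\sum_{i\in[N]}\ell_i=\|\bU\|_F^2$. Taking the leverage-score sampling probabilities $p_i=\ell_i/\|\bU\|_F^2=\|\bU_i\|_2^2/\|\bU\|_F^2$ makes the factor $\frac{1}{p_i}\|\bU_i\|_2^2=\|\bU\|_F^2$ independent of $i$, so that
\[\Ex{\|X_t\|_F^2}=\frac{1}{m^2}\,\|\bU\|_F^2\sum_{i\in[N]}\|\widetilde{\bA}_i\|_2^2=\frac{1}{m^2}\,\|\bU\|_F^2\,\|\widetilde{\bA}\|_F^2.\]
Summing over the $m$ samples yields $\Ex{\|\bU^\top\bS^\top\bS\widetilde{\bA}-\bU^\top\widetilde{\bA}\|_F^2}\le\frac1m\|\bU\|_F^2\|\widetilde{\bA}\|_F^2$. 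Choosing $m=\O{\frac{1}{\eps^2}}$ with a large enough leading constant makes this at most $\frac{\eps^2}{100}\|\bU\|_F^2\|\widetilde{\bA}\|_F^2$, and Markov's inequality then gives the claimed bound with failure probability at most $0.01$.

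I do not anticipate a genuine obstacle, as the argument is routine once the outer-product structure is exposed; the one subtlety worth flagging is that \thmref{thm:lev:score:sample} guarantees $p_i$ proportional to $\ell_i$ only up to the constant factor $\alpha$, so in the formal write-up I would carry that factor through and absorb it into the choice of $m$, rather than assuming the exact proportionality $p_i=\ell_i/\|\bU\|_F^2$ used above for clarity.
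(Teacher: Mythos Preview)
Your argument is correct and is exactly the standard second-moment plus Markov proof of approximate matrix multiplication for leverage-score sampling. Note that the paper does not actually prove this lemma at all: it is quoted verbatim as Lemma~32 of \cite{ClarksonW13} and used as a black box, so there is no in-paper proof to compare against; your write-up simply fills in the cited result, and the approach you give is the same one used in \cite{ClarksonW13} (and earlier in \cite{DrineasMM06a}).
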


Next, we recall that leverage score sampling matrices give subspace embeddings. 
\begin{theorem}[Theorem 42 in \cite{ClarksonW13}]
\thmlab{thm:se}
For $\bU\in\mathbb{R}^{N\times k}$, let $\bS\in\mathbb{R}^{m\times N}$ be a leverage score sampling matrix for $\bU\in\{0,1\}^{N\times k}$ with $m=\O{\frac{k\log k}{\eps^2}}$ rows. 
Then with probability at least $0.99$, we have for all $\bV\in\mathbb{R}^{k\times d}$,
\[(1-\eps)\|\bU\bV\|_F^2\le\|\bS\bU\bV\|_F^2\le(1+\eps)\|\bU\bV\|_F^2.\]
\end{theorem}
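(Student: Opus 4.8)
The plan is to reduce the matrix (Frobenius) statement to the vector subspace embedding already recorded in \thmref{thm:lev:score:sample}, applied to $\bU$ with its column dimension $k$ playing the role of $d$. First I would invoke \thmref{thm:lev:score:sample} on $\bU\in\mathbb{R}^{N\times k}$ with accuracy parameter $\eps/3$ and $\alpha=\O{1}$, so that $m=\O{\frac{k\log k}{\eps^2}}$ sampled rows suffice. This yields, with probability at least $0.99$, a single event on which the guarantee
\[\left(1-\tfrac{\eps}{3}\right)\|\bU\bx\|_2\le\|\bS\bU\bx\|_2\le\left(1+\tfrac{\eps}{3}\right)\|\bU\bx\|_2\]
holds \emph{simultaneously} for every $\bx\in\mathbb{R}^k$. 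Squaring and using the elementary inequalities $(1-\eps/3)^2\ge 1-\eps$ and $(1+\eps/3)^2\le 1+\eps$, valid for $\eps\in(0,1)$, converts this into the squared-norm bound $(1-\eps)\|\bU\bx\|_2^2\le\|\bS\bU\bx\|_2^2\le(1+\eps)\|\bU\bx\|_2^2$ for all $\bx$.

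Next I would pass from vectors to the matrix $\bV$ by a column-by-column decomposition of the Frobenius norm. Writing $\bV^{(j)}$ for the $j$-th column of $\bV$, we have $\|\bU\bV\|_F^2=\sum_{j\in[d]}\|\bU\bV^{(j)}\|_2^2$ and likewise $\|\bS\bU\bV\|_F^2=\sum_{j\in[d]}\|\bS\bU\bV^{(j)}\|_2^2$. Conditioned on the single good event above, I apply the squared vector bound with $\bx=\bV^{(j)}\in\mathbb{R}^k$ to each of the $d$ columns, obtaining $(1-\eps)\|\bU\bV^{(j)}\|_2^2\le\|\bS\bU\bV^{(j)}\|_2^2\le(1+\eps)\|\bU\bV^{(j)}\|_2^2$ for every $j\in[d]$. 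Summing these $d$ inequalities and recombining the sums into Frobenius norms gives exactly $(1-\eps)\|\bU\bV\|_F^2\le\|\bS\bU\bV\|_F^2\le(1+\eps)\|\bU\bV\|_F^2$, as desired.

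The conceptual crux, and the one place where care is essential, is that \thmref{thm:lev:score:sample} furnishes a guarantee that is \emph{uniform} over the entire $k$-dimensional column space of $\bU$, not merely for a fixed vector. Consequently a single high-probability event covers all $d$ columns of $\bV$ at once, so no union bound over the columns is needed; this is precisely what keeps the row count $m=\O{\frac{k\log k}{\eps^2}}$ independent of $d$. The underlying uniform embedding is itself established by a matrix-Chernoff (or $\eps$-net) argument controlling the deviation of the rescaled sampled Gram matrix of an orthonormal basis for the column space of $\bU$ from the identity, which I would take as given through \thmref{thm:lev:score:sample}; the remaining bookkeeping is only the constant-factor adjustment from running the base result at accuracy $\eps/3$ to absorb the squaring, which I have accounted for above.
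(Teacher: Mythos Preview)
Your proposal is correct. The paper does not supply its own proof of \thmref{thm:se}; it is quoted as Theorem~42 of \cite{ClarksonW13} and used as a black box. Your argument---invoking the uniform vector subspace embedding of \thmref{thm:lev:score:sample} on $\bU$ and then summing the squared column guarantees to recover the Frobenius norm---is the standard derivation and is precisely how one passes from a vector subspace embedding to the matrix form, so there is nothing to compare against beyond noting that your reduction is the intended one.
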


Finally, we recall that approximate matrix multiplication and leverage score sampling suffices to achieve an affine embedding.
\begin{theorem}[Theorem 39 in \cite{ClarksonW13}]
\thmlab{thm:affine}
Let $\bU\in\mathbb{R}^{N\times k}$ have orthonormal columns. 
Let $\bS$ be a sampling matrix that satisfies \lemref{lem:amm} with error parameter $\frac{\eps}{\sqrt{k}}$ and also let $\bS$ be a subspace embedding for $\bU$ with error parameter $\eps$. 
Let $\bV^*=\argmin_{\bV}\|\bU\bV-\widetilde{\bA}\|_F$ and $\bX=\bU\bV^*-\widetilde{\bA}$. 
Then for all $\bV\in\mathbb{R}^{k\times d}$,
\[(1-2\eps)\|\bU\bV-\widetilde{\bA}\|_F^2-\|\bX\|_F^2\le\|\bS\bU\bV-\bS\widetilde{\bA}\|_F^2-\|\bS\bX\|_F^2\le(1+2\eps)\|\bU\bV-\widetilde{\bA}\|_F^2-\|\bX\|_F^2.\]
\end{theorem}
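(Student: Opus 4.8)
The plan is to reduce the affine embedding guarantee to the two properties $\bS$ already has: the subspace embedding property (\thmref{thm:se}) and the approximate matrix multiplication property (\lemref{lem:amm}). The starting observation is that because $\bU$ has orthonormal columns, the minimizer is $\bV^*=\bU^\top\widetilde{\bA}$, so the residual $\bX=\bU\bV^*-\widetilde{\bA}=(\bU\bU^\top-\bI)\widetilde{\bA}$ satisfies the normal equations $\bU^\top\bX=0$. This orthogonality is the algebraic backbone of the whole argument.

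I would fix an arbitrary $\bV\in\mathbb{R}^{k\times d}$ and set $\bZ=\bV-\bV^*$, so that $\bU\bV-\widetilde{\bA}=\bU\bZ+\bX$. Since $\bU^\top\bX=0$, the Frobenius inner product $\langle\bU\bZ,\bX\rangle=\mathrm{tr}(\bZ^\top\bU^\top\bX)=0$ vanishes, and the Pythagorean identity gives $\|\bU\bV-\widetilde{\bA}\|_F^2=\|\bU\bZ\|_F^2+\|\bX\|_F^2$. Applying $\bS$ and expanding the square yields
\[\|\bS\bU\bV-\bS\widetilde{\bA}\|_F^2-\|\bS\bX\|_F^2=\|\bS\bU\bZ\|_F^2+2\langle\bS\bU\bZ,\bS\bX\rangle,\]
so it suffices to show the right-hand side stays within additive $\eps(\|\bU\bZ\|_F^2+\|\bX\|_F^2)$ of $\|\bU\bZ\|_F^2$. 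The first term is controlled directly by \thmref{thm:se}, giving $(1-\eps)\|\bU\bZ\|_F^2\le\|\bS\bU\bZ\|_F^2\le(1+\eps)\|\bU\bZ\|_F^2$.

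The heart of the argument --- and the step I expect to be most delicate --- is bounding the cross term $\langle\bS\bU\bZ,\bS\bX\rangle=\mathrm{tr}(\bZ^\top\bU^\top\bS^\top\bS\bX)$, which is exactly zero without the sketch. Here I would invoke \lemref{lem:amm} with error parameter $\frac{\eps}{\sqrt k}$ applied to the \emph{deterministic} pair $\bU,\bX$: since $\bU^\top\bX=0$ and $\|\bU\|_F^2=k$, it yields $\|\bU^\top\bS^\top\bS\bX\|_F=\|\bU^\top\bS^\top\bS\bX-\bU^\top\bX\|_F\le\frac{\eps}{\sqrt k}\|\bU\|_F\|\bX\|_F=\eps\|\bX\|_F$. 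The exact cancellation of the $\sqrt k$ factor against $\|\bU\|_F=\sqrt k$ is precisely why the lemma is invoked at scale $\eps/\sqrt k$. By Cauchy--Schwarz for the Frobenius inner product together with $\|\bZ\|_F=\|\bU\bZ\|_F$, I then obtain $|\langle\bS\bU\bZ,\bS\bX\rangle|\le\|\bZ\|_F\,\|\bU^\top\bS^\top\bS\bX\|_F\le\eps\|\bU\bZ\|_F\|\bX\|_F$, and AM--GM bounds this by $\frac{\eps}{2}(\|\bU\bZ\|_F^2+\|\bX\|_F^2)$.

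Combining the two estimates, $\|\bS\bU\bZ\|_F^2+2\langle\bS\bU\bZ,\bS\bX\rangle$ lies between $(1-2\eps)\|\bU\bZ\|_F^2-\eps\|\bX\|_F^2$ and $(1+2\eps)\|\bU\bZ\|_F^2+\eps\|\bX\|_F^2$. Substituting $\|\bU\bZ\|_F^2=\|\bU\bV-\widetilde{\bA}\|_F^2-\|\bX\|_F^2$ turns the residual coefficient into $-(1\pm\eps)\|\bX\|_F^2$, and since $(1-\eps)\|\bX\|_F^2\le\|\bX\|_F^2\le(1+\eps)\|\bX\|_F^2$, these bounds are at least as strong as the claimed two-sided inequality with the clean $-\|\bX\|_F^2$ term. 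The only remaining subtlety is ensuring both high-probability events hold simultaneously --- the subspace embedding of \thmref{thm:se} and the AMM bound of \lemref{lem:amm} for the fixed residual $\bX$ --- which follows by a union bound since $\bX$ does not depend on the randomness in $\bS$.
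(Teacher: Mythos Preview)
Your argument is correct and is exactly the standard proof of this affine-embedding result; the paper itself does not prove \thmref{thm:affine} but simply imports it from \cite{ClarksonW13}. Every step you outline---the normal-equations orthogonality $\bU^\top\bX=0$, the Pythagorean decomposition $\|\bU\bV-\widetilde{\bA}\|_F^2=\|\bU\bZ\|_F^2+\|\bX\|_F^2$, the subspace-embedding bound on $\|\bS\bU\bZ\|_F^2$, and the AMM/Cauchy--Schwarz/AM--GM chain for the cross term---matches the Clarkson--Woodruff proof. One small remark: you invoke \lemref{lem:amm} on the pair $(\bU,\bX)$ rather than $(\bU,\widetilde{\bA})$; that is indeed what the proof requires, and the original Clarkson--Woodruff AMM lemma applies to an arbitrary (fixed) second matrix, so the binary restriction in the paper's restatement of \lemref{lem:amm} is incidental here.
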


We first show that \algref{alg:distinct:lra} achieves a good approximation to the optimal low-rank factors for the coreset $\widetilde{\bA}$. 
\begin{lemma}
\lemlab{lem:correctness:subroutine}
Suppose $\eps<\frac{1}{10}$. 
Then with probability at least $0.97$, the output of \algref{alg:distinct:lra} satisfies 
\[\|\bU'\bV'-\widetilde{\bA}\|_F^2\le(1+6\eps)\|\bU^*\bV^*-\widetilde{\bA}\|_F^2.\]
\end{lemma}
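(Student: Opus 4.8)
The plan is to show that one of the enumerated guesses coincides with the action of a genuine leverage-score sampling matrix $\bS$ that is an \emph{affine embedding} for the optimal factor $\bU^*$, so that minimizing the sketched objective already produces a binary $\bV'$ whose true cost is within $(1+6\eps)$ of optimal. Write $\OPT:=\|\bU^*\bV^*-\widetilde{\bA}\|_F^2=\min_{\bU,\bV}\|\bU\bV-\widetilde{\bA}\|_F^2$, so that in particular $\|\bU^*\bV^*-\widetilde{\bA}\|_F^2=\min_{\bV\in\{0,1\}^{k\times d}}\|\bU^*\bV-\widetilde{\bA}\|_F^2$. Because $\widetilde{\bA}$ has at most $t$ distinct rows and each row of $\bU^*\in\{0,1\}^{N\times k}$ is one of at most $2^k$ binary vectors, a leverage-score sampling matrix $\bS$ with $m=\O{k\log k/\eps^2}$ rows makes the pair $(\bS\bU^*,\bS\widetilde{\bA})$ assume at most $2^{\poly(k/\eps)}$ values (the rescalings being rounded to powers of two, which perturbs the sampling probabilities only by a constant factor and is absorbed into \thmref{lev:score:sample}); hence the loop of \algref{distinct:lra} examines, for some iteration, exactly the value induced by such an $\bS$. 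To apply the affine-embedding results, which are stated for matrices with orthonormal columns, I would factor $\bU^*=\bQ\bR$ with $\bQ\in\mathbb{R}^{N\times k}$ having orthonormal columns; since leverage scores depend only on the column space, $\bS$ is simultaneously a leverage-score sampling matrix for $\bQ$, and the identity $\bU^*\bV=\bQ(\bR\bV)$ transfers any statement about $\bQ$ to $\bU^*$.

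Next I would assemble the affine embedding for this $\bS$. By \thmref{se}, $m=\O{k\log k/\eps^2}$ rows make $\bS$ a subspace embedding for $\bQ$ with distortion $\eps$ with probability at least $0.99$, and by \lemref{amm} applied with error parameter $\eps/\sqrt{k}$ (costing $\O{k/\eps^2}$ rows, absorbed into $m$) $\bS$ satisfies approximate matrix multiplication for $\bQ$ with probability at least $0.99$. \thmref{affine} then gives, with $\widehat{\bY}:=\argmin_{\bY}\|\bQ\bY-\widetilde{\bA}\|_F$ and $\bX:=\bQ\widehat{\bY}-\widetilde{\bA}$, a two-sided bound on $\|\bS\bQ\bY-\bS\widetilde{\bA}\|_F^2-\|\bS\bX\|_F^2$ for all real $\bY$. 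Substituting $\bY=\bR\bV$ for binary $\bV$ and setting the $\bV$-independent constant $c:=\|\bS\bX\|_F^2-\|\bX\|_F^2$, this reads
\[(1-2\eps)\|\bU^*\bV-\widetilde{\bA}\|_F^2+c\le\|\bS\bU^*\bV-\bS\widetilde{\bA}\|_F^2\le(1+2\eps)\|\bU^*\bV-\widetilde{\bA}\|_F^2+c\]
for every $\bV\in\{0,1\}^{k\times d}$.

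The decisive point is that $c$ does not depend on $\bV$, so it cancels when one compares a sketched minimizer against the optimum; this is precisely what a plain subspace embedding cannot supply and what removes the triangle-inequality loss incurred by prior work. Let $\bV'=\argmin_{\bV}\|\bS\bU^*\bV-\bS\widetilde{\bA}\|_F^2$, computed exactly and column-by-column by \algref{compute:v}. Chaining the lower bound at $\bV'$, the optimality $\|\bS\bU^*\bV'-\bS\widetilde{\bA}\|_F^2\le\|\bS\bU^*\bV^*-\bS\widetilde{\bA}\|_F^2$, and the upper bound at $\bV^*$ gives
\[(1-2\eps)\|\bU^*\bV'-\widetilde{\bA}\|_F^2+c\le(1+2\eps)\OPT+c,\]
whence $\|\bU^*\bV'-\widetilde{\bA}\|_F^2\le\frac{1+2\eps}{1-2\eps}\OPT\le(1+6\eps)\OPT$ for $\eps<\frac{1}{10}$. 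Since \algref{compute:u} returns $\bU_{\bV'}=\argmin_{\bU}\|\bU\bV'-\widetilde{\bA}\|_F^2$, the candidate pair $(\bU_{\bV'},\bV')$ satisfies $\|\bU_{\bV'}\bV'-\widetilde{\bA}\|_F^2\le\|\bU^*\bV'-\widetilde{\bA}\|_F^2\le(1+6\eps)\OPT$.

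Finally I would check that the concluding selection of \algref{distinct:lra} does not spoil this guarantee: the good pair above is among the candidates it ranks, so it suffices to show the selection returns a pair of no larger true cost, either by evaluating the exact residuals (polynomial time) or by arguing that the selection sketch preserves each candidate's Frobenius residual up to $(1\pm\eps)$ after a union bound over the $2^{\poly(k/\eps)}$ candidate pairs, folding the extra factor into the constant multiplying $\eps$. A union bound over the subspace-embedding, approximate-matrix-multiplication, and selection events leaves overall success probability at least $0.97$. I expect the main obstacle to be the affine-embedding step: reducing the binary, non-orthonormal $\bU^*$ to orthonormal $\bQ$ while keeping $\bS$ a valid sampler, meeting both the $\eps/\sqrt{k}$ approximate-matrix-multiplication requirement and the $\eps$ subspace-embedding requirement with the single budget $m=\O{k\log k/\eps^2}$, and then exploiting the cancellation of $c$ to turn the additive affine-embedding bound into the clean multiplicative $(1+6\eps)$ guarantee.
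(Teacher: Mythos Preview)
Your proposal is correct and follows essentially the same route as the paper: identify one enumerated guess with a genuine leverage-score sampling matrix $\bS$, invoke the affine-embedding \thmref{thm:affine} so that the additive $\bV$-independent term $c=\|\bS\bX\|_F^2-\|\bX\|_F^2$ cancels when comparing the sketched minimizer $\bV'$ against $\bV^*$, obtain $\|\bU^*\bV'-\widetilde{\bA}\|_F^2\le\frac{1+2\eps}{1-2\eps}\OPT\le(1+6\eps)\OPT$, and finish by optimality of $\bU_{\bV'}$ and of the final selection.

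You are in fact more careful than the paper on one point: the paper's proof asserts ``since $\bU^*$ has orthonormal columns'' before applying \lemref{lem:amm} and \thmref{thm:affine}, which is not literally true for a binary matrix; your $\bU^*=\bQ\bR$ factorization (noting that leverage scores depend only on the column span, so $\bS$ is equally a leverage-score sampler for $\bQ$) is the right way to make this rigorous, and the substitution $\bY=\bR\bV$ transfers the conclusion back to binary $\bV$. Your hedge about the final selection step is also warranted: the algorithm as written selects via a sketched criterion, whereas the paper's proof simply asserts $\|\bU'\bV'-\widetilde{\bA}\|_F^2\le\|\bU''\bV''-\widetilde{\bA}\|_F^2$ as if the selection were on true cost; your ``evaluate exact residuals'' option matches the paper's intended reading and its claimed runtime.
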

\begin{proof}
Let $\bV''=\argmin_{\bV\in\{0,1\}^{k\times d}}\|\bS\bU^*\bV-\widetilde{\bA}\|_F^2$ and let $\bU''=\argmin_{\bU\in\{0,1\}^{N\times k}}\|\bS\bU\bV''-\widetilde{\bA}\|_F^2$
Since the algorithm chooses $\bU'$ and $\bV'$ over $\bU''$ and $\bV''$, then
\[\|\bU'\bV'-\widetilde{\bA}\|_F^2\le\|\bU''\bV''-\widetilde{\bA}\|_F^2.\]
Due to the optimality of $\bU''$,
\[\|\bU''\bV''-\widetilde{\bA}\|_F^2\le\|\bU^*\bV''-\widetilde{\bA}\|_F^2.\]
Let $\bX=\bU^*\bV^*-\widetilde{\bA}$. 
Note that since $\bU^*$ has orthonormal columns, then by \lemref{lem:amm}, the leverage score sampling matrix $\bS$ achieves approximate matrix multiplication with probability at least $0.99$. 
By \thmref{thm:se}, the matrix $\bS$ also is a subspace embedding for $\bU$. 
Thus, $\bS$ meets the criteria for applying \thmref{thm:affine}. 
Then for the correct guess $\bD\bT$ of matrix $\bS$ corresponding to $\bU^*$ and conditioning on the correctness of $\bS$ in \thmref{thm:affine}, 
\[\|\bU^*\bV''-\widetilde{\bA}\|_F^2\le\frac{1}{1-2\eps}[\|\bS\bU^*\bV''-\bS\widetilde{\bA}\|_F^2-\|\bS\bX\|_F^2+\|\bX\|_F^2.]\]
Due to the optimality of $\bV''$,
\[\frac{1}{1-2\eps}[\|\bS\bU^*\bV''-\bS\widetilde{\bA}\|_F^2-\|\bS\bX\|_F^2+\|\bX\|_F^2]\le\frac{1}{1-2\eps}[\|\bS\bU^*\bV^*-\bS\widetilde{\bA}\|_F^2-\|\bS\bX\|_F^2+\|\bX\|_F^2].\]
Then again conditioning on the correctness of $\bS$,
\begin{align*}
\frac{1}{1-2\eps}[\|\bS\bU^*\bV^*&-\bS\widetilde{\bA}\|_F^2-\|\bS\bX\|_F^2+\|\bX\|_F^2]\\
&\le\frac{1}{1-2\eps}[(1+2\eps)\|\bU^*\bV^*-\widetilde{\bA}\|_F^2+\|\bS\bX\|_F^2-\|\bX\|_F^2-\|\bS\bX\|_F^2+\|\bX\|_F^2]\\
&\le(1+6\eps)\|\bU^*\bV^*-\widetilde{\bA}\|_F^2,
\end{align*}
for sufficiently small $\eps$, e.g., $\eps<\frac{1}{10}$. 
Thus, putting things together, we have that conditioned on the correctness of $\bS$ in \thmref{thm:affine},
\[\|\bU'\bV'-\widetilde{\bA}\|_F^2\le(1+6\eps)\|\bU^*\bV^*-\widetilde{\bA}\|_F^2.\]
Since the approximate matrix multiplication property of \lemref{lem:amm}, the subspace embedding property of \thmref{thm:se}, and the affine embedding property of \thmref{thm:affine} all fail with probability at most $0.01$, then by a union bound, $\bS$ succeeds with probability at least $0.97$. 
\end{proof}

We now analyze the runtime of the subroutine \algref{alg:distinct:lra}. 
\begin{lemma}
\lemlab{lem:runtime:subroutine}
\algref{alg:distinct:lra} uses $2^{\O{m^2+m\log t}}\poly(N,d)$ runtime for $m=\O{\frac{k\log k}{\eps^2}}$. 
\end{lemma}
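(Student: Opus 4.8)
The plan is to bound the total runtime by the number of guesses of the pair $(\bS\bU^*,\bS\widetilde{\bA})$ enumerated in the first loop, multiplied by the (much cheaper) cost incurred per guess, and then to check that the second loop and the closing minimization are no more expensive. First I would count the guesses. Writing the leverage score sampling matrix as $\bS=\bD\bT$ (as in \thmref{thm:lev:score:sample}), the matrix $\bT\in\mathbb{R}^{m\times N}$ has a single one per row, selecting a row of $\bU^*$ and the corresponding row of $\widetilde{\bA}$, and $\bD$ is diagonal with the rescaling factors, which the algorithm restricts to powers of two up to $\poly(N)$. Hence each of the $m$ rows of a candidate $\bS\bU^*$ is one of the $\O{\log N}$ admissible power-of-two scalings times one of the $2^k$ vectors in $\{0,1\}^{1\times k}$, while each row of a candidate $\bS\widetilde{\bA}$ is that same scaling times one of the at most $t$ distinct rows of $\widetilde{\bA}$. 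Since the two matrices share the diagonal $\bD$, the per-row joint guess space has size at most $\O{\log N}\cdot 2^k\cdot t$, so the number of guesses is at most $G:=(\O{\log N}\cdot 2^k\cdot t)^m$. (Not enforcing that repeated row selections agree only loosens this upper bound, which is all we need.)

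Next I would simplify the exponent. Taking logs, $\log_2 G=\O{mk+m\log t+m\log\log N}$. Using $k\le m$ (which holds since $m=\O{\frac{k\log k}{\eps^2}}$ samples at least $k$ rows), we get $mk=\O{m^2}$; and the term $m\log\log N$ is dominated either by $m^2$ or by a subpolynomial factor in $N$, so it is absorbed into $2^{\O{m^2}}\poly(N)$. This yields $G=2^{\O{m^2+m\log t}}\poly(N)$.

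Then I would bound the per-guess work and the remaining steps. For each guess, \algref{alg:compute:v} solves the $d$ columns of $\bV$ independently, enumerating the $2^k$ binary vectors and evaluating an $\O{mk}$-cost objective each, for a total of $\O{2^k\,mkd}=2^{\O m}\poly(N,d)$, where the $\poly(m)$ and $2^{\O k}$ factors are folded into $2^{\O m}$. The first loop therefore costs $G\cdot 2^{\O m}\poly(N,d)=2^{\O{m^2+m\log t}}\poly(N,d)$, since $2^{\O m}\le 2^{\O{m^2}}$. The set $V$ satisfies $|V|\le G$; for each $\bV\in V$, \algref{alg:compute:u} optimizes each of the $N$ rows of $\bU$ over the $2^k$ binary vectors at $\O{kd}$ cost each, i.e. $\O{2^k Nkd}=2^{\O m}\poly(N,d)$ per candidate, and the final minimization over $V$ evaluates a single Frobenius objective per candidate in $\poly(N,d)$ time. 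Summing these three contributions, each of the form $G\cdot 2^{\O m}\poly(N,d)$, gives the claimed $2^{\O{m^2+m\log t}}\poly(N,d)$ bound; note that no leverage scores are ever computed, since $\bS\bU^*$ and $\bS\widetilde{\bA}$ are guessed rather than formed.

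The main obstacle is the guess-counting step: one must argue carefully that restricting the rescaling entries of $\bD$ to the $\O{\log N}$ powers of two keeps the per-row guess space finite, and that the \emph{shared} diagonal $\bD$ lets the $\bS\bU^*$ and $\bS\widetilde{\bA}$ enumerations be coupled into a single factor of $(\O{\log N}\cdot 2^k\cdot t)^m$, rather than the product of two independent $m$-fold enumerations. Once this count is established, the simplification $mk=\O{m^2}$, the absorption of the $m\log\log N$ term, and the per-guess polynomial-cost bookkeeping are all routine.
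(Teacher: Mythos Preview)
Your proposal is correct and follows essentially the same approach as the paper's proof: decompose $\bS=\bD\bT$, count the guesses by bounding the row-selection choices (at most $t$ distinct rows of $\widetilde{\bA}$, $2^k$ for $\bU^*$) and the $\O{\log N}$ power-of-two scalings, absorb $m\log\log N$ into $2^{\O{m^2}}\poly(N)$ via the case split $\log N\lessgtr 2^m$, and then verify the per-guess work of \algref{alg:compute:v} and \algref{alg:compute:u} is $2^{\O{k}}\poly(N,d)$. The only organizational difference is that you couple the enumeration of $\bS\bU^*$ and $\bS\widetilde{\bA}$ through the shared diagonal $\bD$ into a single per-row factor $\O{\log N}\cdot 2^k\cdot t$, whereas the paper counts $\bT$, $\bD$, and $\bS\bU^*$ separately; your coupling is slightly tighter but both yield the same $2^{\O{m^2+m\log t}}\poly(N,d)$ bound.
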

\begin{proof}
We analyze the number of possible guesses $\bD$ and $\bT$ corresponding to guesses of $\bS\widetilde{\bA}$ (see the remark after \thmref{thm:lev:score:sample}). 
There are at most $\binom{t}{m}=2^{\O{m\log t}}$ distinct subsets of $m=\O{\frac{k\log k}{\eps^2}}$ rows of $\widetilde{\bA}$. 
Thus there are $2^{\O{m\log t}}$ possible matrices $\bT$ that selects $m$ rows of $\widetilde{\bA}$, for the purposes of leverage score sampling.  
Assuming the leverage score sampling matrix does not sample any rows with leverage score less than $\frac{1}{\poly(N)}$, then there are $\O{\log N}^m=2^{\O{m\log\log N}}$ total guesses for the matrix $\bD$. 
Note that $\log n\le 2^m$ implies that $2^{\O{m\log\log N}}\le 2^{\O{m^2}}$ while $\log N>2^m$ implies that $2^{\O{m\log\log N}}\le2^{\O{\log^2\log N}}\le N$. 
Therefore, there are at most $2^{\O{m^2+m\log t}} N$ total guesses for all combinations of $\bT$ and $\bD$, corresponding to all guesses of $\bS\widetilde{\bA}$. 

For each guess of $\bS$ and $\bS\widetilde{\bA}$, we also need to guess $\bS\bU^*$. 
Since $\bU^*\in\{0,1\}^{N\times k}$ is binary and $\bT$ samples $m$ rows before weighting each row with one of $\O{\log N}$ possible weights, the number of total guesses for $\bS\bU^*$ is $(2\cdot\O{\log N})^{mk}$. 

Given guesses for $\bS\bA$ and $\bS\bU^*$, we can then compute $\argmin_{\bV\in\{0,1\}^{k\times d}}\|\bS\bU^*\bV-\bS\bA\|_F^2$ using $\O{2^k d}$ time through the subroutine \algref{alg:compute:v}, which enumerates through all possible $2^k$ binary vectors for each column. 
For a fixed $\bV$, we can then compute $\bU_\bV=\argmin_{\bU\in\{0,1\}^{N\times k}}\|\bU\bV-\bA\|_F^2$ using $\O{2^k N}$ time through the subroutine \algref{alg:compute:u}, which enumerates through all possible $2^k$ binary vectors for each row of $\bU_\bV$. 
Therefore, the total runtime of \algref{alg:distinct:lra} is $2^{\O{m^2+m\log t}}\poly(N,d)$. 
\end{proof}



We recall the following construction for a strong $\eps$-coreset for $k$-means clustering. 
\begin{theorem}[Theorem 36 in \cite{FeldmanSS20}]
\thmlab{thm:strong:coreset}
Let $X\subset\mathbb{R}^d$ be a subset of $n$ points, $\eps\in(0,1)$ be an accuracy parameter, and let $t=\O{\frac{k^3\log^2 k}{\eps^4}}$. 
There exists an algorithm that uses $\O{nd^2+n^2d+\frac{nkd}{\eps^2}+\frac{nk^2}{\eps^2}}$ time and outputs a set of $t$ weighted points that is a strong $\eps$-coreset for $k$-means clustering with probability at least $0.99$. 
Moreover, each point has an integer weight that is at most $\poly(n)$. 
\end{theorem}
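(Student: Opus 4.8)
The plan is to prove this via the \emph{sensitivity sampling} (importance sampling) framework of Feldman--Langberg, augmented with a dimensionality-reduction step so that the final coreset size is independent of the ambient dimension $d$. The high-level idea is that a point contributes a large fraction of the $k$-means objective only for center sets that place all centers far from it; by quantifying this worst-case relative contribution (the \emph{sensitivity}) and sampling proportionally to it, a small weighted sample reproduces $\Cost(X,S,w)$ up to a $(1\pm\eps)$ factor simultaneously for every set $S$ of $k$ centers.

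First I would compute a constant-factor approximate $k$-means solution within the stated running time --- e.g. via a bicriteria or $k$-means$++$-type seeding --- yielding reference centers that partition $X$ into clusters $C_1,\dots,C_k$ with total cost $\Delta=\Theta(\OPT)$. Next, for each point $x\in C_j$ with center $c_j$, I would upper bound its sensitivity $\sigma(x)=\sup_{S}\frac{\min_{s\in S}\|x-s\|_2^2}{\Cost(X,S)}$ by a quantity of the form $\O{\frac{\|x-c_j\|_2^2}{\Delta}+\frac{1}{|C_j|}}$, using the triangle inequality to compare any candidate set $S$ against the reference centers. Summing these bounds over all points gives a total sensitivity of $\O{k}$: the residual-cost terms sum to a constant by the approximation guarantee, while the $\frac{1}{|C_j|}$ terms sum to $k$ over the $k$ clusters. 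I would then draw $t$ i.i.d.\ samples with probability proportional to these sensitivity upper bounds and assign each sampled point the reweighting factor $\frac{1}{tp_i}$, so that the resulting weighted estimator is unbiased for $\Cost(X,S,w)$ at every fixed $S$.

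The crux is promoting this per-$S$ guarantee to a uniform-over-all-$S$ guarantee. Here I would invoke a VC/pseudo-dimension concentration bound, under which the number of samples needed scales like $\tO{\calS\cdot\dim/\eps^2}$, where $\calS=\O{k}$ is the total sensitivity and $\dim$ is the pseudo-dimension of the range space induced by squared-distance-to-$k$-centers functions. Since this dimension naively grows with $d$, the key additional ingredient is to first project $X$ onto its top $\O{k/\eps}$ principal directions via SVD, which preserves the $k$-means cost up to a $(1\pm\eps)$ factor plus a fixed additive residual that can be folded into the weights; in the projected space the pseudo-dimension is $\O{k\log k}$, independent of $d$. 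Combining the $\O{k}$ total sensitivity with this dimension yields the stated size $t=\O{k^3\log^2 k/\eps^4}$. Finally, I would round the weights $\frac{1}{tp_i}$ to integers at the cost of another $(1\pm\eps)$ factor, noting that after discarding negligible-sensitivity points the sampling probabilities are at least $1/\poly(n)$, so the integer weights are bounded by $\poly(n)$. I expect the main obstacle to be this uniform-convergence step: establishing the correct pseudo-dimension bound and making the concentration inequality hold simultaneously over the continuous family of all $k$-center configurations, which is exactly where the dimension-reduction step is required to keep the sample count from depending on $d$.
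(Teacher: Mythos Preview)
This theorem is not proved in the paper; it is quoted verbatim as Theorem~36 of \cite{FeldmanSS20} and used as a black box. There is therefore no ``paper's own proof'' to compare against.

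That said, your sketch is a faithful outline of the argument in the cited reference: compute a bicriteria approximation to bound sensitivities by $\O{\frac{\|x-c_j\|_2^2}{\Delta}+\frac{1}{|C_j|}}$ so that the total sensitivity is $\O{k}$, reduce the effective dimension via an SVD-based projection so that the pseudo-dimension of the induced range space is $\poly(k)$ rather than depending on $d$, and then apply the Feldman--Langberg importance-sampling bound to get $t=\O{k^3\log^2 k/\eps^4}$. The rounding-to-integer-weights step at the end is also in line with how the result is used here. Nothing in your plan is wrong or missing at this level of detail; just be aware that for the purposes of the present paper no proof is expected, only the citation.
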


\begin{algorithm}[!htb]
\caption{Low-rank approximation for matrix $\bA$}
\alglab{alg:lra}
\begin{algorithmic}[1]
\Require{$\bA\in\{0,1\}^{n\times d}$, rank parameter $k$, accuracy parameter $\eps>0$}
\Ensure{$\bU'\in\{0,1\}^{n\times k}, \bV'\in\{0,1\}^{k\times d}$ satisfying the property that $\|\bU'\bV'-\bA\|_F^2\le(1+\eps)\min_{\bU\in\{0,1\}^{n\times k}, \bV\in\{0,1\}^{k\times d}}\|\bU\bV-\bA\|_F^2$}
\State{$t\gets\O{\frac{2^{3k}k^2}{\eps^4}}$}
\Comment{\thmref{thm:strong:coreset} for $2^k$-means clustering}
\State{Compute a strong coreset $C$ for $2^k$-means clustering of $\bA$, with size $t$ and total weight $N=\poly(n)$}
\State{Let $\widetilde{\bA}\in\{0,1\}^{N\times d}$ be the matrix representation of $C$, where weighted points are duplicated appropriately}
\State{Let $(\widetilde{\bU},\widetilde{\bV})$ be the output of \algref{alg:distinct:lra} on input $\widetilde{\bA}$}
\State{$\bU'\gets\argmin_{\bU\in\{0,1\}^{n\times k}}\|\bU\widetilde{\bV}-\bA\|_F^2$, $\bV'\gets\widetilde{\bV}$}
\Comment{\algref{alg:compute:u}}
\State{\Return $(\bU',\bV')$}
\end{algorithmic}
\end{algorithm}

We now justify the correctness of \algref{alg:lra}. 
\begin{lemma}
\lemlab{lem:correctness:lra}
With probability at least $0.95$, \algref{alg:lra} returns $\bU',\bV'$ such that
\[\|\bU'\bV'-\bA\|_F^2\le(1+\eps)\min_{\bU\in\{0,1\}^{n\times k},\bV\in\{0,1\}^{k\times d}}\|\bU\bV-\bA\|_F^2.\]
\end{lemma}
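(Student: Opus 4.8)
The plan is to use the strong coreset guarantee of \thmref{thm:strong:coreset} to transfer the approximation guarantee for $\widetilde{\bA}$ proved in \lemref{lem:correctness:subroutine} back to the original matrix $\bA$, losing only a further $(1\pm\eps)$ factor in each direction. The conceptual bridge is the observation that binary matrix factorization with a \emph{fixed} right factor is exactly a $2^k$-means clustering cost. Concretely, for any $\bV\in\{0,1\}^{k\times d}$ let $S_\bV=\set{\bx\bV : \bx\in\{0,1\}^k}$ be the set of at most $2^k$ candidate rows producible by $\bV$. Since minimizing over the left factor decomposes row by row, I would first record the two identities
\[\min_{\bU\in\{0,1\}^{n\times k}}\|\bU\bV-\bA\|_F^2=\Cost(\bA,S_\bV),\qquad \min_{\bX\in\{0,1\}^{N\times k}}\|\bX\bV-\widetilde{\bA}\|_F^2=\Cost(C,S_\bV,w),\]
where the rows of $\bA$ and the weighted coreset $C$ play the role of the input point set. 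The point of building the coreset for $2^k$-means (rather than $k$-means) is precisely that $|S_\bV|\le 2^k$, so the strong coreset inequality applies to every such $S_\bV$ simultaneously.

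Next I would establish the forward direction, bounding $\widetilde{\OPT}:=\min_{\bU,\bV}\|\bU\bV-\widetilde{\bA}\|_F^2$ by $\OPT:=\min_{\bU,\bV}\|\bU\bV-\bA\|_F^2$. Taking optimal factors $\bU^*,\bV^*$ for $\bA$ and writing $S^*=S_{\bV^*}$, one has $\|\bU^*\bV^*-\bA\|_F^2\ge\Cost(\bA,S^*)$ because the factorization commits to one center per row while the clustering cost takes the best center. Applying the upper inequality of the strong coreset to the $\le 2^k$ centers $S^*$ gives $\Cost(C,S^*,w)\le(1+\eps)\Cost(\bA,S^*)\le(1+\eps)\OPT$, and since $\widetilde{\OPT}\le\min_\bX\|\bX\bV^*-\widetilde{\bA}\|_F^2=\Cost(C,S^*,w)$, I obtain $\widetilde{\OPT}\le(1+\eps)\OPT$. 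Feeding this into \lemref{lem:correctness:subroutine} yields $\|\widetilde{\bU}\widetilde{\bV}-\widetilde{\bA}\|_F^2\le(1+6\eps)\widetilde{\OPT}\le(1+6\eps)(1+\eps)\OPT$ for the factors returned by \algref{alg:distinct:lra}.

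The reverse direction pulls the computed solution back to $\bA$. The algorithm sets $\bV'=\widetilde{\bV}$ and $\bU'=\argmin_{\bU}\|\bU\widetilde{\bV}-\bA\|_F^2$, so $\|\bU'\bV'-\bA\|_F^2=\Cost(\bA,S_{\widetilde{\bV}})$. Invoking the lower inequality of the strong coreset for the center set $S_{\widetilde{\bV}}$ gives $(1-\eps)\Cost(\bA,S_{\widetilde{\bV}})\le\Cost(C,S_{\widetilde{\bV}},w)\le\|\widetilde{\bU}\widetilde{\bV}-\widetilde{\bA}\|_F^2$, where the last step is optimality of $\min_\bX$ over the produced $\widetilde{\bU}$. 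Chaining the two directions,
\[\|\bU'\bV'-\bA\|_F^2\le\frac{1}{1-\eps}\|\widetilde{\bU}\widetilde{\bV}-\widetilde{\bA}\|_F^2\le\frac{(1+6\eps)(1+\eps)}{1-\eps}\OPT=(1+\O{\eps})\OPT,\]
and rescaling $\eps$ by a suitable constant turns the right-hand side into $(1+\eps)\OPT$ while changing the coreset size and subroutine runtime only by constants in the exponent. For the probability, I would union-bound the $0.01$ failure probability of the coreset construction (\thmref{thm:strong:coreset}) against the $0.03$ failure probability of \lemref{lem:correctness:subroutine}, giving success probability at least $0.96\ge 0.95$.

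The step I expect to be the main obstacle --- and the reason the coreset must be \emph{strong} --- is the reverse direction: the center set $S_{\widetilde{\bV}}$ is produced by the algorithm and is not known in advance, so I cannot tailor it to the coreset's optimal clustering. A weak coreset, preserving only the cost of the optimal clustering, would be useless here; I need the guarantee to hold uniformly over all $\le 2^k$-center sets so that it applies to the data-dependent $S_{\widetilde{\bV}}$. Verifying that the fixed-$\bV$ factorization cost is genuinely a clustering cost over the structured center family $S_\bV$, and that passing to these structured centers never violates the $2^k$-center budget, is the one place demanding care.
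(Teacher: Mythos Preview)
Your proposal is correct and follows essentially the same approach as the paper: both recognize that fixing $\bV$ turns BMF into a $2^k$-means clustering cost over the center set $S_\bV$, apply the strong-coreset guarantee in both directions (once for the algorithm's $\widetilde{\bV}$, once for the optimal $\bV^*$), invoke \lemref{lem:correctness:subroutine} on $\widetilde{\bA}$, and rescale $\eps$. The paper phrases the two coreset applications via indicator/selection matrices $\widetilde{\bM}$ and $\bM^*$ rather than the $\Cost(\cdot)$ notation, and is less explicit about the union bound for the $0.95$ probability, but the logical structure is identical.
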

\begin{proof}
Let $\widetilde{\bM}$ be the indicator matrix that selects a row of $\widetilde{\bU}\widetilde{\bV}=\widetilde{\bU}\bV'$ to match to each row of $\bA$, so that by the optimality of $\bU'$, 
\[\|\bU'\bV'-\bA\|_F^2\le\|\widetilde{\bM}\widetilde{\bU}\widetilde{\bV}-\bA\|_F^2.\]

Note that any $\bV$ is a set of $k$ points in $\{0,1\}^d$ and so each row $\bU_i$ of $\bU$ induces one of at most $2^k$ possible points $\bU_i\bV\in\{0,1\}^d$. 
Hence $\|\bU\bV-\bA\|_F^2$ is the objective value of a constrained $2^k$-means clustering problem. 
Thus by the choice of $t$ in \thmref{thm:strong:coreset}, we have that $\widetilde{\bA}$ is a strong coreset, so that
\[\|\widetilde{\bM}\widetilde{\bU}\widetilde{\bV}-\bA\|_F^2\le(1+\eps)\|\widetilde{\bU}\widetilde{\bV}-\widetilde{\bA}\|_F^2.\] 
Let $\bU^*\in\{0,1\}^{n\times k}$ and $\bV^*\in\{0,1\}^{k\times d}$ such that 
\[\|\bU^*\bV^*-\bA\|_F^2=\min_{\bU\in\{0,1\}^{n\times k},\bV\in\{0,1\}^{k\times d}}\|\bU\bV-\bA\|_F^2.\]
Let $\bM^*$ be the indicator matrix that selects a row of $\bU^*\bV^*$ to match to each row of $\widetilde{\bA}$, so that by \lemref{lem:correctness:subroutine}, 
\[(1+\eps)\|\widetilde{\bU}\widetilde{\bV}-\widetilde{\bA}\|_F^2\le(1+\eps)^2\|\bM^*\bU^*\bV^*-\widetilde{\bA}\|_F^2.\]
Then by the choice of $t$ in \thmref{thm:strong:coreset}, we have that
\[(1+\eps)^2\|\bM^*\bU^*\bV^*-\widetilde{\bA}\|_F^2\le(1+\eps)^3\|\bU^*\bV^*-\bA\|_F^2.\]
The desired claim then follows from rescaling $\eps$. 
\end{proof}
We now analyze the runtime of \algref{alg:lra}. 
\begin{lemma}
\lemlab{lem:runtime:lra}
\algref{alg:lra} uses $2^{\tO{k^2/\eps^4}}\poly(n,d)$ runtime. 
\end{lemma}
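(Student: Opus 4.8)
The plan is to bound the runtime by summing the cost of the three substantive steps of \algref{alg:lra}: (i) computing the strong coreset (Lines 1--3), (ii) running the subroutine \algref{alg:distinct:lra} on $\widetilde{\bA}$ (Line 4), and (iii) fitting $\bU'$ to the recovered $\bV'$ via \algref{alg:compute:u} (Line 5). I expect the coreset step and the final fitting step to each contribute only $2^{\O{k}}\poly(n,d)$, so that the dominant term comes entirely from (ii), and the whole analysis reduces to feeding the right parameters into \lemref{lem:runtime:subroutine}.

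For step (i), the crucial observation is that although the rank parameter of the BMF instance is $k$, the clustering problem underlying the coreset has $2^k$ centers, since each row $\bU_i\bV$ ranges over at most $2^k$ distinct points of $\{0,1\}^d$. I would therefore invoke \thmref{thm:strong:coreset} with $2^k$ in place of $k$: this produces $t=\O{\frac{(2^k)^3\log^2(2^k)}{\eps^4}}=\O{\frac{2^{3k}k^2}{\eps^4}}$ weighted points (matching the choice of $t$ in Line 1) in time $\O{nd^2+n^2d+\frac{n2^k d}{\eps^2}+\frac{n2^{2k}}{\eps^2}}=2^{\O{k}}\poly(n,d)$, each with integer weight at most $\poly(n)$. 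Duplicating rows according to these weights gives $\widetilde{\bA}\in\{0,1\}^{N\times d}$ with $N\le t\cdot\poly(n)=2^{\O{k}}\poly(n)/\eps^4$ rows but only $t$ distinct rows; in particular $\log N=\O{\log n+k+\log(1/\eps)}$, which is all the subroutine analysis actually needs.

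The heart of the bound is step (ii). Here I would simply apply \lemref{lem:runtime:subroutine}, which gives runtime $2^{\O{m^2+m\log t}}\poly(N,d)$ for $m=\O{\frac{k\log k}{\eps^2}}$. The point I most want to get right is that the sampling dimension $m$ depends on the \emph{BMF rank} $k$ (the number of columns of $\bU^*$, which governs the leverage-score subspace embedding), whereas the number of distinct rows $t$ depends on $2^k$ (the number of clustering centers); conflating the two would spoil the exponent. Substituting the values, $m^2=\O{\frac{k^2\log^2 k}{\eps^4}}$ and $\log t=\O{k+\log(1/\eps)}$, so $m\log t=\O{\frac{k^2\log k}{\eps^2}+\frac{k\log k\log(1/\eps)}{\eps^2}}$; for $\eps<1$ the term $\frac{k^2\log^2 k}{\eps^4}$ dominates, and hence $2^{\O{m^2+m\log t}}=2^{\O{\frac{k^2\log^2 k}{\eps^4}}}=2^{\tO{k^2/\eps^4}}$. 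Since $N$ is $\poly(n)$ up to the $2^{\O{k}}/\eps^4$ factor that is absorbed into the exponential, $\poly(N,d)=\poly(n,d)$, so step (ii) costs $2^{\tO{k^2/\eps^4}}\poly(n,d)$.

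Finally, step (iii) fits $\bU'$ row by row over $\bA$: \algref{alg:compute:u} enumerates all $2^k$ binary rows for each of the $n$ rows of $\bU'$, evaluating each candidate against $\widetilde{\bV}\in\{0,1\}^{k\times d}$ in $\O{kd}$ time, for a total of $2^{\O{k}}\poly(n,d)$. Summing the three contributions, the subroutine dominates and the overall runtime is $2^{\tO{k^2/\eps^4}}\poly(n,d)$, as claimed. The only real care required is the bookkeeping just described---tracking the $2^k$-versus-$k$ dependence and checking that $\log t$ and $\log N$ enter only as lower-order terms in the exponent---so I do not anticipate any genuine obstacle beyond this accounting.
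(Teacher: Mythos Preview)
Your proposal is correct and follows essentially the same three-step decomposition as the paper's proof: bound the coreset construction via \thmref{thm:strong:coreset}, invoke \lemref{lem:runtime:subroutine} with $m=\O{\frac{k\log k}{\eps^2}}$ and $t=\O{\frac{2^{3k}k^2}{\eps^4}}$ for the subroutine, and account for the final $\O{2^k n}$ fitting of $\bU'$. If anything, you are slightly more careful than the paper in tracking the $2^k$-versus-$k$ dependence in the coreset runtime and in explaining why $\poly(N,d)$ can be replaced by $\poly(n,d)$.
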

\begin{proof}
By \thmref{thm:strong:coreset}, it follows that \algref{alg:lra} uses $\O{nd^2+n^2d+\frac{nkd}{\eps^2}+\frac{nk^2}{\eps^2}}$ time to compute $\widetilde{\bA}\in\{0,1\}^{N\times d}$ with $N=\poly(n)$. 
By \lemref{lem:runtime:subroutine}, it follows that \algref{alg:distinct:lra} on input $\widetilde{\bA}$ thus uses runtime $2^{\O{m^2+m\log t}}\poly(N,d)$ for $m=\O{\frac{k\log k}{\eps^2}}$ and $t=\O{\frac{2^{3k}k^2}{\eps^4}}$. 
Finally, computing $\bU'$ via \algref{alg:compute:u} takes $\O{2^k n}$ time after enumerating through all possible $2^k$ binary vectors for each row of $\bU'$. 
Therefore, the total runtime of \algref{alg:lra} is $2^{\tO{\frac{k^2\log^2 k}{\eps^4}}}\poly(n,d)=2^{\tO{k^2/\eps^4}}\poly(n,d)$.
\end{proof}

Combining \lemref{lem:correctness:lra} and \lemref{lem:runtime:lra}, we have:
\begin{theorem}
There exists an algorithm that uses $2^{\tO{k^2/\eps^4}}\poly(n,d)$ runtime and with probability at least $\frac{2}{3}$, outputs $\bU'\in\{0,1\}^{n\times k}$ and $\bV'\in\{0,1\}^{k\times d}$ such that
\[\|\bU'\bV'-\bA\|_F^2\le(1+\eps)\min_{\bU\in\{0,1\}^{n\times k},\bV\in\{0,1\}^{k\times d}}\|\bU\bV-\bA\|_F^2.\]
\end{theorem}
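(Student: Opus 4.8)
The plan is to take Algorithm \algref{lra} as the claimed algorithm and assemble the two lemmas already established for it. The statement has exactly two components—a runtime bound and a correctness-with-probability guarantee—and each is handed to us directly by one of the preceding lemmas, so the only work is bookkeeping around the success probability and the accuracy parameter.

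First I would invoke \lemref{correctness:lra}, which asserts that with probability at least $0.95$ the factors $\bU',\bV'$ returned by \algref{lra} satisfy $\|\bU'\bV'-\bA\|_F^2 \le (1+\eps)\min_{\bU,\bV}\|\bU\bV-\bA\|_F^2$. Since $0.95 \ge \tfrac{2}{3}$, the required success probability is met immediately and no probability amplification is needed. Should one want a higher-confidence guarantee, running $\O{\log(1/\delta)}$ independent copies and returning the factors of least Frobenius cost—computable in $\poly(n,d)$ time once the factors are fixed—would suffice, but this is unnecessary for the stated threshold.

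Next I would make explicit the accuracy rescaling hidden in the phrase ``rescaling $\eps$'' at the end of the proof of \lemref{correctness:lra}. That proof accumulates a $(1+\eps)^3$ factor from three sources—the strong-coreset guarantee of \thmref{strong:coreset} applied twice and the subroutine guarantee of \lemref{correctness:subroutine}. To obtain a clean $(1+\eps)$ bound I would run \algref{lra} with accuracy parameter $\eps' = c\eps$ for a small absolute constant $c$ chosen so that $(1+\eps')^3 \le 1+\eps$; for $\eps$ below an absolute constant this keeps $\eps' < \tfrac{1}{10}$, meeting the hypothesis of \lemref{correctness:subroutine}, and it alters only constants.

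Finally I would pull in \lemref{runtime:lra}, which gives runtime $2^{\tO{k^2/\eps^4}}\poly(n,d)$ for \algref{lra}; the constant-factor rescaling of $\eps$ from the previous step is absorbed by the $\tilde{\mathcal{O}}(\cdot)$ and $\poly(\cdot)$ notation. Combining the correctness guarantee at probability $\ge \tfrac{2}{3}$ with this runtime yields the theorem. I expect no genuine obstacle here: the substantive arguments—the strong-coreset reduction to $2^{\poly(k/\eps)}$ distinct rows, and the leverage-score affine embedding that upgrades the guessed sketch to a $(1+\eps)$-approximation—are already discharged inside \lemref{correctness:subroutine} and \lemref{correctness:lra}. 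What remains is only to confirm that $0.95$ clears the $\tfrac{2}{3}$ threshold and that the $\eps$-rescaling is mutually consistent across the two lemmas.
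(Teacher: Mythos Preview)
Your proposal is correct and matches the paper's own proof, which is simply the one-line statement ``Combining \lemref{lem:correctness:lra} and \lemref{lem:runtime:lra}, we have'' the theorem. Your additional bookkeeping around the $0.95 \ge \tfrac{2}{3}$ probability check and the $\eps$-rescaling is more explicit than what the paper writes but entirely in the same spirit.
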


\section{\texorpdfstring{$\mathbb{F}_2$}{GF2} Low-Rank Approximation}
\seclab{sec:lra:frob:ftwo}
In this section, we present a $(1+\eps)$-approximation algorithm for binary low-rank approximation on $\mathbb{F}_2$, where to goal is to find matrices $\bU\in\{0,1\}^{n\times k}$ and $\bV\in\{0,1\}^{k\times d}$ to minimize the Frobenius norm loss $\|\bU\bV-\bA\|_F^2$, but now all operations are performed in $\mathbb{F}_2$. 
We would like to use the same approach as in \secref{sec:binary:lra}, i.e., to make guesses for the matrices $\bS\bU^*$ and $\bS\bA$ while ensuring there are not too many possibilities for these matrices. 
To do so for matrix operations over general integers, we chose $\bS$ to be a leverage score sampling matrix that samples rows from $\bU^*$ and $\bA$. 
We then used the approximate matrix multiplication property in \lemref{lem:amm} and the subspace embedding property in \thmref{thm:se} to show that $\bS$ provides an affine embedding in \thmref{thm:affine} over general integers. 
However, it no longer necessarily seems true that $\bS$ will provide an affine embedding over $\mathbb{F}_2$, in part because the subspace embedding property of $\bS$ computes leverage scores of each row of $\bU^*$ and $\bA$ with respect to general integers. 
Thus we require an alternate approach for matrix operations over $\mathbb{F}_2$. 

Instead, we form the matrix $\widetilde{\bA}$ by taking a strong coreset of $\bA$ and then duplicating the rows according to their weight $w_i$ to form $\widetilde{\bA}$. 
That is, if the $i$-th row $\bA_i$ of $\bA$ is sampled with weight $w_i$ in the coreset, then $\widetilde{\bA}$ will contain $w_i$ repetitions of the row $\bA_i$, where we note that $w_i$ is an integer. 
We then group the rows of $\widetilde{\bA}$ by their repetitions, so that group $\bG_j$ consists of the rows of $\widetilde{\bA}$ that are repeated $[(1+\eps)^j,(1+\eps)^{j+1})$ times. 
Thus if $\bA_i$ appears $w_i$ times in $\widetilde{\bA}$, then it appears a single time in group $\bG_j$ for $j=\flr{\log w_i}$. 

We perform entrywise $L_0$ low-rank approximation over $\mathbb{F}_2$ for each of the groups $\bG_j$, which gives low-rank factors $\bU^{(j)}$ and $\bV^{(j)}$. 
We then compute $\widetilde{\bU^{(j)}}\in\mathbb{R}^{n\times d}$ from $\bU^{(j)}$ by following procedure. 
If $\bA_i$ is in $\bG_j$, then we place the row of $\bU^{(j)}$ corresponding to $\bA_i$ into the $i$-th row of $\widetilde{\bU^{(j)}}$, for all $i\in[n]$. 
Note that the row of $\bU^{(j)}$ corresponding to $\bA_i$ may not be the $i$-th row of $\bU^{(j)}$, e.g., since $\bA_i$ will appear only once in $\bG_j$ even though it appears $w_i \in [(1+\eps)^j,(1+\eps)^{j+1})$ times in $\bA$. 
Otherwise if $\bA_i$ is not in $\bG_j$, then we set $i$-th row of $\widetilde{\bU^{(j)}}$ to be the all zeros row. 
We then achieve $\bV^{(j)}$ by padding accordingly. 
Finally, we collect \[\widetilde{\bU}=\begin{bmatrix}\widetilde{\bU^{(0)}}|\ldots|\widetilde{\bU^{(\ell)}}\end{bmatrix},\qquad\widetilde{\bV}\gets\widetilde{\bV^{(0)}}\circ\ldots\circ\widetilde{\bV^{(i)}}\]
to achieve bicriteria low-rank approximations $\widetilde{\bU}$ and $\widetilde{\bV}$ to $\widetilde{\bA}$. 
Finally, to achieve bicriteria low-rank approximations $\bU'$ and $\bV'$ to $\bA$, we require that $\bU'$ achieves the same block structure as $\widetilde{\bU}$. 
We describe this subroutine in \algref{alg:compute:u:gf2} and we give the full low-rank approximation bicriteria algorithm in \algref{alg:lra:gf2}. 

We first recall the following subroutine to achieve entrywise $L_0$ low-rank approximation over $\mathbb{F}_2$. 
Note that for matrix operations over $\mathbb{F}_2$, we have that the entrywise $L_0$ norm is the same as the entrywise $L_p$ norm for all $p$. 
\begin{lemma}[Theorem 3 in \cite{BanBBKLW19}]
\lemlab{lem:l0:sample:matrix}
For $\eps\in(0,1)$, there exists a $(1+\eps$)-approximation algorithm to entrywise $L_0$ rank-$k$ approximation over $\mathbb{F}_2$ running in $d\cdot n^{\poly(k/\eps)}$ time. 
\end{lemma}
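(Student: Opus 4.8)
The plan is to recast the task as a subspace-approximation (projective clustering) problem over $\mathbb{F}_2$ and to enumerate a small family of candidate row spaces. For a fixed $\bV\in\{0,1\}^{k\times d}$, the $i$-th row of $\bU\bV$ can be any of the at most $2^k$ vectors $\bx\bV$ with $\bx\in\{0,1\}^k$, and these are exactly the points of the subspace $W=\mathrm{rowspace}(\bV)\subseteq\mathbb{F}_2^d$. Hence, given $\bV$, the optimal $\bU$ is found row by row via $\bU_i=\argmin_{\bx\in\{0,1\}^k}\|\bx\bV-\bA_i\|_0$, costing $\O{2^k\,k\,d}$ per row and $\O{2^k\,n\,d}$ overall. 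The problem therefore reduces to finding a $k$-dimensional subspace $W$ minimizing $\sum_{i\in[n]}\min_{\bw\in W}\|\bw-\bA_i\|_0$, and it suffices to produce a list of size $n^{\poly(k/\eps)}$ of candidate subspaces guaranteed to contain a $(1+\eps)$-optimal one.

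To generate the list, I would exploit the structure of the optimum. The rows of the optimal $\bU^*$ take at most $2^k$ distinct values, so the optimum partitions $[n]$ into at most $2^k$ clusters, each cluster served by a single center lying in $W^*=\mathrm{rowspace}(\bV^*)$, and at most $k$ of these centers span $W^*$. Restricted to one cluster, the cost-minimizing center is the coordinate-wise Hamming majority of that cluster's rows, and a uniform sample of $\poly(k/\eps)$ of its rows estimates this majority well enough---coordinate by coordinate, by a Chernoff bound together with additivity of Hamming cost across coordinates---that the induced center is within the Hamming accuracy needed to preserve that cluster's cost up to a $(1+\eps)$ factor. Thus there is a set $\calR$ of $2^k\cdot\poly(k/\eps)$ rows, partitioned into $2^k$ groups, whose group majorities approximate all the true centers simultaneously. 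The algorithm enumerates all witnesses: it tries every choice of such a set $\calR$, of which there are $n^{\poly(k/\eps)}$, and every assignment of $\calR$ to $2^k$ groups ($2^{\poly(k/\eps)}$ of them), forms each candidate $W$ as the span of $k$ of the resulting majority vectors, solves $\bU$ by the row-by-row rule, and returns the best pair $(\bU,\bV)$.

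The running time is dominated by this enumeration: $n^{\poly(k/\eps)}$ choices of $\calR$, each inducing $2^{\poly(k/\eps)}$ candidate subspaces, each evaluated in $\O{2^k\,n\,d}$ time by the $\bU$-solve, for a total of $d\cdot n^{\poly(k/\eps)}$. Correctness follows by conditioning on the enumeration reaching the witness $\calR$ and partition guaranteed by the sampling step: the resulting $W$ together with the optimal row-by-row $\bU$ has cost at most $(1+\eps)\,\OPT$.

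I expect the subspace-recovery step to be the main obstacle, and it is exactly where the argument departs from the real case. Over the reals one samples a few rows and argues via a perturbation bound that their span already contains a near-optimal rank-$k$ subspace; over $\mathbb{F}_2$ this fails outright, since flipping a single Hamming coordinate of a vector can change the span of a set of vectors arbitrarily. The sampling must therefore operate at the level of clusters and their Hamming medians rather than at the level of spans, which forces both the extra enumeration over partitions of $\calR$ and a quantitative robustness property of the objective: perturbing a basis vector of $W$ in $t$ Hamming coordinates changes each row's distance to $W$ by at most $t$, so the total cost is stable provided each center is recovered to a per-cluster accuracy calibrated against that cluster's contribution to $\OPT$. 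Making these per-cluster $(1+\eps)$ guarantees combine correctly under the shared $k$-dimensional constraint---so that the union of the estimated centers determines a genuine dimension-$k$ subspace while simultaneously preserving the factor for every cluster---is the delicate accounting that the full proof of \cite{BanBBKLW19} must carry out.
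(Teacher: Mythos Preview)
The paper does not prove this lemma; it is quoted as Theorem~3 of \cite{BanBBKLW19} and used as a black box, so there is no in-paper argument to compare against. Your proposal is therefore really a sketch of how that external theorem might be proved.

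The high-level shape you describe---sample a small set of rows, enumerate the possible restrictions of $\bU^*$ to the sample, solve, and return the best pair---is the right one and is how \cite{BanBBKLW19} proceed. Two concrete steps in your sketch, however, do not hold up.

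First, the counting is inconsistent. You state $|\calR|=2^k\cdot\poly(k/\eps)$ (so that each of the $2^k$ clusters receives $\poly(k/\eps)$ samples), yet you then claim only $n^{\poly(k/\eps)}$ choices of $\calR$ and $2^{\poly(k/\eps)}$ partitions. With $|\calR|=2^k\cdot\poly(k/\eps)$ the enumeration is $n^{2^k\cdot\poly(k/\eps)}$ and the number of labelings is $(2^k)^{|\calR|}=2^{k\cdot 2^k\cdot\poly(k/\eps)}$, both doubly exponential in $k$, contradicting the stated $d\cdot n^{\poly(k/\eps)}$ bound. If you really mean $|\calR|=\poly(k/\eps)$ total, then most of the $2^k$ clusters receive no samples and your per-cluster majority argument evaporates.

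Second, and more fundamentally, the ``cluster majorities, then span $k$ of them'' route is not known to yield a $(1+\eps)$-good subspace. The coordinate-wise majority of a cluster is its best \emph{unconstrained} Hamming center, not its optimal center $c_j^*\in W^*$; the two can differ in $\Theta(C_j/m_j)$ coordinates, where $C_j$ and $m_j$ are the cluster's cost and size. When you span over $\mathbb{F}_2$, these Hamming perturbations of the basis propagate to every one of the $2^k$ codewords and then to every row, producing an additive error of order $n\sum_j C_j/m_j$ that is not bounded by $\eps\cdot\OPT$ without further assumptions on the cluster sizes. You correctly flag this as ``the delicate accounting'' you defer to \cite{BanBBKLW19}, but their argument does not pass through majorities-then-span at all: after guessing $\bU^*$ on a uniform sample of $\poly(k/\eps)$ rows, one solves the induced regression for $\bV$ directly on the sample (column by column, with only $2^k$ candidates per column) and argues that the sample-optimal $\bV$ is already $(1+\eps)$-optimal for the full matrix. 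That bypasses the span-stability issue entirely and keeps the sample size genuinely $\poly(k/\eps)$.
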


\begin{algorithm}[!htb]
\caption{Algorithm for computing optimal $\bU$ given $\bV^{(1)},\ldots,\bV^{(\ell)}$}
\alglab{alg:compute:u:gf2}
\begin{algorithmic}[1]
\Require{$\widetilde{\bA}\in\{0,1\}^{N\times d}$, $\bV^{(1)},\ldots,\bV^{(\ell)}\in\{0,1\}^{k\times d}$}
\Ensure{$\bU'=\argmin_{\bU\in\{0,1\}^{N\times \ell k}}\|\bU\bV-\widetilde{\bA}\|_F$, where $\bU$ is restricted to one nonzero block of $k$ coordinates}
\For{$i=1$ to $i=N$}
\State{Set $(\bU'_i,j')=\argmin_{\bU_i\in\{0,1\}^{1\times k},j\in[\ell]}\|\bU_i\bV^{(j)}-\widetilde{\bA}_i\|_2$}
\Comment{Enumerate over all $2^k$ possible binary vectors, all $\ell$ indices}
\State{Pad $\bU'_i$ with length $\ell k$, as the $j'$-th block of $k$ coordinates}
\EndFor
\State{\Return $\bU'=\bU'_1\circ\ldots\circ\bU'_N$}
\end{algorithmic}
\end{algorithm}

\begin{algorithm}[!htb]
\caption{Bicriteria low-rank approximation on $\mathbb{F}_2$ for matrix $\bA$}
\alglab{alg:lra:gf2}
\begin{algorithmic}[1]
\Require{$\bA\in\{0,1\}^{n\times d}$, rank parameter $k$, accuracy parameter $\eps>0$}
\Ensure{$\bU'\in\{0,1\}^{n\times k}, \bV'\in\{0,1\}^{k\times d}$ satisfying the property that $\|\bU'\bV'-\bA\|_F^2\le(1+\eps)\min_{\bU\in\{0,1\}^{n\times k}, \bV\in\{0,1\}^{k\times d}}\|\bU\bV-\bA\|_F^2$, where all matrix operations are performed in $\mathbb{F}_2$}
\State{$\ell\gets\O{\frac{\log n}{\eps}}$, $t\gets\O{\frac{(2^k\ell)^3k^2}{\eps^4}}$, $k'\gets\ell k$}
\Comment{\thmref{thm:strong:coreset} for $2^k$-means clustering}
\State{Compute a strong coreset $C$ for $2^k$-means clustering of $\bA$, with size $t$ and total weight $N=\poly(n)$}
\State{Let $\widetilde{\bA}\in\{0,1\}^{N\times d}$ be the matrix representation of $C$, where weighted points are duplicated appropriately}
\State{For $i\in[\ell]$, let $\bG^{(i)}$ be the group of rows (removing multiplicity) of $\widetilde{\bA}$ with frequency $[(1+\eps)^i,(1+\eps)^{i+1})$}
\State{Let $(\widetilde{\bU^{(i)}},\widetilde{\bV^{(i)}})$ be the output of \lemref{lem:l0:sample:matrix} on input $\bG^{(i)}$, padded to $\mathbb{R}^{n\times k}$ and $\mathbb{R}^{k\times d}$, respectively}
\State{$\widetilde{\bV}\gets\widetilde{\bV^{(0)}}\circ\ldots\circ\widetilde{\bV^{(\ell)}}$}
\State{Use \algref{alg:compute:u:gf2} with $\widetilde{\bV^{(0)}},\ldots,\widetilde{\bV^{(\ell)}}$ and $\bA$ to find $\bU'$}
\State{\Return $(\bU',\bV')$ with $\bV'=\widetilde{\bV}$}
\end{algorithmic}
\end{algorithm}

We first justify the correctness of \algref{alg:lra:gf2}. 
\begin{lemma}
\lemlab{lem:correctness:lra:gf2}
With probability at least $0.95$, \algref{alg:lra:gf2} returns $\bU',\bV'$ such that
\[\|\bU'\bV'-\bA\|_F^2\le(1+\eps)\min_{\bU\in\{0,1\}^{n\times k},\bV\in\{0,1\}^{k\times d}}\|\bU\bV-\bA\|_F^2,\]
where all matrix operations are performed in $\mathbb{F}_2$. 
\end{lemma}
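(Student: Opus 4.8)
The plan is to combine the strong coreset guarantee with the per-group $\mathbb{F}_2$ low-rank approximation, accounting for the error introduced by bucketing rows according to their multiplicity. First I would fix optimal factors $\bU^*\in\{0,1\}^{n\times k}$ and $\bV^*\in\{0,1\}^{k\times d}$ for $\bA$ over $\mathbb{F}_2$, and invoke \thmref{thm:strong:coreset} with $2^k$-means clustering so that $\widetilde{\bA}$ (the coreset with rows duplicated according to integer weights $w_i$) satisfies a two-sided $(1\pm\eps)$ bound relating the clustering cost on $\bA$ to the cost on $\widetilde{\bA}$ for \emph{any} set of $2^k$ centers. As in \lemref{lem:correctness:lra}, the key observation is that $\|\bU\bV-\bA\|_F^2$ is exactly a constrained $2^k$-means clustering objective (each row $\bU_i\bV$ ranges over at most $2^k$ possible points in $\{0,1\}^d$), so the coreset property applies verbatim and reduces correctness on $\bA$ to correctness on $\widetilde{\bA}$, up to factors of $(1+\eps)$.

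Next I would handle the error from partitioning $\widetilde{\bA}$ into the frequency groups $\bG^{(0)},\ldots,\bG^{(\ell)}$. The crucial point is that each row appearing $w_i$ times is placed in group $\bG^{(j)}$ with $j=\flr{\log_{1+\eps} w_i}$, so within a group all multiplicities agree up to a multiplicative factor of $(1+\eps)$. Thus if I run the $(1+\eps)$-approximate $\mathbb{F}_2$ low-rank algorithm of \lemref{lem:l0:sample:matrix} on each group $\bG^{(i)}$ treating each distinct row with unit weight, the resulting cost on the \emph{weighted} group (i.e., accounting for true multiplicities) is distorted by at most another $(1+\eps)$ factor, since replacing the true weight $w_i$ by the group-representative weight $(1+\eps)^i$ changes each row's contribution by at most $(1+\eps)$. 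I would make this precise by writing the total cost $\|\widetilde{\bU}\widetilde{\bV}-\widetilde{\bA}\|_F^2$ as a sum over groups and bounding each group's contribution against the optimal $\mathbb{F}_2$ factors restricted to that group, then summing. Because the block structure of $\widetilde{\bU}$ forces each row of $\widetilde{\bA}$ to be explained by the factors of its own group, the groups decouple and the optimum over $\widetilde{\bA}$ is at least the sum of the per-group optima, so comparing against $\bU^*,\bV^*$ restricted to each group gives the needed upper bound.

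I would then verify that \algref{alg:compute:u:gf2} correctly realizes the block-diagonal structure: for each row of $\bA$ it selects the best binary coefficient vector of length $k$ together with the best group index $j'$, which is exactly a minimization over the feasible $\bU'$ whose support is confined to a single block of $k$ coordinates. This step guarantees $\bU'$ is a valid bicriteria factor of rank $k'=\ell k=\O{\frac{k\log n}{\eps}}$ and that $\|\bU'\widetilde{\bV}-\bA\|_F^2$ is no larger than the cost achieved by any block-structured indicator matching each row of $\bA$ to the group-$j$ factors, in particular no larger than $(1+\eps)\|\widetilde{\bU}\widetilde{\bV}-\widetilde{\bA}\|_F^2$ after the coreset is un-applied. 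Chaining the coreset inequality, the bucketing inequality, and the per-group $(1+\eps)$-approximation, I would collect a bounded product of $(1+O(\eps))$ factors and rescale $\eps$ by a constant, exactly mirroring the end of \lemref{lem:correctness:lra}. Finally, a union bound over the coreset construction (\thmref{thm:strong:coreset}) and the $\O{\log n/\eps}$ invocations of \lemref{lem:l0:sample:matrix} yields overall success probability at least $0.95$.

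The main obstacle I anticipate is controlling the interaction between the frequency bucketing and the optimality of $\bU^*,\bV^*$: the optimal global factors need not respect the group partition, so I must argue that confining each row to be explained only by its own group's factors loses at most a $(1+\eps)$ factor. The clean way to see this is that the optimal \emph{constrained} solution on $\widetilde{\bA}$ again has at most $2^k$ distinct row-images, and within group $\bG^{(i)}$ the restriction of $\bU^*\bV^*$ is a feasible rank-$k$ $\mathbb{F}_2$ factorization of $\bG^{(i)}$; thus the per-group optimum is bounded by the contribution of $\bU^*\bV^*$ to that group, and summing over groups recovers $\|\bU^*\bV^*-\widetilde{\bA}\|_F^2$ up to the bucketing distortion. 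Making the weight-rounding bookkeeping rigorous — tracking that each of the three sources of $(1+\eps)$ error (coreset, bucketing, per-group approximation) composes multiplicatively rather than additively — is the most delicate part, but it is routine once the decoupling across groups is established.
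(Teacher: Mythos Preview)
Your proposal is correct and follows essentially the same chain of inequalities as the paper's proof: optimality of $\bU'$, coreset to pass from $\bA$ to $\widetilde{\bA}$, decomposition into frequency groups with per-group $(1+\eps)$-approximation via \lemref{lem:l0:sample:matrix} plus a $(1+\eps)$ bucketing distortion, comparison of the per-group optima against the restriction of a global optimum, and finally the coreset in the other direction to return to $\bA$.

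One point you should tighten: you describe invoking \thmref{thm:strong:coreset} ``with $2^k$-means clustering'' and apply the coreset for ``any set of $2^k$ centers.'' That suffices for the direction comparing $\bM^*\bU^*\bV^*$ on $\widetilde{\bA}$ to $\bU^*\bV^*$ on $\bA$, but not for the step where you ``un-apply'' the coreset to bound $\|\bU'\bV'-\bA\|_F^2$ by $(1+\eps)\|\widetilde{\bU}\widetilde{\bV}-\widetilde{\bA}\|_F^2$. There the relevant center set is the set of distinct rows of $\widetilde{\bU}\widetilde{\bV}$, and the block structure (one of $\ell$ blocks, each with $2^k$ possible coefficient vectors) gives up to $2^k\ell$ distinct centers, so the objective is a constrained $2^k\ell$-means cost, not $2^k$-means. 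This is exactly why \algref{alg:lra:gf2} sets $t=\O{\frac{(2^k\ell)^3 k^2}{\eps^4}}$, i.e., the coreset size for $(2^k\ell)$-means; with only a $2^k$-means coreset that step would not be justified. The paper's proof makes this explicit by noting that $\|\bU\bV'-\bA\|_F^2$ is a constrained $2^k\ell$-means objective before invoking the coreset guarantee. Once you record this, your argument is the paper's argument.
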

\begin{proof}
Let $\widetilde{\bU}\gets\begin{bmatrix}\widetilde{\bU^{(0)}}|\ldots|\widetilde{\bU^{(\ell)}}\end{bmatrix}$ in \algref{alg:lra:gf2}. 
Let $\widetilde{\bM}$ be the indicator matrix that selects a row of $\widetilde{\bU}\widetilde{\bV}=\widetilde{\bU}\bV'$ to match to each row of $\bA$, so that by the optimality of $\bU'$, 
\[\|\bU'\bV'-\bA\|_F^2\le\|\widetilde{\bM}\widetilde{\bU}\widetilde{\bV}-\bA\|_F^2.\]

Since $\bV$ is a set of $k$ points in $\{0,1\}^d$ and each row $\bU_i$ of $\bU$ induces one of at most $2^k$ possible points $\bU_i\bV\in\{0,1\}^d$, then $\|\bU\bV-\bA\|_F^2$ is the objective value of a constrained $2^k$-means clustering problem, even when all operations performed are on $\mathbb{F}_2$. 
Similarly, $\bV^{(j)}$ is a set of $k$ points in $\{0,1\}^d$ for each $j\in[\ell]$. 
Each row $\bU_i$ of $\bU$ induces one of at most $2^k$ possible points $\bU_i\bV^{(j)}\in\{0,1\}^d$ for a fixed $j\in[\ell]$, so that $\|\bU\bV'-\bA\|_F^2$ is the objective value of a constrained $2^k\ell$-means clustering problem, even when all operations performed are on $\mathbb{F}_2$. 

Hence by the choice of $t$ in \thmref{thm:strong:coreset}, it follows that $\widetilde{\bA}$ is a strong coreset, and so
\[\|\widetilde{\bM}\widetilde{\bU}\widetilde{\bV}-\bA\|_F^2\le(1+\eps)\|\widetilde{\bU}\widetilde{\bV}-\widetilde{\bA}\|_F^2.\] 
We decompose the rows of $\widetilde{\bA}$ into $\bG^{(0)},\ldots,\bG^{(\ell)}$ for $\ell=\O{\frac{\log n}{\eps}}$. 
Let $G_i$ be the corresponding indices in $[n]$ so that $j\in G_i$ if and only if $\widetilde{\bA_j}$ is nonzero in $\bG_i$. 
Then we have
\[\|\widetilde{\bU}\widetilde{\bV}-\widetilde{\bA}\|_F^2=\sum_{i\in[\ell]}\sum_{j\in G_i}\|\bU'_j\bV'-\widetilde{\bA_j}\|_F^2.\]
Since each row in $G_i$ is repeated a number of times in $[(1+\eps)^i,(1+\eps)^{i+1})$, then
\[\sum_{j\in G_i}\|\bU'_j\bV'-\widetilde{\bA_j}\|_F^2\le(1+\eps)^2\min_{\bU^{(i)}\in\{0,1\}^{n\times k},\bV^{(i)}\in\{0,1\}^\times{k\times d}}\|\bU^{(i)}\bV^{(i)}-\bG^{(i)}\|_F^2,\]
where the first factor of $(1+\eps)$ is from the $(1+\eps)$-approximation guarantee of $\bU^{(i)}$ and $\bV^{(i)}$ by \lemref{lem:l0:sample:matrix} and the second factor of $(1+\eps)$ is from the number of each row in $\bG^{(i)}$ varying by at most a $(1+\eps)$ factor. 
Therefore, 
\begin{align*}
\|\bU'\bV'-\bA\|_F^2&\le(1+\eps)^3\sum_{i\in[\ell]}\min_{\bU^{(i)}\in\{0,1\}^{n\times k},\bV^{(i)}\in\{0,1\}^{k\times d}}\|\bU^{(i)}\bV^{(i)}-\bG^{(i)}\|_F^2\\
&\le(1+\eps)^3\min_{\bU\in\{0,1\}^{n\times k},\bV\in\{0,1\}^{k\times d}}\|\bU\bV-\widetilde{\bA}\|_F^2.
\end{align*}
Let $\bU^*\in\{0,1\}^{n\times k}$ and $\bV^*\in\{0,1\}^{k\times d}$ such that 
\[\|\bU^*\bV^*-\bA\|_F^2=\min_{\bU\in\{0,1\}^{n\times k},\bV\in\{0,1\}^{k\times d}}\|\bU\bV-\bA\|_F^2,\]
where all operations are performed in $\mathbb{F}_2$. 
Let $\bM^*$ be the indicator matrix that selects a row of $\bU^*\bV^*$ to match to each row of $\widetilde{\bA}$, so that by \lemref{lem:correctness:subroutine}, 
\[\min_{\bU\in\{0,1\}^{n\times k},\bV\in\{0,1\}^{k\times d}}\|\bU\bV-\widetilde{\bA}\|_F^2\le(1+\eps)\|\bM^*\bU^*\bV^*-\widetilde{\bA}\|_F^2.\]
Then by the choice of $t$ in \thmref{thm:strong:coreset} so that $\widetilde{\bA}$ is a strong coreset of $\bA$,
\[\|\bM^*\bU^*\bV^*-\widetilde{\bA}\|_F^2\le(1+\eps)\|\bU^*\bV^*-\bA\|_F^2.\]
Therefore, we have
\[\|\bU'\bV'-\bA\|_F^2\le(1+\eps)^5\|\bU^*\bV^*-\bA\|_F^2\]
and the desired claim then follows from rescaling $\eps$. 
\end{proof}

It remains to analyze the runtime of \algref{alg:lra:gf2}. 
\begin{lemma}
\lemlab{lem:runtime:lra:gf2}
\algref{alg:lra:gf2} uses $2^{\poly(k/\eps)}\poly(n,d)$ runtime. 
\end{lemma}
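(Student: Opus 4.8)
The plan is to bound the runtime by tracking the cost of each line of \algref{alg:lra:gf2} and summing the contributions. First I would invoke \thmref{thm:strong:coreset} (instantiated for $2^k$-means clustering, so the effective number of centers is $2^k$ and the coreset size is $t=\O{\frac{(2^k\ell)^3 k^2}{\eps^4}}=2^{\O{k}}\poly(1/\eps,\log n)$) to observe that constructing $\widetilde{\bA}$ takes $\O{nd^2+n^2d+\frac{nkd}{\eps^2}+\frac{nk^2}{\eps^2}}=\poly(n,d)$ time, and that the resulting total weight is $N=\poly(n)$. Grouping the rows of $\widetilde{\bA}$ into the $\ell+1$ frequency buckets $\bG^{(0)},\ldots,\bG^{(\ell)}$ with $\ell=\O{\frac{\log n}{\eps}}$ is a single linear scan over the (at most $t$) distinct rows, again costing $\poly(n,d)$.

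Next I would account for the $\mathbb{F}_2$ low-rank approximation calls. By \lemref{lem:l0:sample:matrix}, each invocation on a group $\bG^{(i)}$ runs in time $d\cdot M^{\poly(k/\eps)}$, where $M$ is the number of rows of $\bG^{(i)}$; since every group has at most $t$ distinct rows, this is $d\cdot t^{\poly(k/\eps)}$. Substituting $t=2^{\O{k}}\poly(1/\eps,\log n)$ gives $t^{\poly(k/\eps)}=2^{\poly(k/\eps)}\cdot(\log n)^{\poly(k/\eps)}$, and the standard split on whether $\log\log n$ exceeds $\poly(k/\eps)$ (identical in spirit to the $\log N\le 2^m$ versus $\log N>2^m$ case analysis in \lemref{lem:runtime:subroutine}) shows $(\log n)^{\poly(k/\eps)}\le 2^{\poly(k/\eps)}\poly(n)$. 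Hence each call costs $2^{\poly(k/\eps)}\poly(n,d)$, and since there are $\ell+1=\O{\frac{\log n}{\eps}}$ of them, the total over all groups remains $2^{\poly(k/\eps)}\poly(n,d)$.

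Finally I would bound the reconstruction of the factors. Assembling $\widetilde{\bV}=\widetilde{\bV^{(0)}}\circ\ldots\circ\widetilde{\bV^{(\ell)}}$ is just concatenation of $\ell+1$ matrices of size $k\times d$, costing $\O{\ell k d}=\poly(n,d)$. Computing $\bU'$ via \algref{alg:compute:u:gf2} optimizes each of the $n$ rows independently; for each row it enumerates all $2^k$ binary vectors against each of the $\ell$ blocks $\widetilde{\bV^{(j)}}$, taking $\O{2^k\ell\cdot d}$ time per row, hence $\O{2^k\ell\cdot nd}=2^{\O{k}}\poly(n,d)$ overall. Summing the coreset construction, the grouping, the $\mathbb{F}_2$ approximation calls, and the factor assembly, every term is dominated by $2^{\poly(k/\eps)}\poly(n,d)$, which is the claimed bound.

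The main obstacle I anticipate is the $\mathbb{F}_2$ subroutine, whose runtime $d\cdot M^{\poly(k/\eps)}$ is polynomial in $M$ with a $\poly(k/\eps)$ exponent rather than additive; the point to handle carefully is that $M\le t$ is itself singly exponential in $k$ (through the $2^k$-means coreset), so raising it to a $\poly(k/\eps)$ power keeps us singly exponential in $k/\eps$ only because $\log t=\O{k}+\O{\log(1/\eps)+\log\log n}$ stays polynomial in $k/\eps$ after absorbing the $\log\log n$ factor into either $2^{\poly(k/\eps)}$ or a $\poly(n)$ factor. Getting that absorption right is the one nonroutine step; the remaining lines are linear-scan or per-row enumerations that contribute only polynomial or $2^{\O{k}}$ overhead.
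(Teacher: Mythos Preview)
Your proposal is correct and follows essentially the same line-by-line accounting as the paper's proof. In fact, you are slightly more careful than the paper on the one nontrivial point: the paper simply writes the cost of each \lemref{lem:l0:sample:matrix} call as $d\cdot(2^k)^{\poly(k/\eps)}$, silently dropping the $\poly(1/\eps,\log n)$ factor in $t$, whereas you explicitly perform the $\log\log n$ case split (mirroring \lemref{lem:runtime:subroutine}) to absorb it into $2^{\poly(k/\eps)}\poly(n)$.
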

\begin{proof}
By \thmref{thm:strong:coreset}, we have that \algref{alg:lra:gf2} uses $\O{nd^2+n^2d+\frac{nkd}{\eps^2}+\frac{nk^2}{\eps^2}}$ time to compute $\widetilde{\bA}\in\{0,1\}^{N\times d}$ with $N=\poly(n)$. 
By \lemref{lem:l0:sample:matrix}, it takes $d\cdot (2^k)^{\poly(k/eps)}$ time to compute $\widetilde{\bU^{(i)}},\widetilde{\bV^{(i)}}$ for each $i\in[\ell]$ for $\ell=\O{\frac{\log n}{\eps}}$. 
Hence, it takes $2^{\poly(k/eps)}\poly(n,d)$ runtime to compute $\widetilde{\bU}$ and $\widetilde{\bV}$. 
Finally, computing $\bU'$ via \algref{alg:compute:u:gf2} takes $\O{2^{k'} n}$ time after enumerating through all possible $2^k\ell$ binary vectors for each row of $\bU'$.
Therefore, the total runtime of \algref{alg:lra} is $2^{\poly(k/\eps)}\poly(n,d)$.
\end{proof}

By \lemref{lem:correctness:lra:gf2} and \lemref{lem:runtime:lra:gf2}, we thus have:
\begin{theorem}
There exists an algorithm that uses $2^{\poly(k/\eps)}\poly(n,d)$ runtime and with probability at least $\frac{2}{3}$, outputs $\bU'\in\{0,1\}^{n\times k'}$ and $\bV'\in\{0,1\}^{k'\times d}$ such that
\[\|\bU'\bV'-\bA\|_F^2\le(1+\eps)\min_{\bU\in\{0,1\}^{n\times k},\bV\in\{0,1\}^{k\times d}}\|\bU\bV-\bA\|_F^2,\]
where $k'=\O{\frac{k\log k}{\eps}}$. 
\end{theorem}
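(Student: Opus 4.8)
The plan is to instantiate \algref{alg:lra:gf2} and read the three claims (approximation, runtime, and the bicriteria rank) off of \lemref{lem:correctness:lra:gf2} and \lemref{lem:runtime:lra:gf2}, after one extra refinement to the bucketing. The backbone is: since $\bU\bV$ has at most $2^k$ distinct rows, $\|\bU\bV-\bA\|_F^2$ is the cost of a constrained $2^k$-means instance, so a strong $\eps$-coreset for $2^k\ell$-means (the enlarged center budget accounts for the $\ell$-fold grouping) reduces $\bA$ to a weighted set $\widetilde{\bA}$ with only $t=2^{\poly(k/\eps)}$ distinct rows, preserving the cost of every binary factorization up to $(1\pm\eps)$ by \thmref{thm:strong:coreset}. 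We then bucket the coreset rows by repetition count into groups $\bG^{(i)}$, run the $(1+\eps)$-approximate $\mathbb{F}_2$ solver of \lemref{lem:l0:sample:matrix} on each group, stack the resulting factors, and finish with the block-structured reconstruction of \algref{alg:compute:u:gf2}. The output has inner dimension $k'=\ell k$, so the whole theorem reduces to (i) the $(1+\eps)$ guarantee, (ii) the runtime, and (iii) pinning $\ell=\O{\frac{\log k}{\eps}}$.

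For (i) I would chain exactly the multiplicative losses of \lemref{lem:correctness:lra:gf2}: one $(1+\eps)$ from passing to $\widetilde{\bA}$ and back (using optimality of the reconstructed $\bU'$, which lets every row of $\bA$ choose its best codeword across all $\ell$ stacked blocks), one from the per-group $L_0$ solver, and one from rounding each repetition count inside $\bG^{(i)}$ to the common value $(1+\eps)^i$; this yields $\|\bU'\bV'-\bA\|_F^2\le(1+\eps)^5\,\OPT$, and rescaling $\eps$ by a constant closes the bound. For (ii), \thmref{thm:strong:coreset} produces $\widetilde{\bA}$ in $\poly(n,d)$ time, \lemref{lem:l0:sample:matrix} is invoked $\ell$ times on groups of at most $t=2^{\poly(k/\eps)}$ distinct rows at cost $d\cdot t^{\poly(k/\eps)}=2^{\poly(k/\eps)}d$ each, and \algref{alg:compute:u:gf2} costs $\O{2^{k'}n}$, for a total of $2^{\poly(k/\eps)}\poly(n,d)$, matching \lemref{lem:runtime:lra:gf2}.

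The hard part is (iii), and it is the main obstacle. Bucketing by raw frequency only bounds $\ell$ by $\O{\log_{1+\eps}w_{\max}}=\O{\frac{\log n}{\eps}}$, since \thmref{thm:strong:coreset} guarantees merely $w_{\max}\le\poly(n)$; this is precisely what yields the weaker rank $\O{\frac{k\log n}{\eps}}$. To reach $\O{\frac{k\log k}{\eps}}$ the plan is to collapse the effective weight dynamic range to $\poly(k/\eps)$ before bucketing: sorting the coreset weights $w_1\ge\dots\ge w_t$, I would declare a row \emph{light} when $w_i<\frac{\eps}{t}\,w^\star$ for a reference weight $w^\star$, absorb the light rows into the reconstruction step without spending a dedicated rank-$k$ block on them, and charge their total cost contribution, at most $d\sum_{\text{light}}w_i$, against the optimum; the surviving weights then span a ratio of only $\poly(t/\eps)=2^{\poly(k/\eps)}$, and with $w^\star$ calibrated correctly this can be driven to $\poly(k/\eps)$, giving $\ell=\O{\frac{\log(k/\eps)}{\eps}}$ and hence $k'=\O{\frac{k\log k}{\eps}}$. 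The genuinely delicate step is the calibration of $w^\star$: a heavy coreset row that the optimum factorizes \emph{exactly} contributes nothing to $\OPT$, so one cannot compare discarded mass to $w_{\max}$; $w^\star$ must instead be tied to the per-bucket optimal costs $\min\|\bU^{(i)}\bV^{(i)}-\bG^{(i)}\|_F^2$, and making this charging simultaneously valid for all buckets while keeping the surviving range polynomial in $k/\eps$ is where I expect the real work of the proof to concentrate.
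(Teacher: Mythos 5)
Your reconstruction of the paper's actual argument is exactly right, and for everything the paper in fact proves you have the same proof: the theorem is obtained by combining \lemref{lem:correctness:lra:gf2} (the $(1+\eps)^5$ chain you describe, with precisely those three sources of loss, closed by rescaling $\eps$) with \lemref{lem:runtime:lra:gf2}, on top of \algref{alg:lra:gf2} with the coreset size of \thmref{thm:strong:coreset} instantiated for $2^k\ell$ centers and the per-group solver of \lemref{lem:l0:sample:matrix}. The one place you diverge is part (iii), and here you are chasing a typo rather than a theorem: the paper's construction uses $\ell=\O{\frac{\log n}{\eps}}$ frequency buckets (since \thmref{thm:strong:coreset} only guarantees integer weights up to $\poly(n)$) and hence delivers $k'=\ell k=\O{\frac{k\log n}{\eps}}$, which is exactly what the abstract, the introduction, and Table 1 all state for the $\mathbb{F}_2$ case. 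The ``$\log k$'' in this theorem statement is an error in the paper; the paper's own proof contains no weight-compression step and does not achieve it.

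That said, it is worth recording why your proposed repair does not close the gap, since you half-noticed the obstruction yourself. Any scheme that declares rows light by weight and absorbs them without a dedicated block must incur an additive error of up to $d\sum_{\text{light}}w_i$, and this cannot be charged against $\OPT$ multiplicatively: take $\bA$ exactly of rank $k$ over $\mathbb{F}_2$, with one row repeated $n-t+1$ times and the remaining distinct rows appearing once each. Then $\OPT=0$, the light rows are nonzero rows of the coreset, and any under-representation of them yields a strictly positive cost, so no $(1+\eps)$ guarantee survives. Calibrating $w^\star$ against the per-bucket optima $\min\|\bU^{(i)}\bV^{(i)}-\bG^{(i)}\|_F^2$ does not escape this, because those optima can themselves all be zero while the dynamic range of the weights remains $\poly(n)$. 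In other words, the dynamic range of coreset weights is a property of $\bA$ alone, not of the optimal cost, so with this bucketing architecture $\ell=\Theta\left(\frac{\log n}{\eps}\right)$ is forced; your proof should simply conclude with $k'=\O{\frac{k\log n}{\eps}}$ and flag the statement's $\log k$ as a misprint.
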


\section{\texorpdfstring{$L_p$}{Lp} Low-Rank Approximation}
In this section, we present a $(1+\eps)$-approximation algorithm for binary low-rank approximation with $L_p$ loss, where to goal is to find matrices $\bU\in\{0,1\}^{n\times k}$ and $\bV\in\{0,1\}^{k\times d}$ to minimize $\|\bU\bV-\bA\|_p^p$. 
We would like to use the same approach as in \secref{sec:binary:lra}, where we first compute a weighted matrix $\widetilde{\bA}$ from a strong coreset for $\bA$, and then we make guesses for the matrices $\bS\bU^*$ and $\bS\bA$ and solve for $\min_{\bV\in\{0,1\}^{k\times d}}\|\bS\bU^*\bV-\bS\bA\|_F^2$ while ensuring there are not too many possibilities for the matrices $\bS\bU^*$ and $\bS\bA$. 
Thus to adapt this approach to $L_p$ loss, we first require the following strong coreset construction for discrete metrics: 

\begin{theorem}[Theorem 1 in \cite{Cohen-AddadSS21}]
\thmlab{thm:strong:coreset:lp}
Let $X\subset\mathbb{R}^d$ be a subset of $n$ points, $\eps\in(0,1)$ be an accuracy parameter, $p\ge 1$ be a constant, and let \[t=\O{\min(\eps^{-2}+\eps^{-p},k\eps^{-2})\cdot k\log n}.\] 
There exists an algorithm that uses $\poly(n,d,k)$ runtime and outputs a set of $t$ weighted points that is a strong $\eps$-coreset for $k$-clustering on discrete $L_p$ metrics with probability at least $0.99$. 
Moreover, each point has an integer weight that is at most $\poly(n)$.
\end{theorem}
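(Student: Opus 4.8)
The plan is to prove this through the \emph{importance/sensitivity sampling} framework of Cohen-Addad, Saulpic, and Schwiegelshohn, which is the standard route to strong coresets for $(k,p)$-clustering. The crucial feature exploited by the word \emph{discrete} is that candidate centers are drawn from a finite ground set (in a discrete metric on $n$ points one may take the $n$ points themselves, or a $\poly(n)$-size candidate set, as centers). This replaces the continuous VC-dimension / $\eps$-net machinery of the Euclidean case by a union bound over finitely many center configurations, and is ultimately what produces the $k\log n$ factor in $t$: choosing $k$ centers out of $n$ candidates gives $\log\binom{n}{k}=\O{k\log n}$ distinct configurations to control.

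First I would compute, in $\poly(n,d,k)$ time, a constant-factor (possibly bicriteria) approximate solution $\calA$ to $(k,p)$-clustering on $X$ via $D^p$-sampling or local search, together with the induced partition of $X$ into clusters $X_1,\dots,X_k$ around centers $c_1,\dots,c_k$. This reference solution is used only to define the sampling distribution and is the sole source of the $\poly(n,d,k)$ runtime. I would then decompose each cluster into distance \emph{rings} (points at geometrically increasing distances from their reference center) and sample $t$ points with probability proportional to a sensitivity upper bound, roughly $\mathrm{cost}_\calA(x)/\mathrm{cost}_\calA(X_{j(x)})+1/|X_{j(x)}|$, reweighting each sampled point by the reciprocal of its probability. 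For a \emph{fixed} center set $S$, the weighted cost is then an unbiased estimator of the true cost, and a Bernstein bound yields $(1\pm\eps)$ concentration. The delicate point in this per-$S$ step is the variance bound: splitting points into those close to $S$ and those far from $S$ and invoking the weak triangle inequality $\|a+b\|_p^p\le(1+\eps)\|a\|_p^p+\O{\eps^{1-p}}\|b\|_p^p$, then tracking the $\eps^{1-p}$ factor through the variance, is exactly what produces the additive $\eps^{-p}$ term in the coreset size.

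The main obstacle is upgrading per-$S$ concentration to a guarantee holding \emph{simultaneously} over all center sets $S$ while keeping the size as small as claimed. A naive union bound over all $\binom{n}{k}$ configurations combined with the $\eps^{-2}$ Bernstein tail would give $t=\O{k^2\log n/\eps^2}$ and fail to decouple the combinatorial complexity from the accuracy dependence. To obtain $\min(\eps^{-2}+\eps^{-p},\,k\eps^{-2})\cdot k\log n$, I would instead group candidate solutions by the cost profile they induce relative to the reference rings, bound the number of distinct relevant profiles at each distance scale, and apply concentration level-by-level through a chaining decomposition; the two branches of the $\min$ correspond to two ways of controlling the far-point contribution (a direct $\eps^{-2}+\eps^{-p}$ estimate versus a $k$-dependent one). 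Finally, since every sampling weight can be rounded to the nearest integer multiple of a $\poly(n)^{-1}$ fraction of the total weight while perturbing each cost by at most a $(1\pm\eps)$ factor, I would absorb this rounding into the accuracy parameter to obtain the stated integer weights bounded by $\poly(n)$.
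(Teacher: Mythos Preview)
The paper does not prove this statement at all: it is quoted verbatim as Theorem~1 of \cite{Cohen-AddadSS21} and used as a black-box tool, with no argument supplied beyond the citation. So there is no ``paper's own proof'' to compare against --- the entirety of the paper's treatment is the attribution line in the theorem header.

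That said, your sketch is a faithful outline of the actual Cohen-Addad--Saulpic--Schwiegelshohn argument: a bicriteria reference solution, ring decomposition around reference centers, sensitivity-style sampling, Bernstein concentration per fixed center set, and then a careful chaining/grouping over solution profiles to beat the naive $k^2\log n/\eps^2$ union bound. You correctly identify that the $\min(\eps^{-2}+\eps^{-p},\,k\eps^{-2})$ dichotomy comes from two different ways of bounding the far-point variance contribution, and that the $k\log n$ factor replaces VC-dimension machinery because centers live in a finite ground set. The integer-weight rounding observation is also standard and correct. If your goal were to reproduce the cited result rather than the present paper's use of it, this would be the right plan; but for the purposes of this paper, the theorem is simply imported.
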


For Frobenius error, we crucially require the affine embedding property that 
\[(1-\eps)\|\bU^*\bV-\bA\|_F^2\le\|\bS\bU^*\bV-\bS\bA\|_F^2\le(1+\eps)\|\bU^*\bV-\bA\|_F^2,\]
for all $\bV\in\{0,1\}^{k\times d}$. 
Unfortunately, it is not known whether there exists an efficient sampling-based affine embedding for $L_p$ loss. 

We instead invoke the coreset construction of \thmref{thm:strong:coreset:lp} on the rows and the columns so that $\widetilde{\bA}$ has a small number of distinct rows and columns. 
We again use the idea from \secref{sec:lra:frob:ftwo} to partition the rows of $\widetilde{\bA}$ into groups based on their frequency, but now we further partition the groups based on the frequency of the columns. 
It then remains to solve BMF with $L_p$ loss on the partition, each part of which has a small number of rows and columns. 
Because the contribution of each row toward the overall loss is small (because there is a small number of columns), it turns out that there exists a matrix that samples $\poly(k/\eps)$ rows of each partition that finally achieves the desired affine embedding. 
Thus, we can solve the problem on each partition, pad the factors accordingly, and build the bicriteria factors as in the binary field case. 
The algorithm appears in full in \algref{alg:lra:general}, with subroutines appearing in \algref{alg:compute:u:general} and \algref{alg:distinct:lra:general}.

\begin{algorithm}[!htb]
\caption{Algorithm for computing optimal $\bU$ given $\bV^{(1)},\ldots,\bV^{(\ell)}$}
\alglab{alg:compute:u:general}
\begin{algorithmic}[1]
\Require{$\widetilde{\bA}\in\{0,1\}^{N\times d}$, $\bV^{(1)},\ldots,\bV^{(\ell)}\in\{0,1\}^{k\times d}$}
\Ensure{$\bU'=\argmin_{\bU\in\{0,1\}^{N\times \ell k}}\|\bU\bV-\widetilde{\bA}\|_p^p$, where $\bU$ is restricted to one nonzero block of $k$ coordinates}
\For{$i=1$ to $i=N$}
\State{Set $(\bU'_i,j')=\argmin_{\bU_i\in\{0,1\}^{1\times k},j\in[\ell]}\|(\bU_i\bV^{(j)}-\widetilde{\bA}_i\|_p^p$}
\Comment{Enumerate over all $2^k$ possible binary vectors, all $\ell$ indices}
\State{Pad $\bU'_i$ with length $\ell k$, as the $j'$-th block of $k$ coordinates}
\EndFor
\State{\Return $\bU'=\bU'_1\circ\ldots\circ\bU'_N$}
\end{algorithmic}
\end{algorithm}

\begin{algorithm}[!htb]
\caption{Low-rank approximation for matrix $\widetilde{\bA}$ with $t$ distinct rows and $t'$ distinct columns}
\alglab{alg:distinct:lra:general}
\begin{algorithmic}[1]
\Require{$\widetilde{\bA}\in\{0,1\}^{N\times D}$ with at most $t$ distinct rows and $r$ distinct columns}
\Ensure{$\bU',\bV'$ with $\|\bU\bV-\widetilde{\bA}\|_p\le(1+\eps)\min_{\bU\in\{0,1\}^{N\times k}, \bV\in\{0,1\}^{k\times D}}\|\bU\bV-\widetilde{\bA}\|_p$}
\State{$V\gets\emptyset$}
\For{each guess of $\bS\bU^*$ and $\bS\bA$, where $\bS$ is a $L_0$ sampling matrix with $m=\O{\frac{k^{p+1}}{\eps^2}\log r}$ rows with weights that are powers of two up to $\poly(N)$}
\State{$V\gets V\cup\argmin_{\bV\in\{0,1\}^{k\times D}}\|\bS\bU^*\bV-\bS\bA\|_p^p$}
\Comment{\algref{alg:compute:v} with $L_p$ loss}
\EndFor
\For{each $\bV\in V$}
\State{Let $\bU_\bV=\argmin_{\bU\in\{0,1\}^{N\times k}}\|\bU\bV-\bA\|_p^p$}
\Comment{\algref{alg:compute:u} with $L_p$ loss}
\EndFor
\State{$\bV'\gets\argmin_{\bV\in\{0,1\}^{k\times d}}\|\bS\bU_\bV\bV-\bS\bA\|_p^p$}
\State{$\bU'\gets\bU_{\bV'}$}
\State{\Return $(\bU',\bV')$}
\end{algorithmic}
\end{algorithm}

\begin{algorithm}[!htb]
\caption{Bicriteria low-rank approximation with $L_p$ loss for matrix $\bA$}
\alglab{alg:lra:general}
\begin{algorithmic}[1]
\Require{$\bA\in\{0,1\}^{n\times d}$, rank parameter $k$, accuracy parameter $\eps>0$}
\Ensure{$\bU'\in\{0,1\}^{n\times k},\bV'\in\{0,1\}^{k\times d}$ satisfying the property that $\|\bU'\bV'-\bA\|_p^p\le(1+\eps)\min_{\bU\in\{0,1\}^{n\times k}, \bV\in\{0,1\}^{k\times d}}\|\bU\bV-\bA\|_p^p$}
\State{$t\gets\O{\min(\eps^{-2}+\eps^{-p}, k\eps^{-2}) \cdot k\log n}$} 
\Comment{\thmref{thm:strong:coreset:lp}}
\State{$\ell\gets\O{\frac{\log n}{\eps}}$, $k'\gets\ell k$}
\State{Compute a strong coreset $C$ for $2^k$-means clustering of $\bA$, with $t$ rows, with weights $N=\poly(n)$}
\State{Compute a strong coreset $C'$ for $2^k$-means clustering of $C$, with $t$ rows and columns, with weights $N,D=\poly(n)$}
\State{Let $\widetilde{\bA}\in\{0,1\}^{N\times D}$ be the matrix representation of $C$, where weighted points are duplicated appropriately}
\State{For $i\in[\ell]$, let $\bG^{(i)}$ be the group of rows (removing multiplicity) of $\widetilde{\bA}$ with frequency $[(1+\eps)^i,(1+\eps)^{i+1})$}
\State{For $i,j\in[\ell]$, let $\bG^{(i,j)}$ be the group of columns (removing multiplicity) of $\bG^{(i,j)}$ with frequency $[(1+\eps)^j,(1+\eps)^{j+1})$}
\State{Compute the low-rank minimizers $(\widetilde{\bU^{(i,j)}},\widetilde{\bV^{(i,j)}})$ on input $\bG^{(i,j)}$ using \algref{alg:distinct:lra:general}, padded to $\mathbb{R}^{n\times k}$ and $\mathbb{R}^{k\times D}$, respectively}
\State{$\widetilde{\bU}\gets\begin{bmatrix}\widetilde{\bU^{(0,0)}}|\widetilde{\bU^{(1,0)}}|\ldots|\widetilde{\bU^{(\ell,\ell)}}\end{bmatrix}$, $\widetilde{\bV}\gets\widetilde{\bV^{(0,0)}}\circ\widetilde{\bV^{(1,0)}}\ldots\circ\widetilde{\bV^{(\ell,\ell)}}$}
\State{Use \algref{alg:compute:u:general} with $\widetilde{\bU^{(0,0)}},\widetilde{\bU^{(1,0)}}\ldots,\widetilde{\bU^{(\ell,\ell)}}$ and $C$ to find $\bV'$}
\State{Use $\bV'$ and $\bA$ to find $\bU'$, i.e., \algref{alg:compute:u} with dimension $k'$ and $L_p$ loss}
\State{\Return $(\bU',\bV')$}
\end{algorithmic}
\end{algorithm}

We first justify the correctness of \algref{alg:distinct:lra:general} by showing the existence of an $L_0$ sampling matrix $\bS$ that achieves a subspace embedding for binary inputs. 
\begin{lemma}
\lemlab{lem:lzero:sample}
Given matrices $\bA\in\{0,1\}^{n\times k}$ and $\bB\in\{0,1\}^{n\times r}$, there exists a matrix $\bS\in\mathbb{R}^{m\times n}$ with $m=\O{\frac{k^{p+1}}{\eps^2}\log r}$ such that with probability at least $0.99$, we have that simultaneously for all $\bX\in\{0,1\}^{k\times r}$, 
\[(1-\eps)\|\bA\bX-\bB\|_p^p\le\|\bS\bA\bX-\bS\bB\|_p^p\le(1+\eps)\|\bA\bX-\bB\|_p^p.\]
\end{lemma}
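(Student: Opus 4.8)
The plan is to prove this as an $L_p$ subspace-embedding statement for the single scalar quantity $f(\bX):=\|\bA\bX-\bB\|_p^p$, established by importance sampling of the $n$ rows and crucially exploiting that the entries of $\bA\bX-\bB$ are bounded integers. First I would record the column decomposition $f(\bX)=\sum_{j}\|\bA\bX^{(j)}-\bB^{(j)}\|_p^p$. Since each column $\bX^{(j)}$ is one of at most $2^k$ binary vectors and $\bB$ has only $r$ distinct columns, every column residual lies in the finite palette
\[\calV=\set{\bA\bx-\bb:\bx\in\{0,1\}^k,\ \bb\text{ a column of }\bB},\qquad\abs{\calV}\le 2^k r,\]
each of whose members has entries in $\{-1,0,\dots,k\}$. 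The key observation driving the $\log r$ (rather than $r$) bound is that the lemma only asks to preserve the \emph{sum} $f(\bX)$, not each column residual to relative error: the latter is impossible with few samples (a matrix $\bB$ whose columns are distinct standard-basis vectors already forces $m\ge r$), whereas the sum is far more forgiving because no single column is a large fraction of $f(\bX)$.

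Next I would take $\bS$ to be a row-sampling matrix drawing $m$ rows i.i.d.\ with probability $q_i\propto s_i$ and rescaling the chosen entry by $(mq_i)^{-1/p}$, where $s_i=\max_{\bX}\|(\bA\bX-\bB)_i\|_p^p/\|\bA\bX-\bB\|_p^p$ is the \emph{matrix} sensitivity of row $i$ (the ratio is to the full matrix loss, not to a single column). This makes $\|\bS(\bA\bX-\bB)\|_p^p$ an unbiased estimator of $f(\bX)$, with each summand bounded by $\tfrac{\mathfrak{S}}{m}f(\bX)$ where $\mathfrak{S}=\sum_i s_i$. Here the bounded entries enter twice: the flattening $\|\bv\|_0\le\|\bv\|_p^p\le k^p\|\bv\|_0$ valid for every $\bv\in\calV$, together with the fact that each row of $\bA\bX-\bB$ ranges over few distinct column values, is what lets me bound the total sensitivity by $\mathfrak{S}=\O{k^p}$, independently of $r$. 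This is the precise sense in which ``each row's contribution to the loss is small because there are few distinct columns''.

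For a \emph{fixed} $\bX$, the bounded summands then yield, via a Bernstein/Chernoff estimate,
\[\PPr{\abs{\,\|\bS(\bA\bX-\bB)\|_p^p-f(\bX)}>\eps f(\bX)}\le\delta\quad\text{once}\quad m=\O{\tfrac{\mathfrak{S}}{\eps^2}\log\tfrac1\delta}.\]
The remaining and most delicate step is to make this hold \emph{simultaneously for all} $\bX\in\{0,1\}^{k\times r}$. A naive union bound over the $2^{kr}$ matrices reintroduces a factor of $r$; the plan is to replace it by a net whose size is controlled by the $r$ distinct columns and the $2^k$ per-column choices, so that $\log\tfrac1\delta=\O{k+\log r}$, and then to extend from the net to all $\bX$ using homogeneity of the estimator together with the boundedness of every per-row term. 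Plugging $\mathfrak{S}=\O{k^p}$ and $\log\tfrac1\delta=\O{k+\log r}$ gives $m=\O{\tfrac{k^{p+1}}{\eps^2}\log r}$, and a final union bound over the $\O{1}$ failure events leaves success probability at least $0.99$.

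The hard part is this last step: unlike the Frobenius case of \thmref{thm:affine}, there is no orthogonal expansion separating the $\bA\bX$ and $\bB$ contributions, so one cannot simply combine a subspace embedding with approximate matrix multiplication. Instead one must argue directly that a single sampling matrix, tuned to the matrix sensitivities, preserves the weighted \emph{sum} over all $\bX$ even though it provably cannot preserve the individual column residuals, and one must do so while keeping the net/union overhead at $\log r$. Both the $\poly(k)$ total-sensitivity bound and the $\log r$ net accounting rely essentially on the bounded-entry structure of binary $\bA,\bB,\bX$ and on there being only $r$ distinct columns.
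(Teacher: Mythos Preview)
The paper's route is much simpler and goes in the direction you explicitly ruled out. It works column by column: since $\|\bA\bX-\bB\|_p^p=\sum_{j\in[r]}\|\bA\bX^{(j)}-\bB^{(j)}\|_p^p$ and each summand depends only on the single vector $\bX^{(j)}\in\{0,1\}^k$, it is enough that for every fixed pair $(\bx,j)\in\{0,1\}^k\times[r]$ the sketch gives a $(1\pm\eps)$-approximation of the scalar $\|\bA\bx-\bB^{(j)}\|_p^p$; summing over $j$ then gives the matrix claim for \emph{every} $\bX$ at once, with no net and no matrix-level sensitivity. The union bound is over only $2^k r$ pairs, giving $\log(1/\delta)=\O{k+\log r}$ directly, and the per-vector sample count $\O{k^p/\eps^2}$ comes from a variance bound using that each entry of $\bA\bx-\bB^{(j)}$ lies in $\{-1,0,\dots,k\}$. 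Your standard-basis counterexample is a fair objection to this per-column argument when the sampling is oblivious to $\bB$---and indeed the paper's sampling set $S$ is chosen to contain the nonzero rows of the particular column vector under consideration, so the ``single $\bS$'' issue is left somewhat implicit there---but the intended structure is per-column, not matrix-level.

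Your alternative has genuine gaps. First, you assert $\mathfrak{S}=\sum_i s_i=\O{k^p}$ without proof; the flattening inequality $\|\bv\|_0\le\|\bv\|_p^p\le k^p\|\bv\|_0$ controls per-entry ratios, not the sum $\sum_i\max_{\bX}(\cdot)$, and you give no mechanism for bounding that sum independently of $n$ and $r$. Second, the ``net whose size is controlled by the $r$ distinct columns and the $2^k$ per-column choices'' that you invoke to get $\log(1/\delta)=\O{k+\log r}$ is precisely the per-column decomposition you have just declared impossible: the only natural object of cardinality $2^k r$ here \emph{is} the set of pairs $(\bx,j)$, and a guarantee on each such pair is exactly per-column relative error. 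So either per-column approximation is achievable---in which case the detour through matrix sensitivities is unnecessary and the paper's direct argument applies---or it is not, in which case your net step fails for the same reason. The appeal to ``homogeneity of the estimator'' to extend from the net to all $\bX$ is also inapplicable: the domain $\{0,1\}^{k\times r}$ is discrete, and $\|\bS(\bA\bX-\bB)\|_p^p$ is not homogeneous in $\bX$.
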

\begin{proof}
Let $\bM\in\{0,1,\ldots,k\}^{n\times 1}$ be an arbitrary matrix and let $S$ be a set that contains the nonzero rows of $\bM$ and has cardinality that is a power of two. 
That is, $|S|=2^i$ for some integer $i\ge 0$. 
Let $\bz$ be a random element of $S$, i.e., a random non-zero row of $\bM$, so that we have
\[\Ex{2^i\cdot\|\bz\|_p^p}=\|\bM\|_p^p.\]
Similarly, we have
\[\Var(2^i\cdot\|\bz\|_p^p)\le 2^ik^p\le 2k^p\|\bM\|_p^p.\]
Hence if we repeat take the mean of $\O{\frac{k^p}{\eps^2}}$ estimators, we have that with probability at least $0.99$, 
\[(1-\eps)\|\bM\|_p^p\le\|\bS\bM\|_p^p\le(1+\eps)\|\bM\|_p^p.\]
We can further improve the probability of success to $1-\delta$ for $\delta\in(0,1)$ by repeating $\O{\log\frac{1}{\delta}}$ times. 
By setting $\bM=\bA\bx-\bB^{(i)}$ for fixed $\bA\in\{0,1\}^{n\times k}$, $\bx\in\{0,1\}^k$, and $\bB\in\{0,1\}^{n\times r}$ with $i\in[r]$, we have that the sketch matrix gives a $(1+\eps)$-approximation to $\|\bA\bx-\bB^{(i)}\|_p^p$. 
The result then follows from setting $\delta=\frac{1}{2^kr}$, taking a union bound over all $\bx\in\{0,1\}^k$, and then a union bound over all $i\in[r]$. 
\end{proof}

We then justify the correctness of \algref{alg:lra:general}. 
\begin{lemma}
\lemlab{lem:correctness:lra:general}
With probability at least $0.95$, \algref{alg:lra:general} returns $\bU',\bV'$ such that
\[\|\bU'\bV'-\bA\|_p^p\le(1+\eps)\min_{\bU\in\{0,1\}^{n\times k},\bV\in\{0,1\}^{k\times d}}\|\bU\bV-\bA\|_p^p.\]
\end{lemma}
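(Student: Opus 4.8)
The plan is to follow the template of \lemref{lem:correctness:lra:gf2}, replacing the single (row) frequency grouping by a two-dimensional (row and column) grouping and replacing the $L_0$ low-rank subroutine by the affine embedding of \lemref{lem:lzero:sample}. First I would record the structural observation that for any $\bU\in\{0,1\}^{n\times k}$ and $\bV\in\{0,1\}^{k\times d}$, each row of $\bU\bV$ equals $\bU_i\bV$ for some $\bU_i\in\{0,1\}^k$ and hence takes one of at most $2^k$ values, while each column equals $\bU\bV^{(j)}$ and likewise takes one of at most $2^k$ values. Thus $\|\bU\bV-\bA\|_p^p$ is \emph{simultaneously} the objective of a constrained $2^k$-clustering of the rows and of the columns of $\bA$ under the $L_p$ metric. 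This lets me invoke \thmref{thm:strong:coreset:lp} twice: once to pass from $\bA$ to the row-coreset $C$, and once to pass from $C$ to the coreset $C'$ that is sparse in both rows and columns, so that the duplicated matrix $\widetilde{\bA}\in\{0,1\}^{N\times D}$ has $\O{t}$ distinct rows and $\O{t}$ distinct columns while preserving the BMF cost of every feasible $(\bU,\bV)$ up to a factor $(1\pm\eps)^2$.

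Next I would set up the block decomposition. Partitioning rows by frequency into groups $\bG^{(i)}$ and then columns by frequency into $\bG^{(i,j)}$ tiles the entries of $\widetilde{\bA}$ into an $\ell\times\ell$ grid. Because $\widetilde{\bU}$ is the horizontal concatenation of the $\widetilde{\bU^{(i,j)}}$ (each supported only on the rows of group $i$) and $\widetilde{\bV}$ is the vertical stack of the $\widetilde{\bV^{(i,j)}}$ (each supported only on the columns of group $j$), the product has disjoint supports and
\[\|\widetilde{\bU}\widetilde{\bV}-\widetilde{\bA}\|_p^p=\sum_{i,j}\|\widetilde{\bU^{(i,j)}}\widetilde{\bV^{(i,j)}}-\bG^{(i,j)}\|_p^p,\]
where the right-hand side is taken with multiplicities. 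On each block every surviving row is repeated a number of times in $[(1+\eps)^i,(1+\eps)^{i+1})$ and every surviving column in $[(1+\eps)^j,(1+\eps)^{j+1})$, so these multiplicity factors appear on both the algorithm's cost and the optimum's cost and cancel up to a factor $(1+\eps)^2$ (one power for the row frequencies, one for the column frequencies). Combining this with the per-block $(1+\eps)$-approximation returned by \algref{alg:distinct:lra:general} on the de-duplicated block $\bG^{(i,j)}$, and using that the restriction of any global optimum $\bU^*\bV^*$ to a block is itself a feasible rank-$k$ solution for that block, yields
\[\sum_{i,j}\|\widetilde{\bU^{(i,j)}}\widetilde{\bV^{(i,j)}}-\bG^{(i,j)}\|_p^p\le(1+\eps)^3\min_{\bU\in\{0,1\}^{n\times k},\bV\in\{0,1\}^{k\times d}}\|\bU\bV-\widetilde{\bA}\|_p^p.\]

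It then remains to chain the inequalities exactly as in \lemref{lem:correctness:lra:gf2}: by optimality of $\bU'$ we have $\|\bU'\bV'-\bA\|_p^p\le\|\widetilde{\bM}\widetilde{\bU}\widetilde{\bV}-\bA\|_p^p$ for the row-matching indicator $\widetilde{\bM}$; the two coreset passes convert this into $(1+\eps)^2\|\widetilde{\bU}\widetilde{\bV}-\widetilde{\bA}\|_p^p$; the block analysis above supplies another $(1+\eps)^3$; and a final coreset application relates $\min_{\bU,\bV}\|\bU\bV-\widetilde{\bA}\|_p^p$ back to $\min_{\bU,\bV}\|\bU\bV-\bA\|_p^p$ at the cost of $(1+\eps)^2$. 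Absorbing all constant powers and rescaling $\eps$ gives the claimed $(1+\eps)$ factor. For the probability, the only genuinely random steps are the two coreset constructions of \thmref{thm:strong:coreset:lp}, each succeeding with probability $0.99$; the per-block guarantees are effectively derandomized because \algref{alg:distinct:lra:general} enumerates all row subsets and power-of-two weight patterns, and \lemref{lem:lzero:sample} guarantees that a good sampling matrix exists in this enumeration. A union bound then gives overall success probability at least $0.95$.

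The step I expect to be the main obstacle is justifying the per-block $(1+\eps)$-approximation of \algref{alg:distinct:lra:general}, i.e., the $L_p$ analogue of \lemref{lem:correctness:subroutine}. Unlike the Frobenius case, where the affine embedding came from combining approximate matrix multiplication with a subspace embedding through \thmref{thm:affine}, here I would lean directly on \lemref{lem:lzero:sample}: with $\bU^*$ in the role of $\bA$ and the block $\bG^{(i,j)}$ in the role of $\bB$, the $L_0$ sampling matrix $\bS$ preserves $\|\bU^*\bV-\bG^{(i,j)}\|_p^p$ up to a factor $(1\pm\eps)$ simultaneously for all $\bV\in\{0,1\}^{k\times d}$, so that the sketched minimizer is a $(1+\eps)$-approximate solution. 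The delicate points are (i) that $m=\O{k^{p+1}\eps^{-2}\log r}$ rows suffice only because the \emph{column} coreset has made the number of distinct columns $r$ small, which is precisely the reason the second coreset pass is introduced; and (ii) that the enumeration over guesses of $\bS\bU^*$ and $\bS\bG^{(i,j)}$ is finite and contains the guess corresponding to the true $\bU^*$ and a good sampling pattern, so the candidate selected by the algorithm performs at least as well as the affine-embedding bound for that correct guess. Once these two points are in place, the remaining manipulations are routine.
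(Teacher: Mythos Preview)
Your proposal is correct and follows essentially the same approach as the paper: two coreset passes to reduce the number of distinct rows and columns, the $\ell\times\ell$ row/column frequency tiling, a per-block $(1+\eps)$-approximation yielding $(1+\eps)^3$ after accounting for multiplicities, and a final chain of inequalities with rescaling of $\eps$. If anything, you are more explicit than the paper in two places the paper glosses over: you spell out that the per-block guarantee of \algref{alg:distinct:lra:general} comes from the affine-embedding \lemref{lem:lzero:sample} (the paper's proof cites \lemref{lem:l0:sample:matrix}, which is the $\mathbb{F}_2$ subroutine and appears to be a typo), and you correctly observe that the enumeration over $\bS\bU^*$ and $\bS\widetilde{\bA}$ derandomizes the sampling, so only the two coreset constructions contribute to the failure probability.
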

\begin{proof}
Let $\bM_1$ and $\bM_2$ be the sampling and rescaling matrices used to acquire $\widetilde{\bA}\in\mathbb{R}^{N\times D}$, so that by the optimality of $\bU'$, 
\[\|\bU'\bV'-\bA\|_p^p\le\|\bM_1\widetilde{\bU}\widetilde{\bV}\bM_2-\bA\|_p^p.\]

Observe that $\bV$ is a set of $k$ points in $\{0,1\}^d$. 
Thus, each row $\bU_i$ of $\bU$ induces one of at most $2^k$ possible points $\bU_i\bV\in\{0,1\}^d$. 
Hence, $\|\bU\bV-\bA\|_p^p$ is the objective value of a constrained $2^k$-clustering problem under the $L_p$ metric. 
Similarly, since $\bV^{(j)}$ is a set of $k$ points in $\{0,1\}^d$ for each $j\in[\ell]$, then each row $\bU_i$ of $\bU$ induces one of at most $2^k$ possible points $\bU_i\bV^{(j)}\in\{0,1\}^d$ for a fixed $j\in[\ell]$. 
Therefore, $\|\bU\bV'-\bA\|_p^p$ is the objective value of a constrained $2^k\ell$-clustering problem under the $L_p$ metric. 

By the choice of $t$ in \thmref{thm:strong:coreset:lp}, $\widetilde{\bA}$ is a strong coreset, and so
\[\|\bM_1\widetilde{\bU}\widetilde{\bV}\bM_2-\bA\|_F^2\le(1+\eps)\|\widetilde{\bU}\widetilde{\bV}-\widetilde{\bA}\|_F^2.\] 
We decompose the rows of $\widetilde{\bA}$ into groups $\bG^{(0)},\ldots,\bG^{(\ell)}$ for $\ell=\O{\frac{\log n}{\eps}}$. 
For each group $\bG^{(i)}$, we decompose the columns of $\bG^{(i)}$ into groups $\bG^{(i,0)},\ldots,\bG^{(i,\ell)}$ for $\ell=\O{\frac{\log n}{\eps}}$. 
Let $G_i$ be the indices in $[n]$ corresponding to the rows in $\bG^{(i)}$ and let $G_{i,j}$ be the indices in $[n]$ corresponding to the columns in $\bG^{(i,j)}$. 
Then
\[\|\widetilde{\bU}\widetilde{\bV}-\widetilde{\bA}\|_p^p=\sum_{i\in[\ell]}\sum_{a\in G_i}\sum_{j\in[\ell]}\sum_{b\in G_{i,j}}\left\lvert(\bU'\bV')_{a,b}-\widetilde{\bA}_{a,b}\right\rvert^p.\]
Since each row in $G_i$ is repeated a number of times in $[(1+\eps)^i,(1+\eps)^{i+1})$ and each column in $G_{i,j}$ is repeated a number of times in $[(1+\eps)^i,(1+\eps)^{i+1})$, then
\[\sum_{a\in G_i}\sum_{b\in G_{i,j}}\left\lvert(\bU'\bV')_{a,b}-\widetilde{\bA}_{a,b}\right\rvert^p\le(1+\eps)^3\min_{\bU\in\{0,1\}^{n\times k},\bV\in\{0,1\}^\times{k\times d}}\sum_{a\in G_i}\sum_{b\in G_{i,j}}\left\lvert(\bU\bV)_{a,b}-\widetilde{\bA}_{a,b}\right\rvert^p,\]
where the first factor of $(1+\eps)$ is from the $(1+\eps)$-approximation guarantee of $\bU^{(i)}$ and $\bV^{(i)}$ by \lemref{lem:l0:sample:matrix} and the second and third factors of $(1+\eps)$ is from the number of each row and each column in $\bG^{(i,j)}$ varying by at most $(1+\eps)$ factor. 
Therefore, 
\begin{align*}
\|\bU'\bV'-\bA\|_p^p&\le(1+\eps)\sum_{i\in[\ell]}\sum_{a\in G_i}\sum_{j\in[\ell]}\sum_{b\in G_{i,j}}\left\lvert(\bU'\bV')_{a,b}-\widetilde{\bA}_{a,b}\right\rvert^p\\
&\le(1+\eps)^4\min_{\bU\in\{0,1\}^{n\times k},\bV\in\{0,1\}^{k\times d}}\|\bU\bV-\widetilde{\bA}\|_p^p\\
\end{align*}
Let $\bU^*\in\{0,1\}^{n\times k}$ and $\bV^*\in\{0,1\}^{k\times d}$ be minimizers to the binary $L_p$ low-rank approximation problem, so that 
\[\|\bU^*\bV^*-\bA\|_p^p=\min_{\bU\in\{0,1\}^{n\times k},\bV\in\{0,1\}^{k\times d}}\|\bU\bV-\bA\|_p^p.\]
Let $\bM_3$ and $\bM_4$ be the indicator matrices that select rows and columns of $\bU^*\bV^*$ to match to each row of $\widetilde{\bA}$, so that by \lemref{lem:correctness:subroutine}, 
\[\min_{\bU\in\{0,1\}^{n\times k},\bV\in\{0,1\}^{k\times d}}\|\bU\bV-\widetilde{\bA}\|_p^p\le(1+\eps)\|\bM_3\bU^*\bV^*\bM_4-\widetilde{\bA}\|_p^p.\]
Then by the choice of $t$ in \thmref{thm:strong:coreset:lp} so that $\widetilde{\bA}$ is a strong coreset of $\bA$,
\[\|\bM_3\bU^*\bV^*\bM_4-\widetilde{\bA}\|_p^p\le(1+\eps)\|\bU^*\bV^*-\bA\|_p^p.\]
Therefore, 
\[\|\bU'\bV'-\bA\|_p^p\le(1+\eps)^6\|\bU^*\bV^*-\bA\|_p^p\]
and the desired claim then follows from rescaling $\eps$. 
\end{proof}

We now analyze the runtime of \algref{alg:lra:general}. 
\begin{lemma}
\lemlab{lem:runtime:lra:general}
For any constant $p\ge 1$, \algref{alg:lra:general} uses $2^{\poly(k/\eps)}\poly(n,d)$ runtime. 
\end{lemma}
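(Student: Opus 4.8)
The plan is to mirror the runtime accounting of \lemref{lem:runtime:subroutine}, identifying the $\ell^2$ invocations of the subroutine \algref{alg:distinct:lra:general} as the dominant cost and counting the guesses each invocation makes. First I would dispose of the cheap parts: the two coreset constructions run in $\poly(n,d,k)$ time by \thmref{thm:strong:coreset:lp} and produce $\widetilde{\bA}\in\{0,1\}^{N\times D}$ with $N,D=\poly(n)$ and at most $t=\O{\min(\eps^{-2}+\eps^{-p},k\eps^{-2})\cdot k\log n}$ distinct rows and columns. Forming the groups $\bG^{(i,j)}$ takes $\poly(n,d)$ time and yields $\ell^2=\O{\log^2 n/\eps^2}$ blocks, each with at most $t$ distinct rows and $r\le t$ distinct columns, so that $\log t,\log r=\O{\log(k/\eps)+\log\log n}$.

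Next I would bound a single call to \algref{alg:distinct:lra:general}, following the bookkeeping of \lemref{lem:runtime:subroutine}. Writing the $L_0$ sampling matrix as $\bS=\bD\bT$ (cf. the remark after \thmref{thm:lev:score:sample}), there are $\binom{t}{m}=2^{\O{m\log t}}$ choices for the sampled set $\bT$, at most $\O{\log N}^m=2^{\O{m\log\log N}}$ choices for the rescalings $\bD$ (weights being powers of two up to $\poly(N)$), and $(2\cdot\O{\log N})^{mk}=2^{\O{mk\log\log N}}$ choices for the binary reweighted matrix $\bS\bU^*$, where $m=\O{\frac{k^{p+1}}{\eps^2}\log r}$ by \lemref{lem:lzero:sample}. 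For each guess, the inner fits (\algref{alg:compute:v} and \algref{alg:compute:u}, now with $L_p$ loss) cost $\O{2^k(N+D)}$. Hence one call costs $2^{\O{m^2+m\log t+mk\log\log N}}\poly(N,D)$.

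Then I would substitute $\log\log N=\O{\log\log n}$ and the bounds above, so that for constant $p$ we have $m=\poly(k/\eps)\cdot(\log(k/\eps)+\log\log n)$ and every exponent is of the form $\poly(k/\eps)\cdot(\log(k/\eps)+\log\log n)^2$. The part $\poly(k/\eps)\cdot\log^2(k/\eps)=\poly(k/\eps)$ contributes the $2^{\poly(k/\eps)}$ factor, while the $\log\log n$ contributions are absorbed by the two-regime argument already used in \lemref{lem:runtime:subroutine}: for any coefficient $c=\poly(k/\eps)$ and any fixed power $q\le 3$, either $\log\log n\le c$, whence $c\log^q\log n\le c^{q+1}=\poly(k/\eps)$, or $\log\log n>c$, whence $c\log^q\log n<\log^{q+1}\log n\le\log n$ for large $n$; in either case $2^{c\log^q\log n}\le 2^{\poly(k/\eps)}\cdot n$. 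Thus each subroutine call runs in $2^{\poly(k/\eps)}\poly(n,d)$ time.

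Finally I would collect the remaining contributions: the $\ell^2=\poly(\log n/\eps)$ subroutine calls, the assembly of $\widetilde{\bU}$ and $\widetilde{\bV}$, the block-restricted fit in \algref{alg:compute:u:general} (which enumerates only $2^k\ell$ options per row rather than $2^{k'}$), and the final fit of $\bU'$ against $\bA$ with the same block structure; all are $2^{\poly(k/\eps)}\poly(n,d)$ and absorb the $\ell^2$ prefactor into $\poly(n,d)$, yielding the claimed $2^{\poly(k/\eps)}\poly(n,d)$ total. The main obstacle is the third step: because $\log r$ and $\log\log n$ enter $m$ and then appear \emph{quadratically} through the $m^2$, $m\log t$, and $mk\log\log N$ terms, one must verify that the surplus $\polylog\log n$ factors are charged to $\poly(n)$ rather than to the exponential term, which is exactly what the two-regime case analysis guarantees.
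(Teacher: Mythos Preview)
Your proposal is correct and follows essentially the same approach as the paper: bound the coreset construction by \thmref{thm:strong:coreset:lp}, then for each of the $\ell^2$ blocks count the number of guesses of $\bS\bU^*$ and $\bS\widetilde{\bA}$ made by \algref{alg:distinct:lra:general} and multiply by the per-guess cost of the inner fits. If anything your accounting is more careful than the paper's, which simply asserts $\binom{t}{m}^2$ guesses and claims this is $2^{\poly(k/\eps)}\poly(n,d)$ without explicitly working through the weight enumeration or the two-regime handling of the $\log\log n$ factors that you spell out.
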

\begin{proof}
By \thmref{thm:strong:coreset:lp}, we have that \algref{alg:lra:general} uses $2^{\O{k}}\cdot\poly(n,d)$ time to compute $\widetilde{\bA}\in\{0,1\}^{N\times D}$ with $N,D=\poly(n)$. 
We now consider the time to compute $\widetilde{\bU^{(i,j)}},\widetilde{\bV^{(i,j)}}$ for each $i,j\in[\ell]$ for $\ell=\O{\frac{\log n}{\eps}}$. 
For each $i,j$, we make guesses for $\bS\bU^*$ and $\bS\bA$ in 
Since $\bS\bU^*$ and $\bS\bA$ have $m=\O{\frac{k^{p+1}\log r}{\eps^2}}$ rows, then there are $\binom{t}{m}$ possible choices for $\bS\bU^*$ and $\binom{t}{m}$ choices for $\bS\bA$, where $t=\frac{2^k\log n}{\eps^p}$. 
Hence, there are $2^{\poly(k/\eps)}\poly(n,d)$ possible guesses for $\bS\bU^*$ and $\bS\bA$. 

For each guess of $\bS\bU^*$ and $\bS\bA$, \algref{alg:distinct:lra:general} iterates through the columns of $\widetilde{\bV^{(i,j)}}$, which uses $2^{\O{k}}\cdot\poly(n,d)$ time. 
Similarly, the computation of $\widetilde{\bU^{(i,j)}}$, $\bU'$, and $\bV'$ all take $2^{\O{k}}\cdot\poly(n,d)$ time. 
Therefore, the total runtime of \algref{alg:lra:general} is $2^{\poly(k/\eps)}\poly(n,d)$.
\end{proof}

By \lemref{lem:correctness:lra:general} and \lemref{lem:runtime:lra:general}, we thus have:
\begin{theorem}
For any constant $p\ge 1$, there exists an algorithm that uses $2^{\poly(k/\eps)}\poly(n,d)$ runtime and with probability at least $\frac{2}{3}$, outputs $\bU'\in\{0,1\}^{n\times k'}$ and $\bV'\in\{0,1\}^{k'\times d}$ such that
\[\|\bU'\bV'-\bA\|_p^p\le(1+\eps)\min_{\bU\in\{0,1\}^{n\times k},\bV\in\{0,1\}^{k\times d}}\|\bU\bV-\bA\|_p^p,\]
where $k'=\O{\frac{k\log^2 k}{\eps^2}}$. 
\end{theorem}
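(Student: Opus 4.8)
The plan is to follow the sketch-and-solve strategy of \secref{sec:binary:lra}, but to replace the leverage-score affine embedding (unavailable for $L_p$) by the $L_0$-sampling subspace embedding of \lemref{lem:lzero:sample}, and to neutralize the coreset weights by a two-dimensional frequency bucketing so that each subproblem is effectively unweighted. First I would reduce the instance in both dimensions: apply \thmref{thm:strong:coreset:lp} to the rows of $\bA$ and then to the columns of the resulting matrix, using that $\|\bU\bV-\bA\|_p^p$ is exactly the cost of a constrained $2^k$-clustering under the $L_p$ metric, since each row of $\bU\bV$ is one of at most $2^k$ points $\bU_i\bV$. Duplicating weighted points yields $\widetilde{\bA}\in\{0,1\}^{N\times D}$ with $N,D=\poly(n)$, few distinct rows and columns, and integer weights bounded by $\poly(n)$. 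Two further invocations of the coreset bound (once for the returned solution via the sampling/indicator matrices $\bM_1,\bM_2$, once for the optimum via $\bM_3,\bM_4$) cost only $(1+\eps)$ factors, so it suffices to solve weighted $L_p$ BMF on $\widetilde{\bA}$ up to $(1+\eps)$ and lift the answer back to the rows of $\bA$.

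Next I would handle the weights by bucketing. Group the distinct rows of $\widetilde{\bA}$ by multiplicity into $\bG^{(i)}$ whose frequencies lie in $[(1+\eps)^i,(1+\eps)^{i+1})$, and inside each row-group further bucket the columns by multiplicity into $\bG^{(i,j)}$; within a single part all row- and all column-multiplicities agree up to $(1+\eps)$, so solving the \emph{unweighted} problem on $\bG^{(i,j)}$ loses only a $(1+\eps)^2$ factor. On each part I invoke \lemref{lem:lzero:sample} with $m=\O{\frac{k^{p+1}}{\eps^2}\log r}$ rows (here $r$ is the number of distinct columns of the part), which makes $\|\bS\bU^*\bV-\bS\widetilde{\bA}\|_p^p$ a $(1\pm\eps)$ surrogate for $\|\bU^*\bV-\widetilde{\bA}\|_p^p$ simultaneously over all binary $\bV$. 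Because each part has only a small number of distinct rows, there are only $2^{\poly(k/\eps)}\poly(n,d)$ guesses for $\bS\bU^*$ and $\bS\widetilde{\bA}$ (bounding the number of power-of-two weight choices by the case split of \lemref{lem:runtime:subroutine}), and for each guess \algref{alg:compute:v} and \algref{alg:compute:u} with $L_p$ loss return the per-part minimizers in $2^{\O{k}}\poly(n,d)$ time.

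I would then stitch the parts together. Concatenate the per-part factors with the block structure of \algref{alg:compute:u:general}, so that a row assigned to part $(i,j)$ is represented only by its block of $k$ coordinates, producing $\widetilde{\bU}$ and $\widetilde{\bV}$; finally \algref{alg:compute:u} recovers $\bU'$ over the original rows of $\bA$ with $\bV'=\widetilde{\bV}$. Multiplying the $O(1)$ many $(1+\eps)$ factors accumulated across the coreset, the bucketing, and the per-part subspace embedding, and rescaling $\eps$, gives the approximation guarantee; the $2^{\poly(k/\eps)}\poly(n,d)$ guess count together with $\poly(n,d)$ per-guess work gives the stated runtime $2^{\poly(k/\eps)}\poly(n,d)$.

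The delicate point, and the one that pins down the final rank, is the number of occupied buckets $\ell$: the concatenation produces $k'=\ell^2 k$ columns, so the target $k'=\O{\frac{k\log^2 k}{\eps^2}}$ forces $\ell=\O{\frac{\log k}{\eps}}$. A bucket is nonempty only if some distinct point has multiplicity in its range, so $\ell$ is at most both the number of distinct points and the number of occupied multiplicity scales $\O{\frac{1}{\eps}\log(w_{\max}/w_{\min})}$. The main obstacle is to certify that only $\O{\frac{\log k}{\eps}}$ scales matter: the naive bound $w_{\max}/w_{\min}\le\poly(n)$ only yields $\ell=\O{\frac{\log n}{\eps}}$ and hence $k'=\O{\frac{k\log^2 n}{\eps^2}}$. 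To reach $\log k$ I would first compress to a strong $L_p$-clustering coreset whose number of distinct points is $\poly(k/\eps)$ independent of $n$, then round multiplicities to powers of $(1+\eps)$ and fold the scales whose total weight is negligible into the $(1\pm\eps)$ error budget, leaving $\O{\frac{\log k}{\eps}}$ surviving scales. Proving that this weight-spread reduction preserves the $(1\pm\eps)$ clustering guarantee for all sets of $2^k\ell$ centers, including the regime where the optimum is very small, is the crux; once it is in place the rank bound follows immediately from $k'=\ell^2 k$.
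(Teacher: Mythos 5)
Your construction is, step for step, the paper's own proof: the reduction to constrained $2^k$-clustering under the $L_p$ metric, the strong coreset of \thmref{thm:strong:coreset:lp} applied first to rows and then to columns with integer weights realized by duplication, the two-dimensional frequency bucketing into parts $\bG^{(i,j)}$ so that each part is unweighted up to $(1+\eps)^2$, the $L_0$-sampling embedding of \lemref{lem:lzero:sample} with $m=\O{\frac{k^{p+1}}{\eps^2}\log r}$ rows enabling the guess-and-solve step per part, the block-structured stitching of \algref{alg:compute:u:general}, the final recovery of $\bU'$ against $\bA$, and the accounting of $O(1)$ many $(1+\eps)$ factors followed by rescaling. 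All of that is correct and matches \lemref{lem:correctness:lra:general} and \lemref{lem:runtime:lra:general}.

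The one place you depart is the rank bound, and there you have diagnosed the situation accurately but your proposed fix does not go through. The paper's algorithm sets $\ell=\O{\frac{\log n}{\eps}}$ and concatenates $\ell^2$ blocks of $k$ columns, so its proof only establishes $k'=\ell^2 k=\O{\frac{k\log^2 n}{\eps^2}}$ --- which is exactly what the introduction claims for $L_p$ loss; the $\log^2 k$ in the theorem statement is evidently a typo for $\log^2 n$, and the paper makes no attempt to prove it. Your extra ``weight-spread reduction'' aimed at $\ell=\O{\frac{\log k}{\eps}}$ is the only genuinely new content, and it fails for the reason you yourself flag: discarding multiplicity scales of small total weight incurs an \emph{additive} cost for the rows in those scales, which cannot be charged to a $(1\pm\eps)$ relative factor when the optimum is tiny or zero. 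Nor does bounding the nonempty scales by the number of distinct coreset points rescue it, since \thmref{thm:strong:coreset:lp} itself gives $t=\O{\min(\eps^{-2}+\eps^{-p},k\eps^{-2})\cdot k\log n}$ (with $k$ replaced by $2^k\ell$ here), so the distinct-point count already carries a $\log n$ factor; a strong coreset for discrete $L_p$ metrics with point count independent of $n$ is not supplied by the paper and is not known to exist for this setting. So read as a proof of the bound $k'=\O{\frac{k\log^2 n}{\eps^2}}$, your argument is complete and coincides with the paper's; read as a proof of the literal $\log^2 k$ statement, it contains the gap you identified --- a gap the paper does not close either.
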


We note here that the $\poly(k/\eps)$ term in the exponent hides a $k^p$ factor, as we assume $p$ to be a (small) constant.

\section{Applications to Big Data Models}
\seclab{sec:big:data}
This section describes how we can generalize our techniques to big data models such as streaming or distributed models. 

\paragraph{Algorithmic modularization.} 
To adapt our algorithm to the streaming model or the distributed model, we first present a high-level modularization of our algorithm across all applications, i.e., Frobenius binary low-rank approximation, binary low-rank approximation over $\mathbb{F}_2$, and binary low-rank approximation with $L_p$ loss. 
We are given the input matrix $\bA\in\{0,1\}^{n\times d}$ in each of these settings. 
We first construct a weighted coreset $\widetilde{\bA}$ for $\bA$. 
We then perform a number of operations on $\widetilde{\bA}$ to obtain low-rank factors $\widetilde{\bU}$ and $\widetilde{\bV}$ for $\widetilde{\bA}$. 
Setting $\bV'=\widetilde{\bV}$, our algorithms finally use $\bA$ and $\bV'$ to construct the optimal factor $\bU'$ to match $\bV'$. 

\subsection{Streaming Model}
We can adapt our approach to the streaming model, where either the rows or columns of the input matrix arrive sequentially. 
For brevity, we shall only discuss the setting where the rows of the input matrix arrive sequentially; the setting where the columns of the input matrix arrive sequentially is symmetric. 

\paragraph{Formal streaming model definition.} 
We consider the two-pass row-arrival variant of the streaming model. 
In this setting, the rank parameter $k$ and the accuracy parameter $\eps>0$ are given to the algorithm before the data stream. 
The input matrix $\bA\in\{0,1\}^{n\times d}$ is then defined through the sequence of row-arrivals, $\bA_1,\ldots,\bA_n\in\{0,1\}^d$, so that the $i$-th row that arrives in the data stream is $\bA_i$. 
The algorithm passes over the data twice so that in the first pass, it can store some sketch $S$ that uses space sublinear in the input size, i.e., using $o(nd)$ space. 
After the first pass, the algorithm can perform some post-processing on $S$ and then must output factors $\bU\in\{0,1\}^{n\times k}$ and $\bV\in\{0,1\}^{k\times d}$ after being given another pass over the data, i.e., the rows $\bA_1,\ldots,\bA_n\in\{0,1\}^d$. 

\paragraph{Two-pass streaming algorithm.} 
To adapt our algorithm to the two-pass streaming model, recall the high-level modularization of our algorithm described at the beginning of \secref{sec:big:data}.  
The first step is constructing a coreset $\widetilde{\bA}$ of $\bA$. 
Whereas our previous coreset constructions were offline, we now require a streaming algorithm to produce the coreset $\widetilde{\bA}$. 
To that end, we use the following well-known merge-and-reduce paradigm for converting an offline coreset construction to a coreset construction in the streaming model.

\begin{theorem}
\thmlab{thm:merge:reduce}
Suppose there exists an algorithm that, with probability $1-\frac{1}{\poly(n)}$, produces an offline coreset construction that uses $f(n,\eps)$ space, suppressing dependencies on other input parameters, such as $k$ and $p$.  
Then there exists a one-pass streaming algorithm that, with probability $1-\frac{1}{\poly(n)}$, produces a coreset that uses $f(n,\eps')\cdot\O{\log n}$ space, where $\eps'=\frac{\eps}{\log n}$. 
\end{theorem}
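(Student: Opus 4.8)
The plan is to invoke the classical merge-and-reduce paradigm, which organizes the stream into a balanced binary tree of coreset computations and thereby converts the offline construction into a one-pass streaming algorithm. First I would read the rows in consecutive batches, treating each batch as a leaf of a binary tree and running the offline construction (with accuracy parameter $\eps'$) on it. Whenever two coresets have been produced at the same level of the tree, I would \emph{merge} them by taking their weighted union and then \emph{reduce} the result by rerunning the offline construction on the union, yielding a single coreset one level higher. Since at each level at most one coreset is held pending a sibling, and the tree has depth $\O{\log n}$, at most $\O{\log n}$ coresets are stored at any time; bounding the number of original points summarized by any node by $n$ and using monotonicity of $f$ in its first argument, each such coreset occupies at most $f(n,\eps')$ space, for a total of $f(n,\eps')\cdot\O{\log n}$.

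Correctness rests on two composition properties. The merge step is error-free in the sense that the union of an $\eps_1$-coreset of $X_1$ and an $\eps_2$-coreset of $X_2$ is a $\max(\eps_1,\eps_2)$-coreset of $X_1\cup X_2$, because the clustering cost is additive across the disjoint point sets $X_1$ and $X_2$. The reduce step compounds error multiplicatively: a $(1+\eps_2)$-coreset of a $(1+\eps_1)$-coreset of $X$ approximates the cost on $X$ to within a factor $(1+\eps_1)(1+\eps_2)$. Chaining these bounds along any root-to-leaf path of length $h=\O{\log n}$, the accumulated error is at most
\[
(1+\eps')^{h}=\left(1+\frac{\eps}{\log n}\right)^{\O{\log n}}=1+\O{\eps},
\]
so after rescaling $\eps$ by a constant factor the object stored at the root is a strong $\eps$-coreset of the entire stream.

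Finally, I would control the failure probability with a union bound. The tree has $\O{n}$ nodes and hence entails $\O{n}$ invocations of the offline construction, each failing with probability at most $\frac{1}{\poly(n)}$; taking the polynomial large enough to dominate the factor of $n$, the streaming construction succeeds with probability $1-\frac{1}{\poly(n)}$. The main obstacle is the error bookkeeping in the reduce step: one must verify that the multiplicative compounding across all $\O{\log n}$ levels collapses to $1+\O{\eps}$ precisely under the choice $\eps'=\frac{\eps}{\log n}$, and that the $\max$ taken in the merge step never inflates the per-level error beyond $\eps'$. Once this accounting is in place, the space and probability guarantees follow directly from the structure of the tree.
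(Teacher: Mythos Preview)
The paper does not actually prove this theorem; it is stated without proof as the ``well-known merge-and-reduce paradigm for converting an offline coreset construction to a coreset construction in the streaming model.'' Your proposal supplies precisely the standard merge-and-reduce argument the paper alludes to---binary tree over batches, union-then-recompute at each internal node, $\O{\log n}$ levels stored simultaneously, multiplicative error compounding controlled by $\eps'=\eps/\log n$, and a union bound over $\O{n}$ invocations---and it is correct.
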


In the first pass of the stream, we can use \thmref{thm:merge:reduce} to construct a strong coreset $C$ of $\bA$ with accuracy $\O{\eps}$. 
However, $C$ will have $2^{\poly(k)}\cdot\poly\left(\frac{1}{\eps},\log n\right)$ rows, and thus, we cannot immediately duplicate the rows of $C$ to form $\widetilde{\bA}$ because we cannot have $\log n$ dependencies in the number of rows of $\widetilde{\bA}$. 

After the first pass of the stream, we further apply the respective offline coreset construction, i.e., \thmref{thm:strong:coreset} or \thmref{thm:strong:coreset:lp} to $C$ to obtain a coreset $C'$ with accuracy $\eps$ and a number of rows independent of $\log n$. 
We then use $C'$ to form $\widetilde{\bA}$ and perform a number of operations on $\widetilde{\bA}$ to obtain low-rank factors $\widetilde{\bU}$ and $\widetilde{\bV}$ for $\widetilde{\bA}$. 
Setting $\bV'=\widetilde{\bV}$, we can finally use the second pass of the data stream over $\bA$, along with $\bV'$, to construct the optimal factor $\bU'$ to match $\bV'$. 
Thus the two-pass streaming algorithm uses $2^{\poly(k)}\cdot d\cdot\poly\left(\frac{1}{\eps},\log n\right)$ total space in the row-arrival model. 
For the column-arrival model, the two-pass streaming algorithm uses $2^{\poly(k)}\cdot n\cdot\poly\left(\frac{1}{\eps},\log d\right)$ total space. 

\subsection{Two-round distributed algorithm.} 
Our approach can also be adapted to the distributed model, where the rows or columns of the input matrix are partitioned across multiple users.  
For brevity, we again discuss the setting where the rows of the input matrix are partitioned; the setting where the columns of the input matrix are partitioned is symmetric. 

\paragraph{Formal distributed model definition.} 
We consider the two-round distributed model, where the rank parameter $k$ and the accuracy parameter $\eps>0$ are known in advance to all users. 
The input matrix $\bA\in\{0,1\}^{n\times d}$ is then defined arbitrarily through the union of rows,  $\bA_1,\ldots,\bA_n\in\{0,1\}^d$, where each row $\bA_i$ may be given to any of $\gamma$ users. 
An additional central coordinator sends and receives messages from the users. 
The protocol is then permitted to use two rounds of communication so that in the first round, the protocol can send $o(nd)$ bits of communication. 
The coordinator can process the communication to form some sketch $S$, perform some post-processing on $S$, and then request additional information from each user, possibly using $o(nd)$ communication to specify the information demanded from each user. 
After the users again use $o(nd)$ bits of communication in the second round of the protocol, the central coordinator must output factors $\bU\in\{0,1\}^{n\times k}$ and $\bV\in\{0,1\}^{k\times d}$. 

\paragraph{Two-round distributed algorithm.} 
To adapt our algorithm to the two-round distributed model, again recall the high-level modularization of our algorithm described at the beginning of \secref{sec:big:data}.  
The first step is constructing a coreset $\widetilde{\bA}$ of $\bA$. 
Whereas our previous coreset constructions were offline, we now require a distributed algorithm to produce the coreset $\widetilde{\bA}$. 
To that end, we request that each of the $t$ users send a coreset with accuracy $\O{\eps}$ of their respective rows. 
Note that each user can construct the coreset locally without requiring any communication since the coreset is only a summary of the rows held by the user. 
Thus the total communication in the first round is just the offline coreset size times the number of players, i.e., $\gamma\cdot 2^{\poly(k)}\cdot\poly\left(\frac{1}{\eps},\log n\right)$ rows. 

Given the union $C$ of the coresets sent by all users, the central coordinator then constructs a coreset $C'$ of $\bA$ with accuracy $\eps$, again using an offline coreset construction. 
The coordinator then uses $C'$ to form $\widetilde{\bA}$ and performs the required operations on $\widetilde{\bA}$ to obtain low-rank factors $\widetilde{\bU}$ and $\widetilde{\bV}$ for $\widetilde{\bA}$. 

The coordinator can then send $\bV'$ to all players, using $\bV'$ and their local subset rows of $\bA$ to construct $\bU'$ collectively. 
The users then send the rows of $\bU'$ corresponding to the rows of $\bA$ local to the user back to the central coordinator, who can then construct $\bU'$. 
Thus the second round of the protocol uses $\tO{nk+kd}\cdot\poly\left(\frac{1}{\eps}\right)$ bits of communication. 
Hence, the total communication of the protocol is $d\gamma\cdot 2^{\poly(k)}\cdot\poly\left(\frac{1}{\eps},\log n\right)+\tO{nk+kd}\cdot\poly\left(\frac{1}{\eps}\right)$ in the two-round row-partitioned distributed model. 
For the two-round column-partitioned distributed model, the total communication of the protocol is $n\gamma\cdot 2^{\poly(k)}\cdot\poly\left(\frac{1}{\eps},\log d\right)+\tO{nk+kd}\cdot\poly\left(\frac{1}{\eps}\right)$. 

\section{Experiments}
In this section, we aim to evaluate the feasibility of the algorithmic ideas of our paper against existing algorithms for binary matrix factorization from previous literature. 
The running time of our full algorithms for BMF is prohibitively expensive, even for small $k$, so our algorithm will be based on the idea of~\cite{KumarPRW19}, who only run their algorithms in part, obtaining weaker theoretical guarantees. Indeed, by simply performing $k$-means clustering, they obtained a simple algorithm that outperformed more sophisticated heuristics in practice.

We perform two main types of experiments, first comparing the algorithm presented in the next section against existing baselines and then showing the feasibility of using coresets in the BMF setting.

\paragraph{Baseline and algorithm.} We compare several algorithms for binary matrix factorization that have implementations available online, namely the algorithm by Zhang et al.~\cite{ZhangLDZ07}, which has been implemented in the \texttt{NIMFA} library~\cite{Zitnik12}, the message passing algorithm of Ravanbakhsh et al.~\cite{RavanbakhshPG16},
as well as our implementation of the algorithm used in the experiments of~\cite{KumarPRW19}. We refer to these algorithms as Zh, MP, and kBMF, respectively.
We choose the default parameters provided by the implementations. We chose the maximum number of rounds for the iterative methods so that the runtime does not exceed 20 seconds, as all methods besides~\cite{KumarPRW19} are iterative. However, in our experiments, the algorithms usually converged to a solution below the maximum number of rounds. We let every algorithm use the matrix operations over the preferred semiring, i.e. boolean, integer, or and-or matrix multiplication, in order to achieve the best approximation.
We additionally found a binary matrix factorization algorithm for sparse matrices based on subgradient descent and random sampling\footnote{\url{https://github.com/david-cortes/binmf}} that is not covered in the literature. 
This algorithm was excluded from our experiments as it did not produce binary factors in our experiments. 
Specifically, we found that it produces real-valued $\bU$ and $\bV$, and requires binarizing the product $\bU\bV$ after multiplication, therefore not guaranteeing that the binary matrix is of rank $k$.

Motivated by the idea of partially executing a more complicated algorithm with strong theoretical guarantees, we build upon the idea of finding a $k$-means clustering solution as a first approximation and mapping the Steiner points to their closest neighbors in $\bA$, giving us a matrix $\bV$ of $k$ binary points, and a matrix $\bU$ of assignments of the points of $\bA$ to their nearest neighbors. 
This solution restricts $\bU$ to have a single non-zero entry per row. 
Instead of outputting this $\bU$ as~\cite{KumarPRW19} did, we solve the minimization problem $\min_{\bU \in \{0, 1\}^{n \times k}} \|\bU \bV - \bA\|_F^2$ exactly at a cost of $2^k$ per row, which is affordable for small $k$. 
For a qualitative example of how this step improves the solution quality, see \figref{congress}. 
We call this algorithm kBMF+.

Using $k$-means as the first step in a binary matrix factorization algorithm is well-motivated by the theoretical and experimental results of~\cite{KumarPRW19}, but does not guarantee a $(1+\eps)$-approximation. 
However, as we do not run the full algorithm, we are not guaranteed a $(1+\eps)$-approximation either way, as unfortunately, guessing the optimal matrix $\bV$ is very time-consuming. 
We would first have to solve the sketched problem $\|\bS\widetilde{\bA} - \bS\bU\bV\|_F^2$ for all guesses of $\bS\bA$ and $\bS\bU$. 

We implement our algorithm and the one of~\cite{KumarPRW19} in Python 3.10 and numpy. 
For solving $k$-means, we use the implementation of Lloyd's algorithm with $k$-means++ seeding provided by the \texttt{scikit-learn} library~\cite{scikit-learn}. All experiments were performed on a Linux notebook with a 3.9 GHz 12th generation Intel Core i7 six-core processor with 32 gigabytes of RAM.

\begin{figure}
    \centering
    \includegraphics[height=9cm]{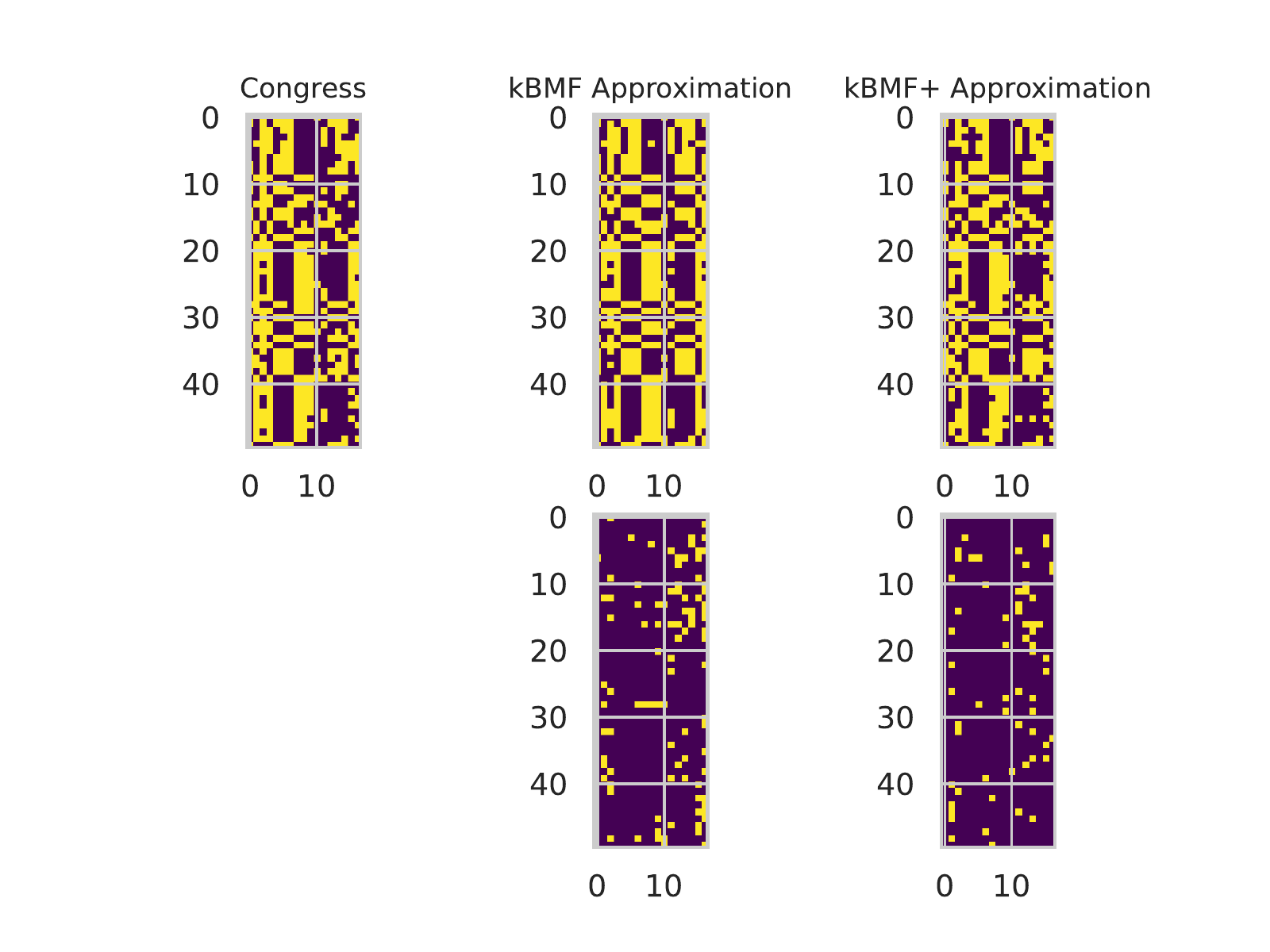}
    \caption{A demonstration of the improved approximation of our algorithm over the algorithm used in the experiments of~\cite{KumarPRW19}. In the first column, we show the first 50 rows of the congress data set, where purple indicates $0$ and yellow indicates $1$.
    The next columns show the approximation of~\cite{KumarPRW19}, and our algorithm's approximation, both with $k=10$. The second row indicates the entries in which the respective approximations differ from the original dataset in yellow. Our experiments found that the number of wrongly reconstructed entries almost halved from the kBMF to the kBMF+ algorithm on this dataset for $k=10$.}
    \figlab{congress}
\end{figure}

\paragraph{Datasets.} We use both real and synthetic data for our experiments. 
We choose two datasets from the UCI Machine Learning Repository~\cite{UCI17}, namely the voting record of the 98th Congress, consisting of 435 rows of 16 binary features representing each congressperson's vote on one of 16 bills, and the Thyroid dataset\footnote{\url{https://www.kaggle.com/datasets/emmanuelfwerr/thyroid-disease-data}}, of 9371 patient data comprising 31 features. 
We restricted ourselves to only binary features, leaving us with 21 columns. 
Finally, we use the ORL dataset of faces, which we binarize using a threshold of $0.33$, as in~\cite{KumarPRW19}.

For our synthetic data, we generate random matrices, where each entry is set to be $1$ independently with probability $p$, at two different sparsity levels of $p \in \{0.1, 0.5\}$. 
Additionally, we generate low-rank matrices by generating $\bU \in \{0, 1\}^{n \times k}$ and $\bV \in \{0,1\}^{k \times d}$ and multiplying them together in $\mathbb{F}_2$. 
We generate $\bU$ and $\bV$ at different sparsity levels of $0.5$ and $0.1$, for $k \in \{5, 10, 15\}$. 
Finally, we also use these matrices with added noise, where after multiplying, each bit is flipped with probability $p_e \in \{0.01, 0.001\}$.

We generate $25$ matrices of size $250 \times 50$ for each configuration. 
These classes are named, in order of introduction: full, lr, and noisy.

\paragraph{Limitations. } We opted to use only binary datasets, thus limiting the available datasets for our experiments. 
Because of this, our largest dataset's size is less than $10000$. 
Our algorithms are practical for these sizes and the parameters $k$ we have chosen. 
Investigating the feasibility of algorithms for binary matrix factorization for large datasets may be an interesting direction for future research.

\subsection{Comparing Algorithms for BMF}

\paragraph{Synthetic data.} For each algorithm, \tableref{synthetic} shows the mean Frobenius norm error (i.e. $\mathrm{err}_{\bA}(\bU, \bV) = \|\bU\bV - \bA\|_F$) across $10$ runs of each algorithm and the mean runtime in milliseconds for the synthetic datasets described above. 
For our choices of parameters, we find that all algorithms terminate in under a second, with Zhang's algorithm and BMF being the fastest and the message-passing algorithm generally being the slowest. 
This is, of course, also influenced by the fact that the algorithms' implementations use different technologies, which limits the conclusions we can draw from the data. 
We find that the kBMF+ algorithm slows down by a factor of $1.5$ for small $k \in \{2, 3, 5\}$, and $15$ when $k = 15$, compared to the kBMF algorithm.

This is offset by the improved error, where our algorithm kBMF+ generally achieves the best approximation for dense matrices, being able to sometimes find a perfect factorization, for example, in the case of a rank 5 matrix, when using $k \in \{10, 15\}$. 
Even when the perfect factorization is not found, we see that the Frobenius norm error is 2-10 times lower. 
On spare matrices, we find that Zhang's and the message-passing algorithms outperform kBMF+, yielding solutions that are about 2 times better in the worst case (matrix of rank 5, with sparsity 0.1 and $k=5$). 
The kBMF algorithm generally performs the worst across datasets, which is surprising considering the results of~\cite{KumarPRW19}. 
Another point of note is that Zhang's algorithm is tuned for sparse matrices, sometimes converging to factors that yield real-valued matrices. 
If so, we attempted to round the matrix as best we could. 


\begin{table*}
\centering
\scalebox{0.6}{
\begin{tabular}{ll|rrrr|rrrr}
\toprule
                & {} & \multicolumn{4}{l}{Error [Frobenius norm]} & \multicolumn{4}{l}{Time [ms]} \\
                & Alg &   kBMF &  kBMF+ & MP & Zh &   kBMF &   kBMF+ &  MP & Zh \\
Dataset & k &       &       &       &       &       &        &        &       \\
\midrule
Random & 2  &  75.8 &  72.3 &  \textbf{71.3} &  \textbf{71.3} &  11.2 &    8.6 &  280.7 &  11.6 \\
$p = 0.5$       & 3  &  74.3 &  69.9 &  69.4 &  \textbf{68.7} &  14.9 &   12.5 &  309.8 &  11.7 \\
                & 5  &  72.2 &  65.8 &  66.6 &  \textbf{64.9} &  10.9 &   11.5 &  347.7 &  13.3 \\
                & 10 &  68.7 &  \textbf{57.4} &  61.5 &  58.5 &  15.4 &   53.4 &  486.6 &  17.2 \\
                & 15 &  66.4 &  \textbf{50.4} &  57.9 &  53.7 &  16.2 &  272.1 &  667.3 &  21.7 \\
\midrule
Random & 2  &  36.0 &  35.0 &  \textbf{34.9} &  35.2 &  10.8 &   11.3 &  277.3 &   9.9 \\
$p=0.1$                & 3  &  35.9 &  \textbf{34.9} &  \textbf{34.9} &  35.0 &   7.5 &   13.9 &  302.1 &  10.6 \\
                & 5  &  35.6 &  34.6 &  35.5 &  \textbf{34.2} &  12.7 &   18.5 &  336.9 &  12.6 \\
                & 10 &  35.0 &  33.9 &  35.8 &  \textbf{31.7} &  17.0 &   64.5 &  459.6 &  15.9 \\
                & 15 &  34.3 &  33.0 &  38.5 &  \textbf{29.0} &  20.9 &  269.5 &  628.4 &  19.6 \\
\midrule
Low-Rank & 2  &  72.5 &  67.1 &  \textbf{66.0} &  67.8 &   4.1 &    7.9 &  274.9 &  11.9 \\
$r=5$           & 3  &  69.2 &  \textbf{60.0} &  62.3 &  64.0 &  12.8 &   12.0 &  301.5 &  13.5 \\
$p=0.5$         & 5  &  64.0 &  \textbf{26.9} &  55.2 &  56.7 &  10.4 &   11.9 &  339.8 &  15.4 \\
                & 10 &  52.9 &   \textbf{0.7} &  41.0 &  42.5 &  14.7 &   72.7 &  472.5 &  19.5 \\
                & 15 &  43.3 &   \textbf{0.0} &  32.8 &  31.1 &  18.0 &  296.0 &  658.0 &  23.8 \\
\midrule
Low-Rank & 2  &  20.5 &  20.4 &  16.5 &  \textbf{15.8} &   9.4 &    6.3 &  185.6 &   4.8 \\
$r=5$            & 3  &  17.0 &  16.6 &  13.1 &  \textbf{12.0} &   5.0 &    5.8 &  209.1 &  12.3 \\
$p=0.1$         & 5  &  11.1 &   8.4 &   \textbf{4.6} &   5.1 &   7.0 &    8.0 &  275.9 &  14.8 \\
                & 10 &   5.1 &   \textbf{0.0} &   0.7 &   2.3 &  19.3 &   75.0 &  460.5 &  18.1 \\
                & 15 &   1.5 &   \textbf{0.0} &   0.4 &   1.4 &  20.2 &  297.0 &  630.9 &  22.1 \\
\midrule
Low-Rank & 2  &  75.8 &  72.2 &  \textbf{71.1} &  71.7 &  13.4 &   15.5 &  281.2 &  11.5 \\
$r=10$                & 3  &  74.3 &  69.6 &  69.1 &  \textbf{69.0} &  15.8 &   20.0 &  308.0 &  11.7 \\
$p=0.5$                & 5  &  72.0 &  \textbf{64.7} &  66.1 &  64.8 &  20.9 &   19.7 &  345.5 &  13.6 \\
                & 10 &  68.2 &  \textbf{28.4} &  60.2 &  57.9 &  16.2 &   51.4 &  477.8 &  17.3 \\
                & 15 &  65.6 &   \textbf{0.8} &  56.0 &  52.9 &  19.3 &  245.2 &  659.6 &  21.3 \\
\midrule
Low-Rank & 2  &  30.8 &  30.5 &  \textbf{27.6} &  28.5 &  10.0 &   14.3 &  213.4 &   5.7 \\
$r=10$                & 3  &  28.5 &  28.1 &  \textbf{25.2} &  25.5 &  11.1 &   13.3 &  248.5 &  11.5 \\
$p=0.5$                & 5  &  24.7 &  23.2 &  20.4 &  \textbf{19.9} &  13.1 &   18.7 &  292.0 &  13.4 \\
                & 10 &  18.3 &  10.2 &   \textbf{7.6} &   8.8 &  16.4 &   76.2 &  434.6 &  16.9 \\
                & 15 &  15.2 &   \textbf{2.5} &   4.7 &   5.4 &  14.8 &  261.3 &  638.8 &  22.1 \\
\midrule
Low-Rank & 2  &  75.7 &  72.3 &  \textbf{71.2} &  71.3 &  14.5 &   18.6 &  277.6 &  11.3 \\
$r=15$                & 3  &  74.2 &  69.9 &  69.3 &  \textbf{68.7} &  12.7 &   11.1 &  306.5 &  11.7 \\
$p=0.5$               & 5  &  72.1 &  65.7 &  66.6 &  \textbf{64.8} &  15.0 &   19.0 &  339.7 &  13.0 \\
                & 10 &  68.6 &  \textbf{56.5} &  61.5 &  58.4 &  18.7 &   51.4 &  478.3 &  17.2 \\
                & 15 &  66.4 &  \textbf{29.2} &  57.7 &  53.6 &  13.0 &  239.9 &  652.8 &  21.1 \\
\midrule
Low-Rank & 2  &  38.7 &  38.2 &  \textbf{35.6} &  36.5 &  12.1 &   10.4 &  242.2 &   9.7 \\
$r=15$                & 3  &  37.1 &  36.2 &  \textbf{33.7} &  34.2 &  10.0 &   13.0 &  274.1 &  12.8 \\
$p=0.1$                & 5  &  33.7 &  32.2 &  29.8 &  \textbf{29.5} &  13.2 &   17.9 &  313.2 &  14.6 \\
                & 10 &  28.1 &  22.3 &  20.3 &  \textbf{19.8} &  20.2 &   56.3 &  457.3 &  17.9 \\
                & 15 &  25.3 &  14.2 &  \textbf{11.6} &  13.4 &  21.2 &  247.9 &  643.8 &  21.2 \\
\midrule
Noisy & 2  &  75.8 &  72.3 &  \textbf{71.2} &  71.6 &  13.9 &   12.8 &  290.4 &  11.3 \\
$r=10$           & 3  &  74.3 &  69.6 &  69.3 &  \textbf{69.0} &  13.8 &   15.6 &  309.3 &  11.6 \\
$p=0.5$         & 5  &  72.1 &  \textbf{64.7} &  66.2 &  65.0 &  17.6 &   23.8 &  345.8 &  13.6 \\
$p_{noise}=0.001$                & 10 &  68.2 &  \textbf{33.8} &  60.3 &  58.1 &  16.8 &   54.0 &  481.1 &  17.6 \\
                & 15 &  65.6 &   \textbf{4.8} &  56.2 &  53.2 &  18.4 &  247.1 &  661.8 &  21.6 \\
\midrule
Noisy & 2  &  32.5 &  32.1 &  \textbf{29.3} &  30.0 &   6.3 &    9.6 &  223.6 &   7.6 \\
$r=10$          & 3  &  30.0 &  29.5 &  \textbf{26.9} &  27.1 &   6.4 &   10.1 &  255.4 &  11.6 \\
$p=0.1$           & 5  &  26.2 &  24.6 &  22.0 &  \textbf{21.3} &   6.6 &    9.7 &  291.9 &  13.5 \\
$p_{noise}=0.001$ & 10 &  19.8 &  12.0 &   \textbf{9.3} &  10.4 &  16.4 &   67.4 &  441.2 &  18.2 \\
                & 15 &  16.7 &   \textbf{4.9} &   6.8 &   7.2 &  13.9 &  255.0 &  641.8 &  22.4 \\
\midrule
Noisy & 2  &  75.8 &  72.1 &  \textbf{71.0} &  71.7 &   9.7 &   11.4 &  276.1 &  11.4 \\
$r=10$          & 3  &  74.3 &  69.5 &  \textbf{69.0} &  69.1 &  12.1 &   13.3 &  302.4 &  12.0 \\
$p=0.5$                & 5  &  72.0 &  \textbf{64.7} &  66.0 &  64.8 &  12.4 &   12.5 &  338.9 &  13.4 \\
$p_{noise}=0.01$                & 10 &  68.3 &  \textbf{38.2} &  60.2 &  57.9 &  15.0 &   50.7 &  475.0 &  17.2 \\
                & 15 &  65.7 &  \textbf{16.7} &  56.1 &  52.8 &  18.0 &  254.0 &  672.9 &  21.3 \\
\midrule
Noisy & 2  &  33.3 &  33.0 &  \textbf{30.3} &  30.9 &   9.9 &   11.5 &  225.3 &   9.2 \\
$r=10$          & 3  &  31.3 &  30.8 &  28.2 &  \textbf{28.0} &  10.8 &   10.5 &  257.5 &  12.5 \\
$p=0.1$                & 5  &  27.8 &  26.2 &  23.6 &  \textbf{23.4} &   9.4 &   18.3 &  292.1 &  14.3 \\
$p_{noise}=0.01$                & 10 &  22.3 &  16.3 &  \textbf{14.0} &  15.1 &  21.0 &   58.5 &  448.5 &  17.4 \\
                & 15 &  19.9 &  12.5 &  12.5 &  \textbf{12.0} &  20.5 &  260.3 &  645.4 &  21.7 \\
\bottomrule
\end{tabular}}
\caption{The average running time and error for different Binary Matrix Factorization algorithms on synthetic datasets. The minimum Frobenius norm error is marked in bold.}\tablelab{synthetic}
\end{table*}


\paragraph{Real data.} As before, \tableref{real} shows the algorithms' average Frobenius norm error and average running time. 
We observe, that all algorithms are fairly close in Frobenius norm error, with the best and worst factorizations' error differing by about up to a factor of 3 across parameters and datasets. 
Zhang's algorithm performs best on the Congress dataset, while the message-passing algorithm performs best on the ORL and Thyroid datasets. 
The kBMF algorithm generally does worst, but the additional processing we do in kBMF+ can improve the solution considerably, putting it on par with the other heuristics. 
On the Congress dataset, kBMF+ is about 1.1-2 times worse than Zhang's, while on the ORL dataset, it is about 10-30\% worse than the message-passing algorithm. 
Finally, the Thyroid dataset's error is about 10-20\% worse than competing heuristics.

We note that on the Thyroid datasets, which has almost 10000 rows, Zhang's algorithm slows considerably, about 10 times slower than kBMF and even slower than kBMF+ for $k=15$. 
This suggests that for large matrices and small to moderate $k$, the kBMF+ algorithm may actually run faster than other heuristics while providing comparable results. 
The message-passing algorithm slows tremendously, being almost three orders of magnitude slower than kBMF, but we believe this could be improved with another implementation.

\begin{table*}
\centering
\scalebox{0.6}{
\begin{tabular}{ll|rrrr|rrrr}
\toprule
        & {} & \multicolumn{4}{l}{Error [Frobenius norm]} & \multicolumn{4}{l}{Time [ms]} \\
        & Alg & kBMF &  kBMF+ & MP & Zh &   kBMF &   kBMF+ &    MP &  Zh \\
Dataset & k &        &       &       &       &       &        &          &        \\
\midrule
Congress & 2  &   40.0 &  38.8 &  38.8 &  \textbf{36.4} &   2.0 &    3.3 &    280.7 &    6.9 \\
        & 3  &   38.4 &  36.6 &  35.9 &  \textbf{32.7} &   2.3 &    4.1 &    311.2 &   13.6 \\
        & 5  &   35.7 &  32.7 &  31.1 &  \textbf{27.7} &   4.6 &    5.2 &    332.9 &   16.2 \\
        & 10 &   32.7 &  23.9 &  22.5 &  \textbf{18.4} &   3.2 &   16.9 &    407.1 &   22.6 \\
        & 15 &   30.9 &  14.8 &  15.5 &   \textbf{9.6} &   7.4 &  246.7 &    480.5 &   27.5 \\
\midrule
ORL & 2  &   39.4 &  37.8 &  35.9 &  \textbf{33.5} &   2.0 &    2.9 &    203.7 &   11.6 \\
        & 3  &   35.7 &  34.6 &  32.2 &  \textbf{29.7} &   2.9 &    4.7 &    241.6 &   13.1 \\
        & 5  &   31.7 &  30.7 &  27.7 &  \textbf{25.6} &   3.8 &    5.8 &    289.4 &   15.4 \\
        & 10 &   26.4 &  25.7 &  21.6 &  \textbf{21.4} &   4.3 &   22.3 &    415.7 &   19.1 \\
        & 15 &   23.4 &  22.8 &  \textbf{17.8} &  19.7 &   6.1 &  318.0 &    575.5 &   22.2 \\
\midrule
Thyroid & 2  &  106.6 &  98.6 &  \textbf{90.5} &  91.6 &  12.6 &   14.2 &   7063.6 &   44.3 \\
        & 3  &   94.5 &  90.5 &  75.5 &  \textbf{73.9} &  14.4 &   18.7 &   7822.0 &   92.9 \\
        & 5  &   82.7 &  80.4 &  78.5 &  \textbf{61.8} &  31.8 &   25.2 &   8860.2 &  132.1 \\
        & 10 &   66.0 &  55.4 &  54.0 &  \textbf{52.9} &  28.9 &   59.6 &  12686.3 &  241.4 \\
        & 15 &   57.6 &  \textbf{38.9} &  39.2 &  46.7 &  26.7 &  313.4 &  16237.7 &  432.7 \\
\bottomrule
\end{tabular}}

\caption{The average running time and error for different Binary Matrix Factorization algorithms on real datasets, minimum frobenius norm error highlighted in bold.}\tablelab{real}
\end{table*}

\paragraph{Discussion.} In our experiments, we found that on dense synthetic data, the algorithm kBMF+ outperforms other algorithms for the BMF problem. 
Additionally, we found that is competitive for sparse synthetic data and real datasets. 
One inherent benefit of the kBMF and kBMF+ algorithms is that they are very easily adapted to different norms and matrix products, as the clustering step, nearest neighbor search, and enumeration steps are all easily adapted to the setting we want. 
A benefit is that the factors are guaranteed to be either 0 or 1, which is not true for Zhang's heuristic, which does not always converge. 
None of the existing heuristics consider minimization of $L_p$ norms, so we omitted experimental data for this setting, but we note here that the results are qualitatively similar, with our algorithm performing best on dense matrices, and the heuristics performing well on sparse data.

\subsection{Using Coresets with our Algorithm}
Motivated by our theoretical use of strong coresets for $k$-means clustering, we perform experiments to evaluate the increase in error using them. 
To this end, we run the BMF+ algorithm on either the entire dataset, a coreset constructed via importance sampling~\cite{bachem2017practical,BravermanFLSZ21}, or a lightweight coreset~\cite{BachemLK18}. 
Both of these algorithms were implemented in Python. 
The datasets in this experiment are a synthetic low-rank dataset with additional noise (size $5000 \times 50$, rank $5$ and $0.0005$ probability of flipping a bit), the congress, and thyroid datasets.

We construct coresets of size $rn$ for each $r \in \{0.001, 0.005, 0.01, 0.02, 0.05, 0.1, 0.2, \dots, 0.9\}$. 
We sample $10$ coresets at every size and use them when finding $\bV$ in our BMF+ algorithm. 
Theory suggests that the quality of the coreset depends only on $k$ and the dimension of the points $d$, which is why in \figref{coreset}, we observe a worse approximation for a given size of coreset for larger $k$. 
We find that the BMF+ algorithm performs just as well on lightweight coresets as the one utilizing the sensitivity sampling framework. This is expected in the binary setting, as the additive error in the weaker guarantee provided by lightweight coresets depends on the dataset's diameter. 
Thus, the faster, lightweight coreset construction appears superior in this setting.

We observe that using coreset increases the Frobenius norm error we observe by about 35\%, but curiously, on the low-rank dataset, the average error decreased after using coresets. 
This may be due to coreset constructions not sampling the noisy outliers that are not in the low-dimensional subspace spanned by the non-noisy low-rank matrix, letting the algorithm better reconstruct the original factors instead.

Our datasets are comparatively small, none exceeding $1000$ points, which is why, in combination with the fact that the coreset constructions are not optimized, we observe no speedup compared to the algorithm without coresets. 
However, even though constructing the coreset takes additional time, the running time between variants remained comparable. 
We expect to observe significant speedups for large datasets using an optimized implementation of the coreset algorithms. 
Using \emph{off the shelf} coresets provides a large advantage to this algorithm's feasibility compared to the iterative methods when handling large datasets.

\begin{figure}
    \centering
    \includegraphics[width=\textwidth]{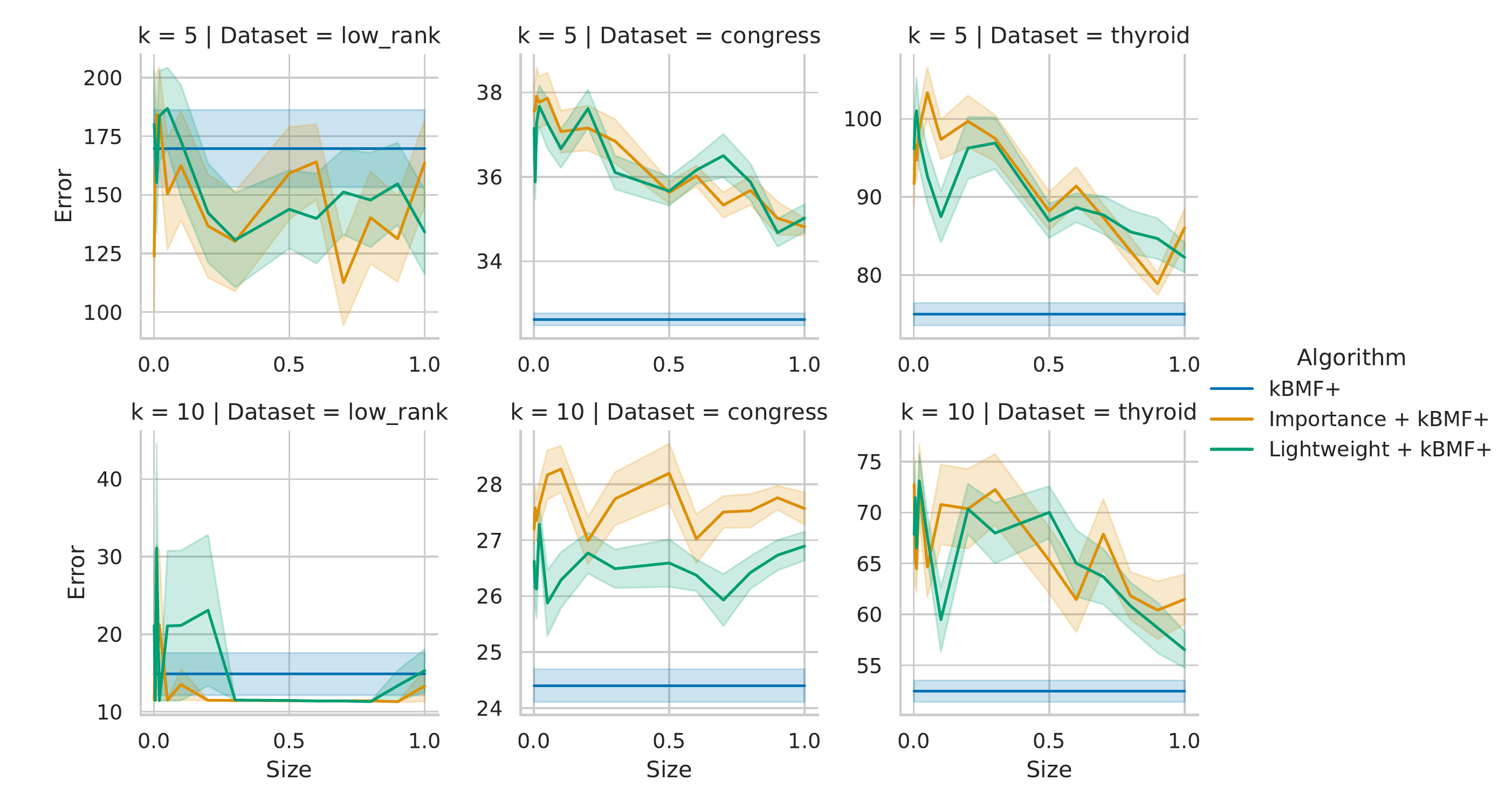}
    \caption{A plot of the effect of different relative coreset sizes on the results of our algorithm.}
    \figlab{coreset}
\end{figure}

\section{Conclusion} 
In this paper, we introduced the first $(1+\eps)$-approximation algorithms for binary matrix factorization with a singly exponential dependence on the low-rank factor $k$, which is often a small parameter. 
We consider optimization with respect to the Frobenius loss, finite fields, and $L_p$ loss. 
Our algorithms extend naturally to big data models and perform well in practice. 
Indeed, we conduct empirical evaluations demonstrating the practical effectiveness of our algorithms. 
For future research, we leave open the question for $(1+\eps)$-approximation algorithms for $L_p$ loss without bicriteria requirements. 

\def\shortbib{0}
\bibliographystyle{alpha}
\bibliography{references}

\end{document}